\newtheorem{thm}{Theorem}[section]
\newtheorem{prop}[thm]{Proposition}
\newtheorem*{prob*}{Problem}
\newtheorem*{thm*}{Theorem}
\theoremstyle{definition}
\newtheorem*{defn*}{Definition}
\newtheorem{rem}[thm]{Remark}
\newtheorem*{rem*}{Remark}
\numberwithin{equation}{section}
\newcommand{\C}{\mathbb C}
\newcommand{\R}{\mathbb R}
\DeclareMathOperator{\const}{const}
\DeclareMathOperator{\Res}{Res}
\DeclareMathOperator{\E}{\mathbb{E}}
\DeclareMathOperator{\diag}{diag}
\newcommand{\Tr}{\mathop{\mathrm{Tr}}}
\begin{document}
\title[On characteristic polynomials]
 {\bf{On characteristic polynomials for a generalized chiral random matrix ensemble with a source.}}

\author{Yan Fyodorov}
\address{King's  College  London,  Department  of  Mathematics,  London  WC2R  2LS,  United  Kingdom}\email{yan.fyodorov@kcl.ac.uk}

\author{Jacek Grela}
\address{LPTMS, CNRS, Univ. Paris-Sud, Universit\'{e} Paris-Saclay, 91405 Orsay, France}
\address{M. Smoluchowski Institute of Physics and Mark Kac Complex Systems Research Centre, Jagiellonian University,  PL--30348 Krak\'ow, Poland}
\email{jacekgrela@gmail.com}

\author{Eugene Strahov}
\address{Department of Mathematics, The Hebrew University of
Jerusalem, Givat Ram, Jerusalem
91904, Israel}
\email{strahov@math.huji.ac.il}

\begin{abstract}

We evaluate averages involving characteristic polynomials, inverse characteristic polynomials and ratios of characteristic polynomials for a $N\times N$ random matrix taken from the $L$-deformed chiral Gaussian Unitary Ensemble with an external source $\Omega$.  Connection with a recently studied statistics of bi-orthogonal eigenvectors in the complex Ginibre ensemble, see arXiv:1710.04699,  is briefly discussed as a motivation to study asymptotics of these objects in the  case of external source proportional to the identity matrix. In such a case we retrieve the kernel related to Bessel/Macdonald functions.
\end{abstract}

\maketitle
\section{Introduction}

Random characteristic polynomials are fascinating objects admitting rich mathematical structures, e.g. explicit expressions for expected moments, products and ratios \cite{BrezinHikami,MehtaNormand2001,FyoStra2003a,StraFyo2003,Baik2003,AkVer2003,AP2004:GGCHARDET,BumpGamburd2006,BS2006:CHARDETS,Conrey2003,Kras2007,Conrey2008,KieburgGuhr2010}, duality formulas and integrability \cite{Des2009:DUALITYBETA,OsKanz2010,Gre2015:DIFFMETHOD} as well as relations with the kernel of the underlying determinantal processes \cite{BleherKuijlaars,Nikeghbali2017}.
Study of random characteristic polynomials is also strongly motivated by applications in Number Theory \cite{KS2000:RMTRIEMANNZEORES,HKOC2001,Conrey2008,FHK2012} as well as in many areas of Theoretical Physics ranging from Quantum Chaos $\&$ Quantum Chaotic Scattering \cite{AS1995:QCHAOS,FyoKhor99,FyoNock2015}, Anderson localization \cite{Shcherbinaloc} and theory of glassy systems \cite{landscapes}  to  QCD \cite{FA2003,SVQCD,BNW2013:BURGERSQCD} and string theory \cite{Kim2014:CHARDETBRANES}. In a recent work \cite{FyodorovEigenvectors} yet another application along these lines was revealed by the first author of the present paper, a relation between the distribution of the overlap between left and right eigenvectors of complex non-selfadjoint matrices from complex Ginibre ensemble and averages of inverse characteristic polynomials in a particular deformed version of the chiral Hermitian ensemble. In this paper we investigate the mathematical structure of the latter ensemble in more detail.

   Our main object of interest here is an $L$-deformed chiral Gaussian Unitary Ensemble with an external source specified via the following joint probability density function defined on the space of complex matrices $X$ of size $N\times N$:
\begin{equation}\label{distribution}
\mathcal{P}^{(L)}(X)dX=c\cdot e^{-\Tr\left(XX^*\right)+\Tr\left(\Omega X\right)+\Tr\left(X^*\Omega^*\right)}
\left[\det\left(X^*X\right)\right]^L\prod\limits_{i,j=1}^NdX_{i,j}^RdX_{i,j}^I,
\end{equation}
where $\Omega$ is a (in general, complex-valued) fixed non-random source matrix of size $N\times N$, $X_{i,j}=X_{i,j}^R+iX_{i,j}^{I}$ denote the sums of the real and imaginary parts of the matrix entries $X_{i,j}$, and $c$ is an appropriate normalization constant.
For $L=0,\Omega=0$,  Eq.(\ref{distribution}) defines the standard chiral Gaussian Unitary Ensemble (or, in a different interpretation, the Wishart-Laguerre ensemble) for which the general expectation values of product and ratios of characteristic polynomials were considered in detail in \cite{FA2003,FS2002:CHIRALGUECORR}. For $L=0,\Omega\ne 0$ the ensemble coincides with one studied in Ref. \cite{DesrosiersForrester}.
Note also that the ensemble defined via (\ref{distribution}) (but with a Hermitian source $\Omega=\Omega^*$) is actually a special limiting case of a version of the QCD-motivated ensemble considered in \cite{QCDfinT}. The latter paper addressed general $n-$point correlation functions of eigenvalue densities for such type of a chiral ensemble, but their knowledge is not sufficient for our goals of studying the averages involving  (inverse) characteristic polynomials associated with Eq.(\ref{distribution}).

The outline of the paper is as follows. In Section \ref{SectionDensity} we derive the joint pdf of the singular values for the matrix model given by equation \eqref{distribution}. Sections \ref{ICPSection} to \ref{KCPSection} develop compact integral representations of:
\begin{itemize}
\item averaged inverse characteristic polynomials in Theorem \ref{InverseCPAveragingTheorem},
\item averaged characteristic polynomials in Theorem \ref{CPAveragingTheorem},
\item averaged ratio of characteristic polynomials in Theorem \ref{RatioCPAveragingTheorem},
\item the kernel in Theorem \ref{KernelAveragingTheorem},
\end{itemize}
starting from the formulae obtained in Desrosiers and Forrester (Proposition 2 of Ref. \cite{DesrosiersForrester}), and in Forrester and Liu (Proposition 2.5 of Ref. \cite{ForresterLiu}) (for the sake of completeness an alternative method is sketched in the Appendix B). Then, motivated by the applications to the problem considered in \cite{FyodorovEigenvectors}, see Section \ref{connection} for more detail, we perform asymptotic analysis of the aforementioned objects in the joint bulk-edge scaling regime for the special degenerate case where the source matrix $\Omega$ is proportional to the unit matrix, see Section \ref{AsymptoticsSection}. The main results of that Section are summarized by Propositions \ref{ICPAsymptotics}, \ref{CPAsymptotics} and \ref{RATIOAsymptotics}. Lastly, the appendices contain proofs of auxiliary Propositions.

\section{The joint density for singular values}\label{SectionDensity}
Let $X$ be a matrix taken from the ensemble defined by probability measure \eqref{distribution}. Let $\omega_1$, $\ldots$, $\omega_N$ denote the squared singular values of $\Omega$, and by $x_1,\ldots, x_N$ the squared singular values of $X$.
\begin{prop}Assume that the parameters $\omega_1$,$\ldots$, $\omega_N$ are pairwise distinct.
The joint probability density of $(x_1,\ldots,x_N)$ can be written as
\begin{equation}
\label{jpdf}
 P^{(L)}(x_1,\ldots,x_N)=\frac{1}{\mathcal{N}_L} \cdot\triangle\left(x\right)\det\left({}_0 F_1 (1;\omega_kx_j) \right)_{k,j=1}^N
 \prod\limits_{i=1}^Nx_i^L e^{-x_i},
\end{equation}
where $\mathcal{N}_L$ is a normalizing constant and $\triangle\left(x\right)$ is the Vandermonde determinant.
\end{prop}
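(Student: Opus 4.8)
The plan is to pass to the singular value decomposition of $X$ and integrate out the angular (unitary) degrees of freedom by means of a Harish--Chandra--Itzykson--Zuber type group integral. First I would write $X = U\Sigma V^*$ with $U,V\in U(N)$ and $\Sigma=\diag(\sqrt{x_1},\ldots,\sqrt{x_N})$, so that the $x_i$ are precisely the squared singular values of $X$. The Jacobian of this change of variables is classical: the flat measure $\prod_{i,j}dX_{i,j}^R dX_{i,j}^I$ becomes, up to a constant, $\triangle(x)^2\prod_i dx_i$ times the Haar measure $dU$ on $U(N)$ and the appropriate invariant measure on $V$, the torus of phases being the usual redundancy of the decomposition and affecting only the overall constant. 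The squared Vandermonde $\triangle(x)^2$ here is the standard Laguerre/Wishart Jacobian factor.

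Next I would simplify the three pieces of the exponent under this decomposition. The two $U,V$-invariant pieces are immediate: $\Tr(XX^*)=\Tr(\Sigma^2)=\sum_i x_i$ produces the Gaussian weight $\prod_i e^{-x_i}$, and $[\det(X^*X)]^L=[\det\Sigma^2]^L=\prod_i x_i^L$ produces the Laguerre-type factor. The only nontrivial dependence on $U,V$ sits in the source term, which collapses to $\Tr(\Omega X)+\Tr(X^*\Omega^*)=2\re\Tr(\Omega U\Sigma V^*)$. Writing $\Omega$ in its own singular value form fixes its squared singular values to be $\omega_1,\ldots,\omega_N$, and the remaining task is to evaluate the double integral $\int_{U(N)}\int_{U(N)}\exp\!\left(2\re\Tr(\Omega U\Sigma V^*)\right)dU\,dV$.

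The crucial step, and the main obstacle, is this two-unitary-group integral, which I would handle via the Berezin--Karpelevich formula (the bilinear, two-group analogue of the Harish--Chandra--Itzykson--Zuber integral). It evaluates such an integral to $\const\cdot\det\!\left[I_0(2\sqrt{\omega_k x_j})\right]_{k,j=1}^N/\bigl(\triangle(\omega)\triangle(x)\bigr)$, where $I_0$ denotes the modified Bessel function. The pairwise distinctness of the $\omega_k$ is exactly what guarantees $\triangle(\omega)\neq 0$ and makes this formula directly applicable. Using the identity $I_0(2\sqrt{z})={}_0F_1(1;z)$ turns the Bessel determinant into the stated ${}_0F_1$ determinant. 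Finally I would assemble the pieces: the Jacobian contributes $\triangle(x)^2$ and the group integral contributes $\det[{}_0F_1(1;\omega_k x_j)]/\bigl(\triangle(\omega)\triangle(x)\bigr)$, so their product leaves the single power $\triangle(x)\det[{}_0F_1(1;\omega_k x_j)]$, while the factor $\triangle(\omega)$ together with all $x$-independent constants is fixed and gets absorbed into the normalization $\mathcal{N}_L$, yielding \eqref{jpdf}. The delicate points to get right are the precise constant and phase conventions in the singular value Jacobian and the matching of the factors of $2$ in the Berezin--Karpelevich formula, but these only influence the overall constant $\mathcal{N}_L$ and not the functional form of the density.
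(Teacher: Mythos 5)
Your proposal is correct and follows essentially the same route as the paper: singular value decomposition with the $\triangle(x)^2$ Jacobian, reduction of the source term to a two-sided unitary group integral, evaluation via the Berezin--Karpelevich formula producing $\det[I_0(2\sqrt{\omega_k x_j})]/\bigl(\triangle(\omega)\triangle(x)\bigr)$, and the identity $I_0(2\sqrt{z})={}_0F_1(1;z)$. The only difference is that you spell out the bookkeeping of the invariant factors and the role of the distinctness assumption a bit more explicitly than the paper does.
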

\begin{proof}
The singular value decomposition for $X$ is
$$
X=U X_D V,
$$
where $U$ and $V$ are unitary matrices of size $N\times N$ and $X_D = \diag (\sqrt{x_1}, \sqrt{x_2}, \cdots , \sqrt{x_N})$. We know that $dX\sim\triangle\left(x\right)^2dUdV$.
In order to obtain the formula in the statement of the Proposition we use the Berezin-Karpelevich integral formula introduced in Refs.\cite{BerezinKarpelevich,Wettig}
\begin{equation}
\begin{split}
&\int\limits_{U(N)}d\mu(U)\int\limits_{U(N)}d\mu(V)e^{\Tr\left(\Omega U X_D V\right)+\Tr\left(\Omega^* V^* X_D U^*\right)} = \frac{\const}{\triangle\left(x\right)}
\frac{\det\left[I_0\left(2\sqrt{\omega_kx_j}\right)\right]_{k,j=1}^N}{\triangle\left(\omega\right)},
\end{split}
\end{equation}
and the well-known relation to the Bessel function of imaginary argument: \begin{equation}{}_0 F_1 (1;xy) = I_0\left(2\sqrt{xy}\right).\end{equation}
\end{proof}
\section{Averages of inverse characteristic polynomials}
\label{ICPSection}
From Section \ref{SectionDensity} we conclude that the joint probability density $P^{(L)}\left(x_1,\ldots,x_N\right)$ can be written as
\begin{equation}\label{Density1}
P^{(L)}\left(x_1,\ldots,x_N\right)=\frac{1}{\mathcal{N}_L}\det\left(\eta_i(x_j)\right)_{i,j=1}^N\det\left(\zeta_i(x_j)\right)_{i,j=1}^N,
\end{equation}
where
\begin{equation}\label{zetaeta}
\eta_i(x)=x^{i-1},\;\;\zeta_i(x)={}_0 F_1 \left (1;\omega_i x \right ) x^{L} e^{-x},
\end{equation}
and the normalization reads
\begin{equation}
\label{NormL}
\mathcal{N}_L = N! \det G,
\end{equation}
with elements of the matrix $G=\left(g_{i,j}\right)_{i,j=1}^N$ given by the integral $g_{i,j}=\int\limits_0^{\infty}\eta_i(x)\zeta_j(x)dx$. We will assume that the parameters $\omega_1$, $\ldots$, $\omega_N$
are pairwise distinct. The following Proposition was proved in Ref. \cite{ForresterLiu} (see Proposition 2.5 of Ref. \cite{ForresterLiu}) for the averages of inverse characteristic polynomials.
\begin{prop}\label{PropositionGeneralFormula} Consider the ensemble defined by equation (\ref{Density1}), and assume that $\eta_i(x)=x^{i-1}$.
Set $G=\left(g_{i,j}\right)_{i,j=1}^N$,
where the matrix entries are defined by
\begin{equation}
\label{gdef}
g_{i,j}=\int\limits_0^{\infty}\eta_i(x)\zeta_j(x)dx,\;\;\; 1\leq i,j\leq N.
\end{equation}
Let $C$ be the inverse of $G$, and let $c_{i,j}$ be the matrix entries of $C^T$, i.e. $C^T=\left(c_{i,j}\right)_{i,j=1}^N$.
Then the following formula holds true
\begin{equation}
\E\left[\prod\limits_{i=1}^N\frac{1}{y-x_i}\right]=\int\limits_{0}^{\infty}du\frac{1}{y-u}\sum\limits_{j=1}^Nc_{N,j}\zeta_j(u).
\end{equation}
where $\E(...)$ denotes averaging over jpdf given by Eq. \eqref{Density1}.
\end{prop}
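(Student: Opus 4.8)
The plan is to treat \eqref{Density1} as a biorthogonal ensemble and to reduce the average to a ratio of two $N\times N$ determinants by the Andr\'eief (Gram) integration formula, the only special structural input being the choice $\eta_i(x)=x^{i-1}$.

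First I would absorb the inverse characteristic polynomial into one of the two determinants. Because $\prod_{i=1}^N(y-x_i)^{-1}$ is symmetric, it may be distributed over the columns of the second determinant:
\[
\prod_{i=1}^N\frac{1}{y-x_i}\,\det\left(\zeta_i(x_j)\right)_{i,j=1}^N=\det\left(\frac{\zeta_i(x_j)}{y-x_j}\right)_{i,j=1}^N .
\]
Applying the Andr\'eief identity to the pair of determinants $\det(\eta_i(x_j))$ and $\det\bigl(\zeta_i(x_j)/(y-x_j)\bigr)$ collapses the $N$-fold integral to a single determinant, and dividing by $\mathcal{N}_L=N!\det G$ (cf. \eqref{NormL}) gives
\[
\E\left[\prod_{i=1}^N\frac{1}{y-x_i}\right]=\frac{\det\left(\widetilde g_{i,j}(y)\right)_{i,j=1}^N}{\det G},
\qquad
\widetilde g_{i,j}(y)=\int_0^\infty\frac{\eta_i(x)\,\zeta_j(x)}{y-x}\,dx .
\]

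The key step is to use $\eta_i(x)=x^{i-1}$ via the polynomial-division identity
\[
\frac{x^{i-1}}{y-x}=\frac{y^{i-1}}{y-x}-\sum_{m=1}^{i-1}y^{\,i-1-m}x^{m-1},
\]
which yields $\widetilde g_{i,j}(y)=y^{i-1}h_j(y)-\sum_{m=1}^{i-1}y^{\,i-1-m}g_{m,j}$ with $h_j(y)=\int_0^\infty\zeta_j(x)/(y-x)\,dx$. Thus the $i$-th row of $(\widetilde g_{i,j}(y))$ is $y^{i-1}$ times the row $(h_j(y))_j$ minus a combination of the earlier rows $g_{1,\cdot},\dots,g_{i-1,\cdot}$ of $G$. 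Carrying out the elementary operations (row $i$)$\mapsto$(row $i$)$-y\cdot$(row $i-1$) for $i=N,N-1,\dots,2$ leaves the first row equal to $(h_j(y))_j$, converts rows $2,\dots,N$ into $-g_{1,\cdot},\dots,-g_{N-1,\cdot}$, and does not change the determinant.

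Finally I would extract the factor $(-1)^{N-1}$ and expand the determinant along its first row. The minor multiplying $h_j(y)$ coincides with the minor of $G$ obtained by deleting the $N$-th row and the $j$-th column, so the cofactor expansion reproduces $\det G\cdot (G^{-1})_{j,N}=\det G\cdot c_{N,j}$, the signs cancelling, and one arrives at
\[
\E\left[\prod_{i=1}^N\frac{1}{y-x_i}\right]=\sum_{j=1}^N c_{N,j}\,h_j(y)
=\int_0^\infty\frac{du}{y-u}\sum_{j=1}^N c_{N,j}\,\zeta_j(u),
\]
which is the claimed formula; here $\sum_j c_{N,j}\zeta_j$ is exactly the function biorthogonal to $\eta_N$, i.e. $\int_0^\infty\eta_i(x)\sum_j c_{N,j}\zeta_j(x)\,dx=\delta_{iN}$. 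The step I expect to require the most care is not algebraic but analytic: the integrals $\int_0^\infty\zeta_j(x)/(y-x)\,dx$ are singular for $y$ on the positive axis, so one should take $y\notin[0,\infty)$ (or read them as principal values) and check that absorbing $1/(y-x_j)$ into the determinant before integrating is legitimate, all the relevant Andr\'eief integrals converging because of the factor $x^L e^{-x}$ in $\zeta_j$.
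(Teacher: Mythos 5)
Your argument is correct and is essentially the standard derivation behind this statement: the paper itself does not prove the Proposition but cites Forrester--Liu (Prop.\ 2.5) and Desrosiers--Forrester (Prop.\ 2), and your Andr\'eief-plus-row-reduction computation lands exactly on the equivalent bordered-determinant form that the paper records as equation \eqref{GeneralFormula1} before the cofactor expansion yields $\sum_j c_{N,j}\zeta_j$. The one caveat you already flag is the right one to flag: $y$ must be taken off $[0,\infty)$ (as it is in the paper's applications, where $y=-p$ with $p>0$) for the integrals $\int_0^\infty \zeta_j(x)/(y-x)\,dx$ to be unambiguous.
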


\subsection{Characterization of the matrix entries of $C$}
We will use the same method as in Ref. \cite{ForresterLiu}  to compute the averages of inverse characteristic polynomials
explicitly.
First, we wish to  obtain a formula characterizing the entries of $C$. For this purpose let us compute the entries of the matrix $G$ explicitly.
We have
\begin{equation}
g_{i,j}=\int\limits_0^{\infty}x^{i+L-1}e^{-x}{}_0 F_1 (1;\omega_j x) dx,\;\;\; i\leq i,j\leq N.
\end{equation}
In order to compute this integral we will use the well-known formula, see e.g. Ref. \cite{GR},
\begin{equation}
\label{Integral}
\int\limits_{0}^{\infty}x^{n+\frac{\nu}{2}}e^{-\alpha x}I_{\nu}\left(2\beta\sqrt{x}\right)dx
=n!\beta^{\nu}e^{\beta^2/\alpha}\alpha^{-n-\nu-1}L_n^{\nu}\left(-\frac{\beta^2}{\alpha}\right),
\end{equation}
where $L_n^{\nu}(x)$ are generalized Laguerre polynomials, and find
\begin{equation}
\label{gij}
g_{i,j}=\left(i+L-1\right)!e^{\omega_j}L_{i+L-1}\left(-\omega_j\right),\;\;\; 1\leq i,j\leq N.
\end{equation}
Taking into account the definition of the matrix entries $c_{i,j}$ (recall that
$c_{i,j}$ are the matrix entries of $C^T$, where $C=G^{-1}$, so $\sum_{j=1}^Nc_{j,i}g_{j,k}=\delta_{i,k}$), we obtain
\begin{equation}\label{characterization1}
\sum\limits_{j=1}^N\left(j+L-1\right)!e^{\omega_k}L_{j+L-1}\left(-\omega_k\right)c_{j,i}=\delta_{i,k},
\end{equation}
where $1\leq i,k\leq N$, and $L\in\left\{0,1,\ldots\right\}$.
\subsection{The case $L=0$}
When $L=0$, the following equation between two polynomials in $u$ holds true
\begin{equation}\label{characterization2}
\sum\limits_{j=1}^N\left(j-1\right)!L_{j-1}\left(u\right)c_{j,i}=e^{-\omega_i}\underset{\tau\neq i}{\prod\limits_{\tau=1}^N}
\frac{-u-\omega_{\tau}}{\omega_i-\omega_{\tau}}
\end{equation}
Indeed, the expressions in the left hand side and in the righthand side of equation (\ref{characterization2}) are both polynomials in $u$ of
degree $N-1$. In addition, we note that
$$
\underset{\tau\neq i}{\prod\limits_{\tau=1}^N}
\frac{\omega_k-\omega_{\tau}}{\omega_i-\omega_{\tau}}=\delta_{i,k}.
$$
Therefore, equation (\ref{characterization1}) implies that equation (\ref{characterization2}) is satisfied at
points $\omega_1$, $\ldots$, $\omega_N$.

We know that the leading term of the Laguerre polynomial is
$$
n!L_{n}^{\nu}(x)=\left(-x\right)^n+\ldots
$$
Therefore, taking $u\rightarrow\infty$ in equation (\ref{characterization2}) we should have
\begin{equation}
c_{N,i}=\frac{e^{-\omega_i}}{\underset{\tau\neq i}{\prod\limits_{\tau=1}^N}\left(\omega_i-\omega_{\tau}\right)},\;\;\; 1\leq i\leq N.
\end{equation}
\begin{rem} Taking into account the explicit expression for the matrix entries $g_{i,j}$  and formula above we see that
the condition $\sum_{k=1}^Nc_{N,k}g_{i,k}=\delta_{N,i}$ is equivalent to
\begin{equation}
\sum\limits_{k=1}^N\frac{1}{\underset{\tau\neq k}{\prod\limits_{\tau=1}^N}\left(\omega_k-\omega_{\tau}\right)}
(i-1)!L_{i-1}\left(-\omega_k\right)=\delta_{N,i},\;\;\; 1\leq i\leq N.
\end{equation}
The equation just written above follows from the formulae for Lagrange interpolation with the monic Laguerre polynomials.
\end{rem}
Now we are ready to apply Proposition \ref{PropositionGeneralFormula}. We have
\begin{equation}
\begin{split}
&\E\left[\prod\limits_{i=1}^N\frac{1}{y-x_i}\right]_{L=0} = \int\limits_{0}^{\infty}du\frac{1}{y-u}\sum\limits_{j=1}^N
\frac{e^{-\omega_j}}{\underset{\tau\neq j}{\prod\limits_{\tau=1}^N}\left(\omega_j-\omega_{\tau}\right)}
{}_0 F_1 (1;\omega_j u) e^{-u}
\end{split}
\end{equation}
or
\begin{equation}\label{FinalFormulaL0}
\begin{split}
\E\left[\prod\limits_{i=1}^N\frac{1}{y-x_i}\right]_{L=0}
=\frac{1}{2\pi i}
\int\limits_{0}^{\infty}du\frac{e^{-u}}{y-u}
\int\limits_{C}\frac{e^{-v}{}_0 F_1 \left (1;vu \right )dv}{\prod_{k=1}^N\left(v-\omega_k\right)},
\end{split}
\end{equation}
where the counterclockwise contour $C$ encircles $\omega_1$, $\ldots$, $\omega_N$.
\begin{rem}
\label{FyodorovRemark}
Assume that $\Omega=z \mathbf{1}_N$, $z\in\C$. Then $\Omega^*\Omega=|z|^2 \mathbf{1}_N$, and $\omega_1=\ldots=\omega_N=|z|^2$.
In this degenerate case (denoted with a subscript "$\text{deg}$") we obtain the formula
\begin{equation}\label{FinalFormulaL0FyodorovCase}
\begin{split}
&\E\left[\prod\limits_{i=1}^N\frac{1}{y-x_i}\right]_{\substack{\text{deg}\\L=0}}=\frac{1}{2\pi i}
\int\limits_{0}^{\infty}du\frac{e^{-u}}{y-u}
\int\limits_{C}\frac{e^{-v}{}_0 F_1 \left (1;vu \right ) dv}{\left(v-|z|^2\right)^N}.
\end{split}
\end{equation}
It is not hard to check by direct calculations that this formula holds true at $N=1$.
\end{rem}
We conclude that equation (\ref{FinalFormulaL0}) solves the problem of finding the mean inverse characteristic polynomial for the case $L=0$ for a general source matrix,
and equation (\ref{FinalFormulaL0FyodorovCase}) solves the corresponding degenerate problem considered in \cite{FyodorovEigenvectors}.

\subsection{The case of a general $L$}
In this Section we prove Theorem \ref{InverseCPAveragingTheorem} which gives an explicit expression for the averages of inverse characteristic polynomial in the case of a general $L$.
\begin{thm}\label{InverseCPAveragingTheorem} Consider the probability distribution on the space of complex matrices of size $N\times N$ defined by equation (\ref{distribution}).\\
\textbf{(A)} Let $\omega_1$, $\ldots$, $\omega_N$ be the squared singular values of $\Omega$, and assume that $L\geq 1$. We have
\begin{equation}
\label{InverseCPAveragingForm}
\begin{split}
&\E\left[\prod\limits_{i=1}^N\frac{1}{y-x_i}\right] = \frac{1}{\tilde{\mathcal{N}}_L} \frac{1}{2\pi i} \int\limits_{0}^{\infty}du\frac{e^{-u}u^L}{y-u}
\int\limits_{C}\frac{e^{-v}{}_0 F_1 \left (1;vu \right )dv}{\prod\limits_{k=1}^N\left(v-\omega_k\right)}
\int\limits_0^{\infty}...\int\limits_0^{\infty}\frac{\prod\limits_{i=1}^L \prod\limits_{j=1}^N\left(t_i+\omega_j\right)}{\prod\limits_{i=1}^L(t_i+v)}
\triangle^2\left(t\right)\prod_{i=1}^L e^{-t_i}dt_i,
\end{split}
\end{equation}
where the counter-clockwise contour $C$ encircles $\omega_1$, $\ldots$, $\omega_N$. The normalization constant is equal to
\begin{equation}
\label{tildeNormL}
\begin{split}
\tilde{\mathcal{N}}_L = {\int\limits_0^{\infty}...\int\limits_0^{\infty}\prod_{i=1}^L\prod_{j=1}^N\left(t_i+\omega_j\right)
\triangle^2(t) \prod_{i=1}^L e^{-t_i}dt_i}
\end{split}
\end{equation}
and is related to the normalization $\mathcal{N}_L$ of the density \eqref{jpdf} as
\begin{align}
\label{NormLRelation}
\mathcal{N}_L = \frac{(-1)^{N(N-1)} N! \triangle(\omega) \prod\limits_{i=1}^N e^{\omega_i} }{L!\prod\limits_{j=1}^{L-1} (j)!^2} \tilde{\mathcal{N}}_L.
\end{align}
\textbf{(B)} Assume that $\Omega=z \mathbf{1}_N$, $z\in\C$. Then $\Omega^*\Omega=|z|^2 \mathbf{1}_N$, and $\omega_1=\ldots=\omega_N=|z|^2$.
In this case we obtain the formula
\begin{equation}
\label{ICPDegForm}
\begin{split}
&\E\left[\prod\limits_{i=1}^N\frac{1}{y-x_i}\right]_{\text{deg}} = \frac{1}{\tilde{\mathcal{N}}_L} \frac{1}{2\pi i} \int\limits_{0}^{\infty}du\frac{e^{-u}u^L}{y-u}
\int\limits_{C}\frac{e^{-v}{}_0 F_1 \left (1;vu \right )dv}{\left(v-|z|^2\right)^N}
\int\limits_0^{\infty}...\int\limits_0^{\infty}\frac{\prod\limits_{i=1}^L \left(t_i+|z|^2\right)^N}{\prod\limits_{i=1}^L(t_i+v)}
\triangle^2\left(t\right)\prod_{i=1}^L e^{-t_i}dt_i,
\end{split}
\end{equation}
the counter-clockwise contour $C$ encircles $|z|^2$ and the corresponding normalization is $\tilde{\mathcal{N}}_L = \int_0^{\infty}...\int_0^{\infty}\prod_{i=1}^L \left(t_i+|z|^2\right)^N
\triangle^2(t) \prod_{i=1}^L e^{-t_i}dt_i$.
\end{thm}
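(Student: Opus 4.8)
The plan is to apply Proposition \ref{PropositionGeneralFormula}, which reduces the whole computation to determining the single row $c_{N,j}$ of $C^{T}=G^{-1}$ and then integrating it against $\zeta_j(u)=u^{L}e^{-u}\,{}_0F_1(1;\omega_j u)$ over $u$. From \eqref{gij} the matrix factorizes as $g_{i,j}=(i+L-1)!\,e^{\omega_j}L_{i+L-1}(-\omega_j)$, so by Cramer's rule $c_{N,j}=(G^{-1})_{j,N}=(-1)^{N+j}M_{N,j}/\det G$, where $M_{N,j}$ is the minor of $G$ with row $N$ and column $j$ deleted. Pulling the row factors $(i+L-1)!$ and the column factors $e^{\omega_k}$ out of both $\det G$ and $M_{N,j}$, both reduce to determinants of the Laguerre polynomials $L_{i+L-1}(-\omega_k)$, equivalently (via $L_n(-\omega)=\tfrac{(-1)^n}{n!}\pi_n(-\omega)$ with $\pi_n$ the monic Laguerre polynomial) to determinants of the form $\det[\pi_{i+L-1}(-\omega_k)]$.

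The key tool I would use is the classical identity (see e.g. \cite{FyoStra2003a,BrezinHikami}) expressing an average of a product of $k$ characteristic polynomials over the Laguerre Unitary Ensemble of size $L$ (weight $e^{-t}$ on $[0,\infty)$, monic orthogonal polynomials $\pi_n$) as a ratio of such a determinant to a Vandermonde,
\[
\frac{1}{Z_L}\int_0^\infty\cdots\int_0^\infty \prod_{m=1}^{L}\prod_{a=1}^{k}(t_m+\mu_a)\,\triangle^2(t)\prod_{m=1}^{L}e^{-t_m}dt_m
=\frac{(-1)^{Lk}\det\big[\pi_{L+b-1}(-\mu_a)\big]_{a,b=1}^{k}}{\triangle(-\mu)},
\]
with $Z_L=\int\triangle^2(t)\prod e^{-t_m}dt_m=L!\prod_{j=1}^{L-1}(j!)^2$. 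Reading this backwards with $\{\mu_a\}=\{\omega_1,\dots,\omega_N\}$ turns $\det[\pi_{i+L-1}(-\omega_k)]_{i,k=1}^N$ into the $L$-fold integral $\tilde{\mathcal N}_L$ of \eqref{tildeNormL}, while reading it with $\{\mu_a\}=\{\omega_k:k\ne j\}$ turns the minor's determinant into $\int\prod_m\prod_{k\ne j}(t_m+\omega_k)\,\triangle^2(t)\prod e^{-t_m}dt_m$. Substituting the first into $\mathcal N_L=N!\det G$ (from \eqref{NormL}) and collecting factorials, signs, and the Vandermonde factor $\triangle(-\omega)=(-1)^{N(N-1)/2}\triangle(\omega)$ yields exactly the normalization relation \eqref{NormLRelation}; substituting the second gives
\[
c_{N,j}=\frac{1}{\tilde{\mathcal N}_L}\,\frac{e^{-\omega_j}}{\prod_{\tau\ne j}(\omega_j-\omega_\tau)}\int_0^\infty\cdots\int_0^\infty\prod_{m=1}^L\prod_{k\ne j}(t_m+\omega_k)\,\triangle^2(t)\prod_{m=1}^L e^{-t_m}dt_m .
\]

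It then remains to insert this into Proposition \ref{PropositionGeneralFormula} and recognize the sum over $j$ as a contour integral. The factors $e^{\omega_j}$ hidden in $g$ and $e^{-\omega_j}$ in $c_{N,j}$ have already cancelled, and what survives is a Lagrange-interpolation sum $\sum_j (\cdots)/\prod_{\tau\ne j}(\omega_j-\omega_\tau)$, which equals the sum of residues at $v=\omega_1,\dots,\omega_N$ of $\frac{e^{-v}\,{}_0F_1(1;vu)}{\prod_k(v-\omega_k)}$ multiplied by the $v$-dependent $t$-integral $\int\frac{\prod_m\prod_k(t_m+\omega_k)}{\prod_m(t_m+v)}\triangle^2(t)\prod e^{-t_m}dt_m$. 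Indeed the residue at $v=\omega_j$ replaces $\prod_m(t_m+v)$ by $\prod_m(t_m+\omega_j)$, cancelling the $k=j$ factor of the numerator and restoring precisely $\prod_m\prod_{k\ne j}(t_m+\omega_k)$; since ${}_0F_1(1;vu)e^{-v}$ is entire and the $t$-integral is analytic in $v$ off the negative axis, closing $C$ counterclockwise around all the $\omega_k$ converts the residue sum into $\frac{1}{2\pi i}\int_C$, and carrying out the $u$-integral against $u^Le^{-u}/(y-u)$ produces \eqref{InverseCPAveragingForm}.

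The hard part will be the sign-and-constant bookkeeping in the two applications of the characteristic-polynomial identity: one must track $(-1)^{i+L-1}$ per row, the factors $(-1)^{LN}$ and $(-1)^{L(N-1)}$ from the two averages, the $(-1)^{N+j}$ from Cramer's rule, and the two Vandermonde sign changes $\triangle(-\omega)=(-1)^{N(N-1)/2}\triangle(\omega)$ and $\triangle(\{-\omega_k\}_{k\ne j})/\triangle(-\omega)$. The mutual consistency of all these signs is exactly what is confirmed by the match with the stated \eqref{NormLRelation} (note $(-1)^{N(N-1)}=1$), which I would use as a cross-check. Part \textbf{(B)} then needs no new argument: both sides of \eqref{InverseCPAveragingForm} are symmetric and analytic in $\omega_1,\dots,\omega_N$, so one passes to the confluent limit $\omega_1,\dots,\omega_N\to|z|^2$, under which $\prod_k(v-\omega_k)\to(v-|z|^2)^N$, $\prod_j(t_m+\omega_j)\to(t_m+|z|^2)^N$, and $C$ collapses to a contour around the single order-$N$ pole at $|z|^2$, giving \eqref{ICPDegForm}.
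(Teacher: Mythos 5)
Your proposal is correct and follows essentially the same route as the paper: the Cramer's-rule computation of $c_{N,j}$ is just the Laplace expansion of the paper's ratio-of-determinants formula, the ``characteristic-polynomial average over the $L$-fold Laguerre ensemble'' read backwards is exactly the paper's Proposition \ref{PropositionBrezinHikami} (Brezin--Hikami), and the residue-sum-to-contour-integral step, the confluent limit for part (B), and the normalization bookkeeping via the Mehta integral all match the paper's argument. No substantive differences.
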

\begin{proof}It is convenient to assume that   the parameters $\omega_1$, $\ldots$, $\omega_N$ are pairwise distinct.
We begin from the observation that Proposition \ref{PropositionGeneralFormula} can be stated in an equivalent form.
\begin{prop}
With the same notation as in the statement of Proposition \ref{PropositionGeneralFormula} the formula for
the average of an inverse characteristic polynomial can be written as ratio of determinants, namely
\begin{equation}\label{GeneralFormula1}
\E\left[\prod\limits_{i=1}^N\frac{1}{y-x_i}\right]=\frac{1}{\det G}
\left|\begin{array}{cccc}
        g_{1,1} & g_{1,2} & \ldots & g_{1,N} \\
        \vdots & \vdots & \ddots & \vdots \\
        g_{N-1,1} & g_{N-1,2} & \ldots & g_{N-1,N} \\
        \int\limits_0^{\infty}du\frac{\zeta_1(u)}{y-u} & \int\limits_0^{\infty}du\frac{\zeta_2(u)}{y-u} & \ldots &  \int\limits_0^{\infty}du\frac{\zeta_N(u)}{y-u}
      \end{array}
\right|.
\end{equation}
\end{prop}
\begin{proof}
See Desrosiers and Forrester \cite{DesrosiersForrester} (Proposition 2), Forrester and Liu \cite{ForresterLiu} (Proposition 2.5).
\end{proof}
By Laplace expansion, equation (\ref{GeneralFormula1}) can be rewritten as
\begin{equation}\label{GeneralFormula2}
\begin{split}
&\E\left[\prod\limits_{i=1}^N\frac{1}{y-x_i}\right]\\
&=\int\limits_0^{\infty}du\frac{1}{y-u}
\sum\limits_{k=1}^N\left(-1\right)^{N-k}\zeta_k(u)
\frac{\left|\begin{array}{cccccc}
              g_{1,1} & \ldots & g_{1,k-1} & g_{1,k+1} & \ldots & g_{1,N} \\
              \vdots & \ddots & \vdots & \vdots & \ddots & \vdots \\
              g_{N-1,1} & \ldots & g_{N-1,k-1} & g_{N-1,k+1} & \ldots & g_{N-1,N}
            \end{array}
\right|}{\left|\begin{array}{ccc}
           g_{1,1} & \ldots & g_{1,N} \\
           \vdots & \ddots & \vdots \\
           g_{N,1} & \ldots & g_{N,N}
         \end{array}
\right|}.
\end{split}
\end{equation}
In our case the matrix entries $g_{i,j}$ given by equation \eqref{gij} can be written in terms of the Laguerre
monic polynomials $\left\{\pi_k\left(x\right)\right\}_{k=0}^{\infty}$ as
\begin{equation}\label{gijasMonicLaguerrePolynomial}
g_{i,j}=\left(-1\right)^{i+L-1}\pi_{i+L-1}\left(-\omega_j\right)e^{\omega_j},\;\; 1\leq i,j\leq N,
\end{equation}
where the Laguerre monic polynomials $\pi_k(x)=x^k+\ldots $ are defined in terms of the Laguerre polynomials as
$\pi_k(x)=k!(-1)^kL_k(x)$. So in our case formula (\ref{GeneralFormula2}) takes the form
\begin{equation}\label{GeneralFormula3}
\begin{split}
&\E\left[\prod\limits_{i=1}^N\frac{1}{y-x_i}\right]=(-1)^{L-1}\int\limits_0^{\infty}\frac{du}{y-u}u^L e^{-u}
\sum\limits_{k=1}^N\left(-1\right)^{k}{}_0 F_1 \left (1;\omega_k u \right )e^{-\omega_k} \\
&\times\frac{\left|\begin{array}{cccccc}
              \pi_L\left(-\omega_1\right) & \ldots & \pi_L\left(-\omega_{k-1}\right) & \pi_L\left(-\omega_{k+1}\right) & \ldots & \pi_L\left(-\omega_N\right) \\
              \vdots & \ddots & \vdots & \vdots & \ddots & \vdots \\
              \pi_{N+L-2}\left(-\omega_1\right) & \ldots & \pi_{N+L-2}\left(-\omega_{k-1}\right) & \pi_{N+L-2}\left(-\omega_{k+1}\right) & \ldots & \pi_{N+L-2}\left(-\omega_N\right)
            \end{array}
\right|}{\left|\begin{array}{ccc}
          \pi_L\left(-\omega_1\right) & \ldots & \pi_L\left(-\omega_1\right) \\
           \vdots & \ddots & \vdots \\
          \pi_{N+L-1}\left(-\omega_1\right) & \ldots & \pi_{N+L-1}\left(-\omega_N\right)
         \end{array}
\right|}.
\end{split}
\end{equation}
Note the the ratio of determinants in formula (\ref{GeneralFormula3}) can be rewritten as
\begin{equation}\label{RatioDeterminants}
\frac{\left|\begin{array}{ccc}
                        \pi_L\left(-\omega_1\right) & \ldots & \pi_{L+N-2}\left(-\omega_1\right) \\
                        \vdots & \ddots & \vdots \\
                        \pi_L\left(-\omega_{k-1}\right) & \ldots & \pi_{L+N-2}\left(-\omega_{k-1}\right) \\
                        \pi_L\left(-\omega_{k+1}\right) & \ldots & \pi_{L+N-2}\left(-\omega_{k+1}\right) \\
                        \vdots & \ddots & \vdots \\
                        \pi_L\left(-\omega_N\right) & \ldots & \pi_{L+N-2}\left(-\omega_N\right)
                      \end{array}
\right|}{\left|\begin{array}{ccc}
          \pi_L\left(-\omega_1\right) & \ldots & \pi_{N+L-1}\left(-\omega_1\right) \\
           \vdots & \ddots & \vdots \\
          \pi_{L}\left(-\omega_N\right) & \ldots & \pi_{N+L-1}\left(-\omega_N\right)
         \end{array}
\right|}.
\end{equation}
In order to evaluate the ratio of determinants just written above  we will use the following result
by Brezin and Hikami \cite{BrezinHikami}.
\begin{prop}\label{PropositionBrezinHikami}Let $d\alpha$ be a measure with finite moments $\int_{\R}|t|^kd\alpha(t)<\infty$, $k=0,1,2,\ldots$.
Let $\pi_j(t)=t^j+\ldots $ denote the $j$th monic orthogonal polynomial with respect to the measure
$d\alpha$
$$
\int\limits_{\R}\pi_j(t)\pi_k(t)d\alpha(t)=c_jc_k\delta_{j,k},\;\; j,k>0.
$$
Then
\begin{equation}
\begin{split}
&\left|\begin{array}{ccc}
        \pi_n\left(\mu_1\right) & \ldots & \pi_{n+m-1}\left(\mu_1\right) \\
        \vdots & \ddots & \vdots  \\
        \pi_n\left(\mu_m\right) & \ldots & \pi_{n+m-1}\left(\mu_m\right)
      \end{array}
\right| = \triangle\left(\mu_1,...,\mu_m\right)\frac{\int\limits_{\R}...\int\limits_{\R}\prod\limits_{i=1}^n\prod\limits_{j=1}^m\left(\mu_j-t_i\right)\triangle^2\left(t_1,...,t_n\right)
\prod\limits_{i=1}^n d\alpha(t_i)}{\int\limits_{\R}...\int\limits_{\R}\triangle^2\left(t_1,...,t_n\right)
\prod\limits_{i=1}^n d\alpha(t_i)},
\end{split}
\end{equation}
where
$$
\triangle\left(\mu_1,...,\mu_m\right)=\prod\limits_{1\leq i<j\leq m}\left(\mu_j-\mu_i\right).
$$
\end{prop}
\begin{proof}
See  Brezin and Hikami \cite{BrezinHikami}.
\end{proof}
We apply the formula stated in Proposition \ref{PropositionBrezinHikami} to both numerator (with $n=L$, $m=N-1$,
$\mu_1=-\omega_1$,..., $\mu_{k-1}=-\omega_{k-1}$, $\mu_{k}=-\omega_{k+1},..., \mu_{N-1}=-\omega_N$) and denominator (with $n=L$, $m=N$,
$\mu_1=-\omega_1$,..., $\mu_{N}=-\omega_{N}$) of expression \eqref{RatioDeterminants}:
\begin{equation*}
\begin{split}
 & \frac{\triangle\left(-\omega_1,...,-\omega_{k-1},-\omega_{k+1},...,-\omega_N\right)}{\triangle\left(-\omega_1,...,-\omega_N\right)} \frac{\int\limits_{0}^{\infty}...\int\limits_{0}^{\infty}\prod\limits_{i=1}^L\prod\limits_{\substack{j=1\\j\neq k}}^N
\left(-\omega_j-t_i\right)\triangle^2\left(t_1,...,t_L\right)
\prod\limits_{i=1}^L e^{-t_i} dt_i } {\int\limits_{0}^{\infty}...\int\limits_{0}^{\infty}\prod\limits_{i=1}^L\prod\limits_{j=1}^N
\left(-\omega_j-t_i\right)\triangle^2\left(t_1,...,t_L\right)
\prod\limits_{i=1}^L e^{-t_i} dt_i } = \\
& = (-1)^{1-N-L} \frac{\triangle\left(\omega_1,...,\omega_{k-1},\omega_{k+1},...,\omega_N\right)}{\triangle\left(\omega_1,...,\omega_N\right)} \frac{\int\limits_{0}^{\infty}...\int\limits_{0}^{\infty}\prod\limits_{i=1}^L\prod\limits_{\substack{j=1\\j\neq k}}^N
\left(\omega_j+t_i\right)\triangle^2\left(t_1,...,t_L\right)
\prod\limits_{i=1}^L e^{-t_i} dt_i}{\int\limits_{0}^{\infty}...\int\limits_{0}^{\infty}\prod\limits_{i=1}^L\prod\limits_{j=1}^N
\left(\omega_j+t_i\right)\triangle^2\left(t_1,...,t_L\right)
\prod\limits_{i=1}^L e^{-t_i} dt_i},
\end{split}
\end{equation*}
where we used $\triangle\left(-\omega_1,...,-\omega_N\right) = (-1)^{N(N-1)/2} \triangle\left(\omega_1,\ldots,\omega_N\right)$.
Taking into account that
\begin{equation}
\frac{\triangle\left(\omega_1,...,\omega_{k-1},\omega_{k+1},...,\omega_N\right)}{\triangle\left(\omega_1,...,\omega_N\right)}
=\frac{1}{\prod\limits_{i=1}^{k-1}\left(\omega_k-\omega_i\right)\prod\limits_{i=k+1}^{N}\left(\omega_i-\omega_k\right)}
=\frac{(-1)^{N-k}}{\prod\limits_{\substack{j=1\\j \neq k}}^N \left(\omega_k-\omega_j\right)},
\end{equation}
we get a representation of the ratio of two determinants with orthogonal polynomial entries
in terms of multiple integrals
\begin{equation*}\label{RatioIntegrals}
\begin{split}
&\frac{\left|\begin{array}{ccc}
                        \pi_L\left(-\omega_1\right) & \ldots & \pi_{L+N-2}\left(-\omega_1\right) \\
                        \vdots & \ddots & \vdots \\
                        \pi_L\left(-\omega_{k-1}\right) & \ldots & \pi_{L+N-2}\left(-\omega_{k-1}\right) \\
                        \pi_L\left(-\omega_{k+1}\right) & \ldots & \pi_{L+N-2}\left(-\omega_{k+1}\right) \\
                        \vdots & \ddots & \vdots \\
                        \pi_L\left(-\omega_N\right) & \ldots & \pi_{L+N-2}\left(-\omega_N\right)
                      \end{array}
\right|}{\left|\begin{array}{ccc}
          \pi_L\left(-\omega_1\right) & \ldots & \pi_{N+L-1}\left(-\omega_1\right) \\
           \vdots & \ddots & \vdots \\
          \pi_{L}\left(-\omega_N\right) & \ldots & \pi_{N+L-1}\left(-\omega_N\right)
         \end{array}
\right|} =
 \frac{(-1)^{k+L-1}}{\prod\limits_{\substack{j=1\\ j\neq k}}^N \left(\omega_k-\omega_j\right)} \frac{\int\limits_{0}^{\infty}...\int\limits_{0}^{\infty}\prod\limits_{i=1}^L\prod\limits_{\substack{j=1\\j\neq k}}^N
\left(\omega_j+t_i\right)\triangle^2\left(t_1,...,t_L\right)
\prod\limits_{i=1}^L e^{-t_i} dt_i}{\int\limits_{0}^{\infty}...\int\limits_{0}^{\infty}\prod\limits_{i=1}^L\prod\limits_{j=1}^N
\left(\omega_j+t_i\right)\triangle^2\left(t_1,...,t_L\right)
\prod\limits_{i=1}^L e^{-t_i} dt_i}.
\end{split}
\end{equation*}
In order to get the formula in the statement of Theorem \ref{InverseCPAveragingTheorem}, (A) insert the righthand side of equation
(\ref{RatioIntegrals}) into the expression in the righthand side of formula (\ref{GeneralFormula3}), and use the Basic Residue Theorem.
By continuity, we can remove  the condition that all parameters  $\omega_1$, $\ldots$, $\omega_N$ are pairwise distinct, and, in particular,
to obtain the formula in the statement of Theorem \ref{InverseCPAveragingTheorem}, (B).

The normalization constant $\tilde{\mathcal{N}}_L$ is read off from the denominator of the above formula
\begin{align}
\tilde{\mathcal{N}}_L = \int\limits_{0}^{\infty}...\int\limits_{0}^{\infty}\prod\limits_{i=1}^L\prod\limits_{j=1}^N
\left(\omega_j+t_i\right)\triangle^2\left(t_1,...,t_L\right)
\prod\limits_{i=1}^L e^{-t_i} dt_i,
\end{align}
and the relation to $\mathcal{N}_L$ given by expression \eqref{jpdf} is found from the definition \eqref{NormL} and the formula \eqref{gijasMonicLaguerrePolynomial}
\begin{equation*}
\mathcal{N}_L = N! (-1)^{\frac{N(N+1)}{2}  + N(L-1)} \prod_{i=1}^N e^{\omega_i} \det \left ( \pi_{i+L-1}(-\omega_j) \right )_{i,j=1}^N.
\end{equation*}
Similarly as before, the determinant is found by applying Proposition \ref{PropositionBrezinHikami}
\begin{equation*}
\begin{split}
&\det \left ( \pi_{i+L-1}(-\omega_j) \right )_{i,j=1}^N = \frac{\triangle\left(-\omega_1,...,-\omega_N\right) \int\limits_{0}^{\infty}...\int\limits_{0}^{\infty}\prod\limits_{i=1}^L\prod\limits_{j=1}^N
\left(-\omega_j-t_i\right)\triangle^2\left(t_1,...,t_L\right) \prod\limits_{i=1}^L e^{-t_i} dt_i}{\int\limits_{0}^{\infty}...\int\limits_{0}^{\infty}\triangle^2\left(t_1,...,t_L\right)
\prod\limits_{i=1}^L e^{-t_i} dt_i},
\end{split}
\end{equation*}
where the numerator is proportional to $\tilde{\mathcal{N}}_L$ and denominator is a Mehta-type integral formula $\int\limits_{0}^{\infty}...\int\limits_{0}^{\infty}\triangle^2\left(t_1,...,t_L\right)
\prod\limits_{i=1}^L e^{-t_i} dt_i = L!\prod\limits_{j=1}^{L-1} j!^2$. Combining above formulas gives the relation \eqref{NormLRelation}.
\end{proof}
For completeness, in Appendix \ref{ICPalternative} we provide an alternative proof of the Proposition \ref{InverseCPAveragingTheorem}.
\subsection{Connection with the statistics of eigenvectors of a complex Ginibre matrix} \label{connection}
One of the motivations in studying the deformed ensemble given by equation \eqref{distribution} is its connection to the statistics of eigenvectors in complex Ginibre ensemble as elucidated in Ref. \cite{FyodorovEigenvectors}. It turns out that the joint eigenvalue-eigenvector probability density function:
\begin{align}
\mathcal{P}^{(2)} (t,z) = \E \left [ \sum_{i=1}^N \delta (O_{ii} - 1 - t) \delta^{(2)} (z-\lambda_i) \right ]_{\text{Ginibre}},
\end{align}
where $\lambda_i$ are the complex eigenvalues and the self-overlap matrix $O_{ii} = \left < L_i | L_i \right > \left < R_i | R_i \right > $ consists of left $\left |{L_i} \right >$ and right $\left | {R_i} \right >$ eigenvectors. This correlation function is expressible as a Laplace transform
\begin{align}
\mathcal{P}^{(2)} (t,z) \sim \int_0^\infty dp e^{-pt} \mathcal{D}^{(L=2)}_{N-1,\beta=2}(z,p),
\end{align}
where the integrand is a ratio of determinants:
\begin{equation}
\mathcal{D}^{(L)}_{N,\beta=2}(z,p) = \frac{1}{c}\int dX e^{-\Tr X^\dagger X} \frac{\det \left [(\bar{z} \textbf{1}_N - X^\dagger)(z\textbf{1}_N - X)\right ]^L}{\det \left [ p \textbf{1}_N + (\bar{z} \textbf{1}_N - X^\dagger)(z\textbf{1}_N - X) \right ]},
\end{equation}
with normalization $c = \int dX e^{-\Tr X^\dagger X}$ and the real parameter $p>0$. By setting $X \to X + z\textbf{1}_N$, this correlation function reduces to an inverse characteristic polynomial averaged over the jpdf introduced in equation \eqref{distribution} for a degenerate source term $\Omega$ corresponding to $\omega_1=...=\omega_N = |z|^2$:
\begin{align}
\label{DInverseCPRelation}
 \mathcal{D}^{(L)}_{N,\beta=2}(z,p) = \frac{\mathcal{N}_L}{\mathcal{N}_0} \int dx P^{(L)}(x) \prod_{i=1}^N \frac{1}{p + x_i} = \frac{\mathcal{N}_L}{\mathcal{N}_0} \E \left [ \prod_{i=1}^N \frac{1}{p + x_i} \right ]_{\text{deg}},
 \end{align}
 where $P^{(L)}(X)$ is the jpdf given by formula \eqref{jpdf} and $\mathcal{N}_0$ is the special case of normalization given in \eqref{NormL}. The renormalizing factor is taking into account the switch of conventions from viewing $\mathcal{D}^{(L)}$ as the ratio of characteristic polynomials $\E \left [ \frac{\det^L(X^\dagger X)}{\det (p \mathbf{1}_N + X^\dagger X)} \right ]_{\text{Ginibre}}$ averaged over the (shifted by $z$) complex Ginibre ensemble to the inverse characteristic polynomial $\E \left [ \frac{1}{\det (p\mathbf{1}_N + X^\dagger X)} \right ]_{P^{(L)}}$ averaged over the deformed ensemble $P^{(L)}$ defined by formula \eqref{distribution}.

We also comment on how the discussed connection to the complex Ginibre ensemble can mix the nomenclature one chooses in the asymptotic analysis, a task addressed in Section \ref{AsymptoticsSection}. The problem arises since now the standard bulk and edge regimes have an ambiguous interpretation. The averages are evaluated with parameters probing either inside (bulk scaling) or the edge (edge scaling) of the underlying spectral density of 1) the matrix $X$ (Ginibre interpretation) or 2) the matrix $X^\dagger X$ (Wishart-Laguerre/chiral GUE intepretation). The choice is not evident in this case as the matrix model and in Sec. \ref{AsymptoticsSection} we consider a \emph{joint bulk-edge regime} where we inspect the bulk scaling of $|z|^2 \sim N$ along with the edge regime of the characteristic polynomials arguments scaling as $p \sim N^{-1}$.

Explicit expressions for the object defined in the Eq.\eqref{DInverseCPRelation} in simplest cases $L=0,1,2$ were found in Ref. \cite{FyodorovEigenvectors} in the framework
of a supersymmetry approach. In particular,
\begin{align}
\mathcal{D}^{(0)}_{N,2}(z,p) & = \frac{1}{(N-1)!} \int_0^\infty \frac{e^{-pt - \frac{t|z|^2}{1+t}}}{t+1} \left ( \frac{t}{1+t} \right )^{N-1}, \label{DL0}\\
\mathcal{D}^{(1)}_{N,2}(z,p) & = \frac{1}{(N-1)!} e^{|z|^2} \int_0^\infty \frac{e^{-pt}}{t(1+t)} e^{-\frac{t|z|^2}{1+t}} \left ( \frac{t}{1+t} \right )^{N} \left ( \Gamma(N+1,|z|^2) - \frac{t}{1+t} |z|^2 \Gamma(N,|z|^2) \right ) .\label{DL1}
\end{align}

To see how these expressions follow in the present approach we use the following
\begin{rem}
\label{DRelation}
For $L\geq 0$ the object defined in Ref. \cite{FyodorovEigenvectors} and related to the averaged inverse characteristic polynomial by formula \eqref{DInverseCPRelation} is equal to
\begin{equation}
\begin{split}
 \mathcal{D}^{(L)}_{N,\beta=2}(z,p) = & \frac{(-1)^{N-1}}{L! \prod\limits_{j=1}^{L-1} j!^2} \frac{1}{2\pi i} \int\limits_{0}^{\infty}du\frac{e^{-u}u^L}{p+u}
\int\limits_{C}\frac{e^{-v}{}_0 F_1 \left (1;vu \right )dv}{\left(v-|z|^2\right)^N} \\
& \times \int\limits_0^{\infty}...\int\limits_0^{\infty}\frac{\prod\limits_{i=1}^L \left(t_i+|z|^2\right)^N}{\prod\limits_{i=1}^L(t_i+v)}
\triangle\left(t\right)^2\prod_{i=1}^L e^{-t_i}dt_i.
\end{split}
\end{equation}
\end{rem}
\begin{proof}
This follows directly from the relation \eqref{DInverseCPRelation} and the correction factor is found using the formula \eqref{NormLRelation}
\begin{align}
\frac{\mathcal{N}_L}{\mathcal{N}_0} = \frac{\tilde{\mathcal{N}}_L}{L! \prod_{j=1}^{L-1} j!^2},
\end{align}
where $\tilde{\mathcal{N}}_0 = 1$.
\end{proof}

In the $L=0$ case the formula \eqref{DL0} is related to Remark \ref{DRelation} by the following Proposition.
\begin{prop}
\label{L0Equivalence}
For $p>0$ the following formula holds true
\begin{equation}
\frac{1}{2\pi i} \int\limits_{0}^{\infty}du\frac{e^{-u}}{p+u}
\int\limits_{C}\frac{e^{-v}{}_0 F_1 \left (1;vu \right )dv}{\left(v-|z|^2\right)^N} = \frac{(-1)^{N-1}}{(N-1)!}\int\limits_0^{\infty} \frac{dt}{1+t} e^{-tp}\, \left(\frac{t}{1+t}\right)^{N-1} e^{-|z|^2\frac{t}{1+t}}.
\end{equation}
\end{prop}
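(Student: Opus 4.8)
The plan is to collapse both sides to a single $s$-integral by three moves: resolve the Cauchy kernel $1/(p+u)$ with a Schwinger parameter, perform the resulting $u$-integral as a Bessel Laplace transform, and evaluate the $v$-contour integral by residues. The hypothesis $p>0$ supplies the convergence needed at each stage.

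First I would write, for $p>0$ and $u>0$,
\[
\frac{1}{p+u}=\int_0^\infty e^{-s(p+u)}\,ds,
\]
and insert this into the left-hand side. After interchanging the $u$- and $s$-integrations, the inner $u$-integral is
\[
\int_0^\infty e^{-(1+s)u}\,{}_0 F_1(1;vu)\,du,
\]
the Laplace transform of $I_0(2\sqrt{vu})$. This is the special case $n=0$, $\nu=0$ of the tabulated identity \eqref{Integral} (with $\alpha=1+s$, $\beta^2=v$), giving $\tfrac{1}{1+s}\,e^{v/(1+s)}$. Thus the left-hand side becomes
\[
\frac{1}{2\pi i}\int_C \frac{e^{-v}}{(v-|z|^2)^N}\left(\int_0^\infty \frac{e^{-sp}}{1+s}\,e^{v/(1+s)}\,ds\right)dv.
\]

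Next I would exchange the $s$-integral with the contour integral (legitimate since $C$ is a fixed compact contour and the $s$-integral converges uniformly in $v\in C$ for $p>0$), and simplify the exponentials via $e^{-v}\,e^{v/(1+s)}=e^{-v\,s/(1+s)}$. The remaining $v$-integral has a single pole of order $N$ at $v=|z|^2$, so the residue theorem gives
\[
\frac{1}{2\pi i}\int_C \frac{e^{-v\,s/(1+s)}}{(v-|z|^2)^N}\,dv=\frac{1}{(N-1)!}\left.\frac{d^{N-1}}{dv^{N-1}}e^{-v\,s/(1+s)}\right|_{v=|z|^2}=\frac{1}{(N-1)!}\left(-\frac{s}{1+s}\right)^{N-1}e^{-|z|^2 s/(1+s)}.
\]
Substituting back, extracting the factor $(-1)^{N-1}/(N-1)!$, and renaming $s\to t$ reproduces the right-hand side verbatim.

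I do not expect a genuine obstacle here. The only points needing care are the two interchanges of integration order, both justified by absolute convergence for $p>0$ (the subexponential growth of ${}_0 F_1(1;vu)$ in $u$ is dominated by $e^{-(1+s)u}$, and $C$ is compact), together with the sign bookkeeping: the factor $(-1)^{N-1}$ arises precisely from differentiating $e^{-v\,s/(1+s)}$ in $v$. The key to the whole computation is the ordering—Schwinger parameter first, then the Bessel Laplace transform, then the residue—which is what makes everything collapse cleanly to the claimed one-dimensional integral.
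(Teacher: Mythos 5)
Your proposal is correct and follows essentially the same route as the paper: the Schwinger parameter for $1/(p+u)$, the Laplace transform of $I_0(2\sqrt{vu})$ via identity \eqref{Integral} with $n=\nu=0$, and finally the order-$N$ residue at $v=|z|^2$, which is exactly where the paper's factor $(-1)^{N-1}/(N-1)!$ also comes from. The only difference is notational (your $s$ is the paper's $t$), and your added remarks on the interchange of integrals make explicit what the paper leaves implicit.
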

\begin{proof}
We use $ \frac{1}{p+u}=\int\limits_0^{\infty}dt e^{-t(p+u)}$ to arrive at
\[
\text{l.h.s.} = \frac{1}{2\pi i}\int_0^{\infty} dt e^{-tp} \int\limits_C \frac{dv}{(v-|z|^2)^N} e^{-v} \int_0^{\infty} du  e^{-u(1+t)} {}_0 F_1 \left (1;vu \right ).
\]
To evaluate the last integral we use again the identity \eqref{Integral} for $\nu=0,\, n=0,\, \beta=\sqrt{v}, \, \alpha=1+t$ and the integral in question is equal to $\frac{1}{1+t} e^{\frac{v}{1+t}}$. Lastly, the contour integral is evaluated:
\begin{align*}
\text{l.h.s.} & = \int_0^{\infty} \frac{dt}{1+t} e^{-tp}\, \frac{1}{2\pi i}\int\limits_C \frac{dv}{(v-|z|^2)^N} e^{-v\frac{t}{1+t}} = \frac{(-1)^{N-1}}{(N-1)!}\int\limits_0^{\infty} \frac{dt}{1+t} e^{-tp}\, \left(\frac{t}{1+t}\right)^{N-1} e^{-|z|^2\frac{t}{1+t}}.
\end{align*}
\end{proof}

Following the same lines and exploiting the identity (\ref{Integral}) one can prove for general integer $L\ge 0$ the following equivalent
representations (cf. (\ref{qGrel}) later on):
\begin{prop}\label{mynorm}
\begin{equation}\label{Gen2mynorm}
 \mathcal{D}^{(L)}_{N,\beta=2}(z,p)=\int_0^{\infty} \frac{dt}{(1+t)^{L+1}} e^{-tp}\, \mathcal{G}^{(L)}_N\left(|z|^2,\frac{t}{1+t}\right)
\end{equation}
where we defined the following function of $\rho=|z|^2$ and $\tau=t/(1+t)$:
\begin{equation}\label{Gen3mynorm}
 \mathcal{G}^{(L)}_N(\rho,\tau)=(-1)^N L!\,\int_0^{\infty}dt_1\ldots \int_0^{\infty}dt_L \Delta^2(t_1,\ldots,t_L) \prod_{k=1}^L (t_k+\rho)^N\,e^{-t_k}
\end{equation}
\[
\times\frac{1}{2\pi i}\oint_{Re(v)>0} \frac{dv e^{-v\,\tau}}{(v-\rho)^N}L_L\left(-v(1-\tau\right))\frac{1}{\prod_{k=1}^L (t_k+v)}
\]
which also can be presented in an explicitly real form:
\begin{equation}\label{G1amynorm}
\mathcal{G}^{(L)}_N(\rho,\tau)=\frac{(-1)^N L!}{(N-1)!}\int_0^{\infty}dt_1\ldots \int_0^{\infty}dt_L \Delta^2(t_1,\ldots,t_L) \prod_{k=1}^L (t_k+\rho)^N\,e^{-t_k}
\end{equation}
\[
\times\frac{d^{N-1}}{d^{N-1}\rho}\left[ e^{-\rho\,\tau}\,L_L\left(-\rho(1-\tau)\right)\frac{1}{\prod_{k=1}^L (t_k+\rho)}\right]
\]
\end{prop}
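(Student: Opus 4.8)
The plan is to follow the same computation that established Proposition \ref{L0Equivalence}, now carrying the extra factor $u^L$ and the $L$-fold $t$-integration through the argument. Starting from the representation of $\mathcal{D}^{(L)}_{N,\beta=2}(z,p)$ recorded in Remark \ref{DRelation}, I would first insert $\frac{1}{p+u}=\int_0^\infty dt\, e^{-t(p+u)}$, so that $p$ enters only through the outer Laplace factor $e^{-tp}$ and the $u$-dependence collapses to a pure exponential $e^{-u(1+t)}$. After interchanging the order of integration this isolates the one-dimensional inner integral
\[
\int_0^\infty du\, u^{L} e^{-u(1+t)}\,{}_0F_1(1;vu),
\]
which is the only place where the exponent $L$ enters nontrivially.

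The second step is to evaluate this integral by the identity \eqref{Integral} with $\nu=0$, $n=L$, $\beta=\sqrt v$ and $\alpha=1+t$, giving $L!\,(1+t)^{-L-1}e^{v/(1+t)}L_L\!\left(-v/(1+t)\right)$. The key difference from the $L=0$ case is that the Laguerre polynomial $L_L$ is no longer trivial. Introducing $\tau=t/(1+t)$, so that $1/(1+t)=1-\tau$ and $v/(1+t)=v(1-\tau)$, the factor $e^{-v}$ inherited from Remark \ref{DRelation} combines with $e^{v/(1+t)}$ into $e^{-v\tau}$, the Laguerre argument becomes $-v(1-\tau)$, and the prefactor $(1+t)^{-L-1}$ reproduces exactly the weight $(1+t)^{-(L+1)}$ of \eqref{Gen2mynorm}. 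Collecting the surviving $v$-contour integral together with the $t_1,\ldots,t_L$ integrals then yields \eqref{Gen3mynorm}, with $\mathcal{G}^{(L)}_N(\rho,\tau)$ read off directly.

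For the real form \eqref{G1amynorm}, I would evaluate the contour integral by residues. Since $C$ encircles only the point $v=\rho=|z|^2>0$, while the poles $v=-t_k<0$ of $\prod_k(t_k+v)^{-1}$ lie in the left half-plane and hence outside $C$, the single pole of order $N$ at $v=\rho$ gives
\[
\frac{1}{2\pi i}\oint_C\frac{f(v)\,dv}{(v-\rho)^N}=\frac{1}{(N-1)!}\,\frac{d^{N-1}}{d\rho^{N-1}}f(\rho),
\qquad f(v)=e^{-v\tau}L_L\!\left(-v(1-\tau)\right)\frac{1}{\prod_{k=1}^L(t_k+v)},
\]
which is precisely the bracketed expression in \eqref{G1amynorm}. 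A convenient internal check is the specialization $L=0$: the $t$-integrals then disappear, $L_0\equiv1$, and the formula must reduce to the left-hand side of Proposition \ref{L0Equivalence}, hence to \eqref{DL0}.

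The genuinely routine ingredients are the Gaussian-type $u$-integral and the residue extraction; the only points demanding care are (i) justifying the interchange of the $u$-integral, the $v$-contour integral and the $L$-fold $t$-integration by absolute convergence (Fubini, using $\mathrm{Re}(1+t)>0$ and the compactness of $C$), and (ii) the bookkeeping of the overall sign and factorial prefactors, which I would pin down by matching against the $L=0$ specialization and the normalization relation \eqref{NormLRelation}. I expect this prefactor bookkeeping, rather than any analytic difficulty, to be the main thing to get right.
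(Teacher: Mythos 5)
Your proposal is correct and is exactly the route the paper intends: the paper itself offers no written proof beyond ``following the same lines [as Proposition \ref{L0Equivalence}] and exploiting the identity \eqref{Integral}'', and your steps --- the Laplace representation of $1/(p+u)$, the identity \eqref{Integral} with $\nu=0$, $n=L$, $\alpha=1+t$, $\beta=\sqrt{v}$ producing the factor $L!\,(1+t)^{-L-1}e^{v/(1+t)}L_L(-v/(1+t))$, the substitution $\tau=t/(1+t)$, and the order-$N$ residue at $v=\rho$ for the real form --- are precisely that computation. One remark on the constant, which you correctly single out as the delicate point: carrying your calculation through from Remark \ref{DRelation} gives the overall prefactor $(-1)^{N-1}/\prod_{j=1}^{L-1}j!^2$ in \eqref{Gen3mynorm} rather than the printed $(-1)^{N}L!$, and your proposed $L=0$ cross-check against \eqref{DL0} (together with the $L=1$ formula \eqref{G1a} in the appendix, which carries $(-1)^{N-1}/(N-1)!$, and the later Eq.~\eqref{qGrel}) confirms the former, so the constant as printed in the Proposition is a typo and your bookkeeping strategy is the right way to resolve it.
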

The above can be used for a verification of the $L=1$ case, which is a similar but substantially longer calculation than for $L=0$ and for this sake is  relegated to the Appendix \ref{AppL1}. The 'joint bulk-edge asymptotics' limit of  $\mathcal{G}^{(L)}_N(\rho,\tau)$ relevant for the context of Ref. \cite{FyodorovEigenvectors} is evaluated in the Section 7 of the present paper, see Eq. (\ref{GLafinmynorm}).

\section{Average of characteristic polynomials}
\label{CPSection}
We turn our attention to the average characteristic polynomial evaluated for a general integer parameter $L\geq 1$.
\begin{thm}\label{CPAveragingTheorem} Consider the probability distribution on the space of complex matrices of size $N\times N$ defined by equation (\ref{distribution}).\\
\textbf{(A)} Let $\omega_1$, $\ldots$, $\omega_N$ be the squared singular values of $\Omega$, and assume that $L\geq 1$. We have
\begin{equation}
\begin{split}
\E\left[\prod\limits_{i=1}^N\left(z-x_i\right)\right] = & \frac{(-1)^{N+L}}{\tilde{\mathcal{N}}_L}\frac{e^{z}}{z^L}\int\limits_0^{\infty}dye^{-y}{}_0F_1(1;-zy)\prod_{j=1}^N\left(\omega_j+y\right) \\
&\times \int\limits_{0}^{\infty}...\int\limits_{0}^{\infty}\prod\limits_{i=1}^L\prod\limits_{j=1}^N
\left(\omega_j+t_i\right) \triangle^2\left(t\right)
\prod_{i=1}^L (y-t_i) e^{-t_i}dt_i.
\end{split}
\end{equation}
where the normalization coefficient $\tilde{\mathcal{N}}_L$ is given by equation \eqref{tildeNormL}.\\
\textbf{(B)} Assume that $\Omega=z \mathbf{1}_N$, $z\in\C$. Then $\Omega^*\Omega=|z|^2\mathbf{1}_N$, and $\omega_1=\ldots=\omega_N=|z|^2$.
In this degenerate case we obtain the formula
\begin{equation}
\label{CPDegForm}
\begin{split}
\E\left[\prod\limits_{i=1}^N\left(z-x_i\right)\right]_{\text{deg}} = & \frac{(-1)^{N+L}}{\tilde{\mathcal{N}}_L}\frac{e^{z}}{z^L}\int\limits_0^{\infty}dye^{-y}{}_0F_1(1;-zy) \left(|z|^2+y\right)^N  \\
&\times \int\limits_{0}^{\infty}...\int\limits_{0}^{\infty}\prod\limits_{i=1}^L
\left(|z|^2+t_i\right)^N \triangle^2\left(t\right)
\prod_{i=1}^L (y-t_i) e^{-t_i}dt_i.
\end{split}
\end{equation}
with the normalization factor $\tilde{\mathcal{N}}_L = \int_0^{\infty}...\int_0^{\infty}\prod_{i=1}^L \left(t_i+|z|^2\right)^N \triangle^2(t) \prod_{i=1}^L e^{-t_i}dt_i$.
\end{thm}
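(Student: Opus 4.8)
The plan is to follow the same architecture as the proof of Theorem~\ref{InverseCPAveragingTheorem}: first collapse the average onto a single $(N+1)\times(N+1)$ determinant whose entries are monic Laguerre polynomials, and then resolve that determinant into the advertised nested integral by Proposition~\ref{PropositionBrezinHikami}. The determinantal starting point is the characteristic-polynomial analogue of \eqref{GeneralFormula1}: writing $\triangle(x_1,\dots,x_N,z)=\triangle(x)\prod_{i=1}^N(z-x_i)$, inserting it into \eqref{Density1} and expanding the enlarged Vandermonde along its $z$-column before applying the Andreief/Gram identity (this is Proposition~2 of \cite{DesrosiersForrester}) gives
\begin{equation*}
\E\left[\prod_{i=1}^N(z-x_i)\right]=\frac{1}{\det G}\det\begin{pmatrix} g_{1,1}&\cdots&g_{1,N}&1\\ \vdots&&\vdots&\vdots\\ g_{N,1}&\cdots&g_{N,N}&z^{N-1}\\ h_1&\cdots&h_N&z^{N}\end{pmatrix},\qquad h_k=\int_0^\infty x^N\zeta_k(x)\,dx .
\end{equation*}

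Next I would make all entries polynomial. By \eqref{gijasMonicLaguerrePolynomial} the first $N$ columns read $(-1)^{i+L-1}\pi_{i+L-1}(-\omega_k)e^{\omega_k}$, and evaluating $h_k$ by \eqref{Integral} continues the same pattern, $h_k=(-1)^{N+L}\pi_{N+L}(-\omega_k)e^{\omega_k}$, so that rows $i=1,\dots,N+1$ carry the Laguerre degrees $L,L+1,\dots,L+N$. The obstruction is the last column: the monomials $z^{i-1}$ have degrees $0,\dots,N$, shifted by $L$ relative to the Laguerre rows, so the determinant is not yet of Brezin--Hikami type.

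I would remove this obstruction in two moves. First, factor $z^{-L}$ out of the last column, lifting its entries to $z^{L+i-1}$ (degrees $L,\dots,L+N$, now matching the rows) and producing the prefactor $z^{-L}$. Second, promote each such monomial to a genuine Laguerre evaluation at a fresh node $y$ through the identity
\begin{equation*}
z^{m}=(-1)^m e^{z}\int_0^\infty e^{-y}\,{}_0F_1(1;-zy)\,\pi_m(y)\,dy ,
\end{equation*}
which is the reciprocal of \eqref{Integral} (take $\beta^2=-z$) and follows by summing the kernel $e^{-y}{}_0F_1(1;-zy)$ against the Laguerre generating function. Applying it with $m=L+i-1$ and pulling the $y$-integral outside the determinant by multilinearity (the scalar $e^{-y}{}_0F_1(1;-zy)$ is common to all rows) converts the last column into $(-1)^{L+i-1}\pi_{L+i-1}(y)$ and supplies both the prefactor $e^z$ and the outer integral $\int_0^\infty dy\,e^{-y}{}_0F_1(1;-zy)$.

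At this stage every row $i$ carries a common factor $(-1)^{i+L-1}$ and each of the first $N$ columns a factor $e^{\omega_k}$; extracting these reduces the remaining determinant exactly to $\det\bigl(\pi_{L+i-1}(\mu_k)\bigr)_{i,k=1}^{N+1}$ with $\mu_k=-\omega_k$ for $k\le N$ and $\mu_{N+1}=y$. Proposition~\ref{PropositionBrezinHikami} with $n=L$, $m=N+1$ then evaluates it: the factor $\triangle(\mu)$ yields $\triangle(\omega)\prod_j(\omega_j+y)$, the product $\prod_k(\mu_k-t_i)=(-1)^N\prod_j(\omega_j+t_i)(y-t_i)$ generates the $L$-fold integrand in the statement, and the denominator is the Mehta integral $L!\prod_{j=1}^{L-1}j!^2$. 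The final step is bookkeeping: substituting $\det G=\mathcal N_L/N!$ through \eqref{NormL}--\eqref{NormLRelation} cancels $\triangle(\omega)$, $\prod_i e^{\omega_i}$ and $L!\prod_j j!^2$, leaving $1/\tilde{\mathcal N}_L$ and a total sign that collapses to $(-1)^{N+L}$; part (B) then follows by the limit $\omega_1,\dots,\omega_N\to|z|^2$ by continuity, as in Theorem~\ref{InverseCPAveragingTheorem}. The only genuinely delicate point is the degree-matching manoeuvre above; once the monomial column is recognized as a Laguerre evaluation at the auxiliary node $y$, the remainder is the same Brezin--Hikami reduction already used for the inverse polynomial together with sign accounting.
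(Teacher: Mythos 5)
Your proposal is correct and follows essentially the same route as the paper: the Desrosiers--Forrester $(N+1)\times(N+1)$ determinant, rewriting the entries as monic Laguerre polynomials, pulling out $z^{-L}$ to align degrees, converting the monomial column into Laguerre evaluations at an auxiliary node $y$ via $(-z)^n=e^{z}\int_0^\infty \pi_n(y)e^{-y}\,{}_0F_1(1;-zy)\,dy$, and then applying the Brezin--Hikami identity with $m=N+1$, $\mu_{N+1}=y$. The only differences are cosmetic (you unpack the citation to \cite{DesrosiersForrester} and write the last row as $h_k$ explicitly), so nothing further is needed.
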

\begin{proof} We assume that the parameters $\omega_1$, $\ldots$, $\omega_N$ are pairwise distinct. We begin from the observation that Proposition \ref{PropositionGeneralFormula} can be stated in an equivalent form.

\begin{prop} We have
\begin{equation}
\E\left[\prod\limits_{i=1}^N\left(z-x_i\right)\right] = \frac{1}{\det G}
\left|\begin{array}{cccc}
  g_{1,1} & \ldots & g_{1,N} & \eta_1(z) \\
  g_{2,1} & \ldots & g_{2,N} & \eta_2(z) \\
  \vdots & \ddots & \vdots  & \vdots \\
  g_{N+1,1} & \ldots & g_{N+1,N} & \eta_{N+1}(z)
\end{array}
\right|,
\end{equation}
where $\eta_i(x)=x^{i-1}+\ldots$ is any system of monic polynomials and the matrix $g_{i,j}$ is given by equation \eqref{gdef}.
\end{prop}
\begin{proof}
See Desrosiers and Forrester \cite{DesrosiersForrester}, Proposition 2.
\end{proof}
Recall that in our case the matrix entries $g_{i,j}$ can be rewritten in terms of the monic Laguerre polynomials, see equation (\ref{gijasMonicLaguerrePolynomial}),
so we can write
\begin{equation}
\begin{split}
\E\left[\prod\limits_{i=1}^N\left(z-x_i\right)\right] =
\frac{\left|\begin{array}{cccc}
              (-1)^L\pi_L\left(-\omega_1\right)e^{\omega_1} & \ldots & (-1)^L\pi_L\left(-\omega_N\right)e^{\omega_N} & 1 \\
              (-1)^{L+1}\pi_{L+1}\left(-\omega_1\right)e^{\omega_1} & \ldots & (-1)^{L+1}\pi_{L+1}\left(-\omega_N\right)e^{\omega_N} & z \\
              \vdots & \ddots & \vdots & \vdots \\
              (-1)^{L+N}\pi_{L+N}\left(-\omega_1\right)e^{\omega_1} & \ldots & (-1)^{L+N}\pi_{L+N}\left(-\omega_N\right)e^{\omega_N} & z^N \\
            \end{array}
\right|}{\left|\begin{array}{ccc}
              (-1)^L\pi_L\left(-\omega_1\right)e^{\omega_1} & \ldots & (-1)^L\pi_L\left(-\omega_N\right)e^{\omega_N}  \\
              (-1)^{L+1}\pi_{L+1}\left(-\omega_1\right)e^{\omega_1} & \ldots & (-1)^{L+1}\pi_{L+1}\left(-\omega_N\right)e^{\omega_N}  \\
              \vdots & \ddots &  \vdots  \\
              (-1)^{L+N-1}\pi_{L+N-1}\left(-\omega_1\right)e^{\omega_1} & \ldots & (-1)^{L+N-1}\pi_{L+N-1}\left(-\omega_N\right)e^{\omega_N}  \\
            \end{array}\right|},
\end{split}
\end{equation}
or
\begin{equation}\label{A1}
\begin{split}
\E\left[\prod\limits_{i=1}^N\left(z-x_i\right)\right] = \frac{(-1)^L}{z^L}
\frac{\left|\begin{array}{cccc}
              \pi_L\left(-\omega_1\right) & \ldots & \pi_L\left(-\omega_N\right) & (-z)^L \\
              \pi_{L+1}\left(-\omega_1\right) & \ldots & \pi_{L+1}\left(-\omega_N\right) & (-z)^{L+1} \\
              \vdots & \ddots & \vdots & \vdots \\
              \pi_{L+N}\left(-\omega_1\right) & \ldots & \pi_{L+N}\left(-\omega_N\right) & (-z)^{L+N} \\
            \end{array}
\right|}{\left|\begin{array}{ccc}
              \pi_L\left(-\omega_1\right) & \ldots & \pi_L\left(-\omega_N\right) \\
              \pi_{L+1}\left(-\omega_1\right) & \ldots & \pi_{L+1}\left(-\omega_N\right) \\
              \vdots & \ddots &  \vdots  \\
             \pi_{L+N-1}\left(-\omega_1\right) & \ldots & \pi_{L+N-1}\left(-\omega_N\right) \\
            \end{array}\right|}.
\end{split}
\end{equation}
We know that
\begin{equation}\label{RepresentationOfzn}
z^n=n!e^{z}\int\limits_0^{\infty}L_n(y)e^{-y}{}_0F_1(1;-zy)dy,
\end{equation}
see, for example, equation (2.12) in Forrester and Liu \cite{ForresterLiu}.
For our purposes it is convenient to rewrite equation (\ref{RepresentationOfzn}) as
\begin{equation}\label{RepresentationOfzn1}
(-z)^n=e^{z}\int\limits_0^{\infty}\pi_n(y)e^{-y}{}_0F_1 \left (1;-zy \right ) dy.
\end{equation}
Inserting formula (\ref{RepresentationOfzn1}) into equation (\ref{A1}), we find
\begin{equation}\label{A2}
\begin{split}
\E\left[\prod\limits_{i=1}^N\left(z-x_i\right)\right]=\frac{(-1)^{L+N}e^{z}}{z^L}\int\limits_0^{\infty}e^{-y}{}_0F_1(1;-zy)
\frac{\left|\begin{array}{cccc}
              \pi_L\left(-\omega_1\right) & \ldots & \pi_L\left(-\omega_N\right) & \pi_L(y) \\
              \pi_{L+1}\left(-\omega_1\right) & \ldots & \pi_{L+1}\left(-\omega_N\right) & \pi_{L+1}(y) \\
              \vdots & \ddots & \vdots & \vdots \\
              \pi_{L+N}\left(-\omega_1\right) & \ldots & \pi_{L+N}\left(-\omega_N\right) & \pi_{L+N}(y) \\
            \end{array}
\right|}{\left|\begin{array}{ccc}
              \pi_L\left(-\omega_1\right) & \ldots & \pi_L\left(-\omega_N\right)  \\
              \pi_{L+1}\left(-\omega_1\right) & \ldots & \pi_{L+1}\left(-\omega_N\right)  \\
              \vdots & \ddots & \vdots\\
              \pi_{L+N-1}\left(-\omega_1\right) & \ldots & \pi_{L+N-1}\left(-\omega_N\right)  \\
            \end{array}\right|}dy.
\end{split}
\end{equation}
It remains to compute the ratio of determinants in the formula just written above in terms of multiple integrals.
We use Proposition \ref{PropositionBrezinHikami} to rewrite both the numerator (with $n=L$, $m=N+1$, $\mu_1=-\omega_1$, $\ldots$,
$\mu_N=-\omega_N$, $\mu_{N+1}=y$) and the denominator (with $n=L$, $m=N+1$, $\mu_1=-\omega_1$, $\ldots$,
$\mu_N=-\omega_N$, $\mu_{N+1}=y$) as
\begin{equation}
\begin{split}
&\frac{\triangle\left(-\omega_1,...,-\omega_{N},y\right)}{\triangle\left(-\omega_1,...,-\omega_{N}\right)} \frac{\int\limits_{0}^{\infty}...\int\limits_{0}^{\infty}\prod\limits_{i=1}^L\prod\limits_{j=1}^N
\left(-\omega_j-t_i\right)\prod\limits_{i=1}^L\left(y-t_i\right)\triangle^2\left(t_1,...,t_L\right)
\prod\limits_{i=1}^L e^{-t_i}dt_i }{\int\limits_{0}^{\infty}...\int\limits_{0}^{\infty}\prod\limits_{i=1}^L\prod\limits_{j=1}^N
\left(-\omega_j-t_i\right)\triangle^2\left(t_1,...,t_L\right)
\prod\limits_{i=1}^L e^{-t_i}dt_i },
\end{split}
\end{equation}
The ratio of the Vandermonde determinants can be written as
\begin{equation}
\frac{\triangle\left(-\omega_1,...,-\omega_{N},y\right)}{\triangle\left(-\omega_1,...,-\omega_{N}\right)}=(-1)^N \prod\limits_{j=1}^N\left(-\omega_j-y\right)
=\prod\limits_{j=1}^N\left(\omega_j+y\right),
\end{equation}
and we finally arrive at the following formula for the averaged characteristic polynomial:
\begin{equation}\label{A3}
\begin{split}
&\E\left[\prod\limits_{i=1}^N\left(z-x_i\right)\right]=(-1)^{N+L}\frac{e^{z}}{z^L}\int\limits_0^{\infty}dye^{-y}{}_0F_1(1;-zy)\prod_{j=1}^N\left(\omega_j+y\right) \\
&\times\frac{\int\limits_{0}^{\infty}...\int\limits_{0}^{\infty}\prod\limits_{i=1}^L\prod\limits_{j=1}^N
\left(\omega_j+t_i\right)\prod\limits_{i=1}^L\left(y-t_i\right)\triangle^2\left(t_1,...,t_L\right)
\prod\limits_{i=1}^L e^{-t_i}dt_i }{\int\limits_{0}^{\infty}...\int\limits_{0}^{\infty}\prod\limits_{i=1}^L\prod\limits_{j=1}^N
\left(\omega_j+t_i\right)\triangle^2\left(t_1,...,t_L\right)
\prod\limits_{i=1}^L e^{-t_i}dt_i}.
\end{split}
\end{equation}
\end{proof}
\section{Average of ratios of characteristic polynomials}
\label{RCPSection}
\begin{thm}\label{RatioCPAveragingTheorem} Consider the probability distribution on the space of complex matrices of size $N\times N$ defined by equation (\ref{distribution}).\\
\textbf{(A)} Let $\omega_1$, $\ldots$, $\omega_N$ be the squared singular values of $\Omega$, and assume that $L\geq 1$. We have
\begin{equation}
\begin{split}
&\E\left[\prod\limits_{i=1}^N\frac{v-x_i}{z-x_i}\right]=(-1)^{N+L+1} \frac{e^{v} v^{-L}}{\tilde{\mathcal{N}}_L} \frac{1}{2\pi i}\int\limits_0^{\infty}dx\frac{v-x}{z-x}x^L e^{-x} \int\limits_{C}\frac{du\;{}_0F_1(1;ux)e^{-u}}{\prod\limits_{j=1}^N\left(u-\omega_j\right)}\\
&\times\int\limits_0^{\infty}\frac{dse^{-s}{}_0F_1(1;-vs)}{s+u}\prod\limits_{j=1}^N\left(s+\omega_j\right) \int\limits_{0}^{\infty}...\int\limits_{0}^{\infty}\prod_{k=1}^L\prod_{j=1}^N
\left(\omega_j+t_k\right) \triangle^2\left(t\right)
\prod_{i=1}^L \frac{s-t_i}{u+t_i} e^{-t_i} dt_i,
\end{split}
 \end{equation}
 where $C$ is a counter-clockwise contour encircling the points $\omega_1$, $\ldots$, $\omega_N$, and leaving
 the real negative numbers $-s,-t_1,...,-t_L$ outside. The normalization constant $\tilde{\mathcal{N}}_L$ is given by the formula \eqref{tildeNormL}.\\
\textbf{(B)} Assume that $\Omega=z \mathbf{1}_N$, $z\in\C$. Then $\Omega^*\Omega=|z|^2\mathbf{1}_N$, and $\omega_1=\ldots=\omega_N=|z|^2$.
In this case we obtain the formula
\begin{equation}
\begin{split}
&\E\left[\prod\limits_{i=1}^N\frac{v-x_i}{z-x_i}\right]_{\text{deg}}=(-1)^{N+L+1} \frac{e^{v} v^{-L}}{\tilde{\mathcal{N}}_L} \frac{1}{2\pi i}\int\limits_0^{\infty}dx\frac{v-x}{z-x}x^L e^{-x} \int\limits_{C}\frac{du\;{}_0F_1(1;ux)e^{-u}}{\left(u-|z|^2\right)^N} \\
&\times \int\limits_0^{\infty}\frac{dse^{-s}{}_0F_1(1;-vs)}{s+u} \left(s+|z|^2\right)^N \int\limits_{0}^{\infty}...\int\limits_{0}^{\infty}\prod_{k=1}^L
\left(|z|^2+t_k\right)^N \triangle^2\left(t\right)
\prod_{i=1}^L \frac{s-t_i}{u+t_i} e^{-t_i} dt_i,
\end{split}
\end{equation}
where the normalization $\tilde{\mathcal{N}}_L = \int_0^{\infty}...\int_0^{\infty}\prod_{i=1}^L \left(t_i+|z|^2\right)^N \triangle^2(t) \prod_{i=1}^L e^{-t_i}dt_i$.
\end{thm}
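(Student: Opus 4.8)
To prove Theorem~\ref{RatioCPAveragingTheorem}, the plan is to mirror the template used for Theorems~\ref{InverseCPAveragingTheorem} and \ref{CPAveragingTheorem}, now fusing the two devices that previously handled the denominator $\prod_i(z-x_i)^{-1}$ and the numerator $\prod_i(v-x_i)$ in isolation. As before I would first take $\omega_1,\ldots,\omega_N$ pairwise distinct and invoke the determinantal master formula of Desrosiers and Forrester (Proposition~2 of \cite{DesrosiersForrester}; cf. Proposition~2.5 of \cite{ForresterLiu}), which represents $\E\left[\prod_i\frac{v-x_i}{z-x_i}\right]$ as a single ratio of determinants: the matrix $G=(g_{i,j})$ is bordered by one extra column carrying the numerator argument through the monic test functions $\eta_i(v)$, and by one extra row carrying the combined Cauchy dependence $\int_0^\infty \frac{v-u}{z-u}\,\zeta_j(u)\,du$. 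This explains why $v$ must appear in two distinct places in the final answer, and reduces the task to evaluating that bordered determinant explicitly.

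Next I would substitute \eqref{gijasMonicLaguerrePolynomial} to express each $g_{i,j}$ through the monic Laguerre polynomials $\pi_{i+L-1}(-\omega_j)$, pulling the factors $e^{\omega_j}$ and the powers of $(-1)$ out of the determinant exactly as in the passage to \eqref{A1}. For the numerator column I would use the representation \eqref{RepresentationOfzn1}, $(-v)^n=e^{v}\int_0^\infty \pi_n(s)\,e^{-s}\,{}_0F_1(1;-vs)\,ds$, replacing the polynomial entries in $v$ by integrals against $\pi_{i+L-1}(s)$; this is the origin of the outer factor $e^{v}v^{-L}$ and of the $s$-integral $\int_0^\infty ds\,e^{-s}\,{}_0F_1(1;-vs)$. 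For the bordered row I would use the contour--residue representation already exploited in \eqref{FinalFormulaL0}, turning $\int_0^\infty \frac{v-u}{z-u}\,\zeta_j(u)\,du$ into the $x$-integral $\int_0^\infty \frac{v-x}{z-x}\,x^{L}e^{-x}\,{}_0F_1(1;ux)\,dx$ coupled to the contour integral $\frac{1}{2\pi i}\int_C \frac{e^{-u}\,du}{\prod_j(u-\omega_j)}$, where $C$ encircles $\omega_1,\ldots,\omega_N$.

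After these substitutions the $v$- and $z$-dependence sits in the outer prefactors and in the $x$-, $s$- and $u$-integrals, leaving a ratio of two determinants whose entries are the monic Laguerre polynomials $\pi_{L},\ldots,\pi_{L+N}$ evaluated at $-\omega_1,\ldots,-\omega_N$ together with the auxiliary arguments $s$ and $u$. Applying the Brezin--Hikami formula of Proposition~\ref{PropositionBrezinHikami} to numerator and denominator with $n=L$ converts this ratio into the $t_1,\ldots,t_L$ integral, and the coupling weight $\prod_i \frac{s-t_i}{u+t_i}$ appears as the superposition of the numerator trace $(s-t_i)$ seen in \eqref{A3} and the denominator pole $(u+t_i)^{-1}$ seen in Theorem~\ref{InverseCPAveragingTheorem}. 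The factor $\prod_j(s+\omega_j)$ emerges from the Brezin--Hikami Vandermonde ratio $\triangle(-\omega_1,\ldots,-\omega_N,s)/\triangle(-\omega_1,\ldots,-\omega_N)$ exactly as $\prod_j(\omega_j+y)$ did in the step following \eqref{A2}, while the prefactor $\prod_{k,j}(\omega_j+t_k)$ and the normalization $\tilde{\mathcal{N}}_L$ of \eqref{tildeNormL} are read off from the Brezin--Hikami denominators, and the overall sign $(-1)^{N+L+1}$ is fixed by collecting the powers of $(-1)$ from \eqref{gijasMonicLaguerrePolynomial} and from $\triangle(-\omega_1,\ldots,-\omega_N)$. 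Part~(B) then follows by the same continuity argument as at the end of the proof of Theorem~\ref{InverseCPAveragingTheorem}, letting $\omega_j\to|z|^2$ and collapsing $\prod_j(\,\cdot\,+\omega_j)\to(\,\cdot\,+|z|^2)^N$.

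The step I expect to be the main obstacle is keeping the two auxiliary variables $s$ and $u$ correctly entangled. Because the numerator column and the Cauchy row meet in the corner entry of the bordered determinant, the resulting expression does not factor into a product of an $s$-part and a $u$-part: the $t$-integral must see the joint weight $\prod_i \frac{s-t_i}{u+t_i}$, and the computation must generate the kernel $\frac{1}{s+u}$ linking the two auxiliary variables rather than two decoupled factors. Handling this cleanly requires either applying Brezin--Hikami to a genuinely enlarged point configuration or first expanding along the bordering column and row and treating the corner contribution with care. A subordinate but essential point is the placement of the contour $C$: it must enclose the $\omega_j$ while leaving $-s,-t_1,\ldots,-t_L$ outside, so that the residue evaluation of the $u$-integral reproduces $\prod_j(u-\omega_j)^{-1}$ without picking up the spurious poles at $u=-t_i$ or $u=-s$.
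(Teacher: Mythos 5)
Your proposal follows essentially the same route as the paper's proof: the Desrosiers--Forrester determinantal representation (which the paper writes as a sum over single-column replacements of $G$ rather than one bordered determinant --- an equivalent repackaging once you expand along the border), the monic-Laguerre rewriting of the $g_{i,j}$, the representation \eqref{RepresentationOfzn1} for the $v$-dependent column, Proposition \ref{PropositionBrezinHikami} applied to the resulting ratios of determinants, and a final residue-to-contour conversion producing the $u$-integral that entangles $s$, $u$ and the $t_i$'s exactly as you anticipate. The one small imprecision is that $\prod_j(s+\omega_j)/(s+u)$ and $\prod_i(u+t_i)^{-1}$ do not come directly from the Brezin--Hikami Vandermonde ratio $\triangle(-\omega_1,\ldots,-\omega_N,s)/\triangle(-\omega_1,\ldots,-\omega_N)$; they arise from the ratio with $s$ \emph{replacing} $-\omega_i$, which yields $\prod_{k\neq i}(\omega_k+s)$ and $\prod_{j\neq i}(\omega_j+t_k)/\prod_{j}(\omega_j+t_k)=(\omega_i+t_k)^{-1}$, followed by the residue identification $\omega_i\to u$.
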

\begin{proof}
We begin from the following Proposition
\begin{prop}
With the same notation as in the statement of Proposition \ref{PropositionGeneralFormula} the formula for
the average of a ration of two  characteristic polynomials can be written as
\begin{equation}
\begin{split}
&\E\left[\prod\limits_{i=1}^N\frac{v-x_i}{z-x_i}\right]=\int\limits_{0}^{\infty}dx\frac{v-x}{z-x} \\
&\times \sum\limits_{i=1}^N\zeta_i(x)
\frac{\left|\begin{array}{ccccccc}
  g_{1,1} & \ldots & g_{1,i-1}&  1& g_{1,i+1} &\ldots & g_{1,N}  \\
  g_{2,1} & \ldots & g_{2,i-1}& v & g_{2,i+1} &\ldots & g_{2,N}  \\
  \vdots & \ddots & \vdots & \vdots & \vdots & \ddots & \vdots \\
  g_{N,1} & \ldots & g_{N,i-1}& v^{N-1}  & g_{N,i+1} &\ldots & g_{N,N}  \\
\end{array}\right|}{\left|\begin{array}{ccc}
  g_{1,1} & \ldots & g_{1,N} \\
  \vdots & \ddots & \vdots \\
  g_{N,1} & \ldots & g_{N,N}
\end{array}\right|}.
\end{split}
\end{equation}
\end{prop}
\begin{proof} This formula can be derived by the procedure very similar to that in Desrosiers and  Forrester \cite{DesrosiersForrester},
see Proposition 2.
\end{proof}
The ratio of determinants in the formula above can be rewritten as
\begin{equation}
\begin{split}
\frac{e^{-\omega_i}}{v^L} \frac{\left|\begin{array}{ccccccc}
\pi_L\left(-\omega_1\right) & \ldots & \pi_L\left(-\omega_{i-1}\right) & (-v)^L &\pi_L\left(-\omega_{i+1}\right) & \ldots &\pi_L\left(-\omega_N\right) \\
\pi_{L+1}\left(-\omega_1\right) & \ldots & \pi_{L+1}\left(-\omega_{i-1}\right) & (-v)^{L+1} &\pi_{L+1}\left(-\omega_{i+1}\right) & \ldots &\pi_{L+1}\left(-\omega_N\right) \\              \vdots & \ddots & \vdots & \vdots & \vdots & \ddots & \vdots\\
\pi_{L+N-1}\left(-\omega_1\right) & \ldots & \pi_{L+N-1}\left(-\omega_{i-1}\right) & (-v)^{L+N-1} &\pi_{L+N-1}\left(-\omega_{i+1}\right) & \ldots &\pi_{L+N-1}\left(-\omega_N\right) \\              \end{array}
\right|}{\left|\begin{array}{ccc}
              \pi_L\left(-\omega_1\right) & \ldots & \pi_L\left(-\omega_N\right)  \\
              \pi_{L+1}\left(-\omega_1\right) & \ldots & \pi_{L+1}\left(-\omega_N\right)  \\
              \vdots & \ddots & \vdots \\
              \pi_{L+N-1}\left(-\omega_1\right) & \ldots & \pi_{L+N-1}\left(-\omega_N\right)  \\
            \end{array}\right|}.
\end{split}
\nonumber
\end{equation}
We use formula (\ref{RepresentationOfzn1}) to rewrite the equation just written above as
\begin{equation}
\begin{split}
& \frac{e^{-\omega_i+v}}{v^L}\int\limits_0^{\infty}dse^{-s}{}_0F_1(1;-vs) \\
&\times\frac{\left|\begin{array}{ccccccc}
\pi_L\left(-\omega_1\right) & \ldots & \pi_L\left(-\omega_{i-1}\right) & \pi_L(s) &\pi_L\left(-\omega_{i+1}\right) & \ldots &\pi_L\left(-\omega_N\right) \\
\pi_{L+1}\left(-\omega_1\right) & \ldots & \pi_{L+1}\left(-\omega_{i-1}\right) & \pi_{L+1}(s) &\pi_{L+1}\left(-\omega_{i+1}\right) & \ldots &\pi_{L+1}\left(-\omega_N\right) \\              \vdots & \ddots & \vdots & \vdots & \vdots & \ddots & \vdots \\
\pi_{L+N-1}\left(-\omega_1\right) & \ldots & \pi_{L+N-1}\left(-\omega_{i-1}\right) & \pi_{L+N-1}(s) &\pi_{L+N-1}\left(-\omega_{i+1}\right) & \ldots &\pi_{L+N-1}\left(-\omega_N\right) \\              \end{array}
\right|}{\left|\begin{array}{ccc}
              \pi_L\left(-\omega_1\right) & \ldots & \pi_L\left(-\omega_N\right)  \\
              \pi_{L+1}\left(-\omega_1\right) & \ldots & \pi_{L+1}\left(-\omega_N\right)  \\
              \vdots & \ddots  & \vdots \\
              \pi_{L+N-1}\left(-\omega_1\right) & \ldots & \pi_{L+N-1}\left(-\omega_N\right)  \\
            \end{array}\right|}.
\end{split}
\nonumber
\end{equation}
Again, we represent the ratio of determinants with orthogonal polynomial entries in terms of multiple integrals using Proposition \ref{PropositionBrezinHikami} to both numerator and denominator:
\begin{equation}
\begin{split}
&\frac{\triangle\left(-\omega_1,...,-\omega_{i-1},s,-\omega_{i+1},...,-\omega_N\right)}{\triangle\left(-\omega_1,...,-\omega_N\right)} \frac{\int\limits_{0}^{\infty}...\int\limits_{0}^{\infty}\prod\limits_{k=1}^L\prod\limits_{\substack{j=1\\j\neq i}}^N
\left(-\omega_j-t_k\right)\prod\limits_{k=1}^L\left(s-t_k\right)\triangle^2\left(t_1,...,t_L\right)
\prod\limits_{i=1}^L e^{-t_i} dt_i}{\int\limits_{0}^{\infty}...\int\limits_{0}^{\infty}\prod\limits_{k=1}^L\prod\limits_{j=1}^N
\left(-\omega_j-t_k\right)\triangle^2\left(t_1,...,t_L\right)
\prod\limits_{i=1}^L e^{-t_i} dt_i}.
\end{split}
\end{equation}
The ratio of the Vandermonde determinants can be simplified as
\begin{equation}
\frac{\triangle\left(-\omega_1,\ldots,-\omega_{i-1},s,-\omega_{i+1},\ldots,-\omega_N\right)}{\triangle\left(-\omega_1,\ldots,-\omega_N\right)}
=\underset{k\neq i}{\prod\limits_{k=1}^N}\frac{\omega_k+s}{\omega_k-\omega_i}.
\end{equation}
Therefore, the involved ratio of the determinants can be rewritten as
\begin{equation}
\begin{split}
(-1)^L\prod\limits_{\substack{k=1\\k\neq i}}^N\frac{\omega_k+s}{\omega_k-\omega_i}
\frac{\int\limits_{0}^{\infty}...\int\limits_{0}^{\infty}\prod\limits_{k=1}^L\prod\limits_{\substack{j=1\\j\neq i}}^N
\left(\omega_j+t_k\right)\prod\limits_{k=1}^L\left(s-t_k\right)\triangle^2\left(t_1,...,t_L\right)
\prod\limits_{i=1}^L e^{-t_i} dt_i}{\int\limits_{0}^{\infty}...\int\limits_{0}^{\infty}\prod\limits_{k=1}^L\prod\limits_{j=1}^N
\left(\omega_j+t_k\right)\triangle^2\left(t_1,...,t_L\right)
\prod\limits_{i=1}^L e^{-t_i} dt_i}.
\end{split}
\nonumber
\end{equation}
These calculations give us the following formula for the average of ratios of characteristic polynomials
\begin{equation}
 \begin{split}
&\E\left[\prod\limits_{i=1}^N\frac{v-x_i}{z-x_i}\right]=\int\limits_0^{\infty}dx\frac{v-x}{z-x}\\
&\times\sum\limits_{i=1}^N\zeta_i(x)
\left(\frac{e^{-\omega_i+v}}{v^L}\int\limits_0^{\infty}dse^{-s}{}_0F_1(1;-vs)(-1)^L\prod\limits_{\substack{k=1\\k\neq i}}^N
\frac{\omega_k+s}{\omega_k-\omega_i}\right)\\
&\times\frac{\int\limits_{0}^{\infty}...\int\limits_{0}^{\infty}\prod\limits_{k=1}^L\prod\limits_{\substack{j=1\\j\neq i}}^N
\left(\omega_j+t_k\right)\prod\limits_{k=1}^L\left(s-t_k\right)\triangle^2\left(t_1,...,t_L\right)
\prod\limits_{i=1}^L e^{-t_i} dt_i}{\int\limits_{0}^{\infty}...\int\limits_{0}^{\infty}\prod\limits_{k=1}^L\prod\limits_{j=1}^N
\left(\omega_j+t_k\right)\triangle^2\left(t_1,...,t_L\right)
\prod\limits_{i=1}^L e^{-t_i} dt_i}.
\end{split}
 \end{equation}
 Taking into account the explicit form of the functions $\zeta_i(x)$ (see equation (\ref{zetaeta})), and applying the Basic Residue Theorem, we obtain
 the formula in the statement of Theorem \ref{RatioCPAveragingTheorem} after some straightforward manipulations.
\end{proof}
\section{The formula for the correlation kernel}
\label{KCPSection}
\begin{thm}
\label{KernelAveragingTheorem}
Consider the determinantal process formed by the squared singular values of  a random matrix $X$
whose probability distribution is defined by formula (\ref{distribution}). The correlation kernel $K_N(x,y)$
for this determinantal point process can be written as
\begin{equation}
\label{kernel}
\begin{split}
K_N(x,y) = & \frac{(-1)^{N+L+1}e^{x-y}}{\tilde{\mathcal{N}}_L} \left ( \frac{y}{x} \right )^L \frac{1}{2\pi i} \int\limits_{C}\frac{du\;{}_0F_1(1;uy)e^{-u}}{\prod\limits_{j=1}^N\left(u-\omega_j\right)}\int\limits_0^{\infty}\frac{dse^{-s}{}_0F_1(1;-xs)}{s+u}\prod\limits_{j=1}^N\left(s+\omega_j\right) \\
& \times \int\limits_{0}^{\infty}...\int\limits_{0}^{\infty}\prod_{k=1}^L\prod_{j=1}^N
\left(\omega_j+t_k\right) \triangle^2\left(t\right)
\prod_{i=1}^L \frac{s-t_i}{u+t_i} e^{-t_i} dt_i,
\end{split}
\end{equation}
 where $C$ is a counter-clockwise contour encircling the points $\omega_1$, $\ldots$, $\omega_N$, and leaving
 the real negative numbers $-s,-t_1,...,-t_L$ outside. The normalization constant $\tilde{\mathcal{N}}_L$ is given by formula \eqref{tildeNormL}.
\end{thm}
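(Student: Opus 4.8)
The plan is to realise $K_N(x,y)$ as the correlation kernel of the biorthogonal ensemble with density \eqref{Density1} and then recycle, almost verbatim, the computation that proved Theorem \ref{RatioCPAveragingTheorem}. For any density of the form $\frac{1}{\mathcal{N}_L}\det(\eta_i(x_j))_{i,j}\det(\zeta_i(x_j))_{i,j}$ the resulting determinantal point process carries the kernel $K_N(x,y)=\sum_{i=1}^N\zeta_i(x)\Psi_i(y)$, where $\Psi_i(y)=\sum_{j=1}^N(G^{-1})_{ij}\eta_j(y)$ is the biorthogonal partner determined by $\int_0^\infty\zeta_k(x)\Psi_i(x)\,dx=\delta_{ik}$; this is the Eynard--Mehta representation underlying Proposition 2 of \cite{DesrosiersForrester}, and it satisfies the reproducing property $\int_0^\infty K_N(x,u)K_N(u,y)\,du=K_N(x,y)$ with $\int_0^\infty K_N(x,x)\,dx=\mathrm{tr}(G^{-1}G)=N$. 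By Cramer's rule $\Psi_i(y)$ equals $1/\det G$ times the determinant of $G$ with its $i$-th column replaced by $(\eta_1(y),\ldots,\eta_N(y))^{T}=(y^{r-1})_{r=1}^N$. Because the correlation functions $\det(K_N(x_a,x_b))$ are invariant both under transposition $K_N(x,y)\mapsto K_N(y,x)$ and under conjugation $K_N(x,y)\mapsto\frac{f(x)}{f(y)}K_N(x,y)$, it suffices to exhibit one representative; \eqref{kernel} will be the transpose $\sum_{i=1}^N\zeta_i(y)\Psi_i(x)$ of the standard form.

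The key observation is that $\sum_{i=1}^N\zeta_i(x)\Psi_i(v)$ is precisely the integrand that appeared inside the $x$-integral in the proof of Theorem \ref{RatioCPAveragingTheorem}: there one obtained $\E\left[\prod_{i}\frac{v-x_i}{z-x_i}\right]=\int_0^\infty dx\,\frac{v-x}{z-x}\sum_{i=1}^N\zeta_i(x)\Psi_i(v)$, the $i$-th summand being $\zeta_i(x)$ times the determinant with its $i$-th column replaced by $(v^{r-1})_r$ over $\det G$. Hence $K_N(x,v)=\sum_{i=1}^N\zeta_i(x)\Psi_i(v)$, and the whole evaluation is already in place: I would simply run the same chain of identities but drop the outer factor $\frac{v-x}{z-x}$ and the integration $\int_0^\infty dx$. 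Concretely, I express the entries $g_{i,j}$ through the monic Laguerre polynomials by \eqref{gijasMonicLaguerrePolynomial}, pull the common factors $e^{\omega_c}$ and $(-1)^{r+L-1}$ out of the unaltered $g$-columns (they cancel against $\det G$), convert the inserted power column into Laguerre form through the representation \eqref{RepresentationOfzn1}, and then apply Proposition \ref{PropositionBrezinHikami} to both the numerator and the denominator determinant. This reproduces the multiple-integral ratio carrying the factors $\prod_{k=1}^L\prod_{j\neq i}^N(\omega_j+t_k)$, $\prod_{k=1}^L(s-t_k)$ and $\triangle^2(t)$, together with the Vandermonde quotient $\prod_{k\neq i}\frac{\omega_k+s}{\omega_k-\omega_i}$.

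The final step collapses the sum over $i$, after relabelling the ratio-theorem variables so that $\zeta_i$ carries the second slot $y$ and $\Psi_i$ the first slot $x$. Each summand carries $\zeta_i(y)={}_0F_1(1;\omega_i y)\,y^{L}e^{-y}$ together with the factor $e^{-\omega_i}\prod_{k\neq i}\frac{1}{\omega_k-\omega_i}$, so the Basic Residue Theorem converts $\sum_{i=1}^N$ into the contour integral $\frac{1}{2\pi i}\int_C\frac{du\,{}_0F_1(1;uy)e^{-u}}{\prod_{j=1}^N(u-\omega_j)}$, the poles of $\frac{1}{s+u}$ and $\frac{1}{u+t_i}$ at $u=-s$ and $u=-t_i$ remaining outside $C$ exactly as stipulated; the inserted column, evaluated at the first slot $x$ via \eqref{RepresentationOfzn1}, supplies the factor $\frac{e^{x}}{x^{L}}\int_0^\infty ds\,e^{-s}{}_0F_1(1;-xs)\prod_j(s+\omega_j)/(s+u)$. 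Collecting the prefactors $y^{L}e^{-y}$ and $e^{x}/x^{L}$ into $e^{x-y}(y/x)^L$, the overall sign $(-1)^{N+L+1}$ and the normalization $\tilde{\mathcal{N}}_L$ of \eqref{tildeNormL}, one arrives at \eqref{kernel}. As in Theorem \ref{RatioCPAveragingTheorem}, the calculation is performed for pairwise distinct $\omega_1,\ldots,\omega_N$ and extended to coincident values by continuity.

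I expect the only real difficulty to be bookkeeping rather than anything conceptual: keeping the orientation of the residue and the requirement that $-s,-t_1,\ldots,-t_L$ lie outside $C$ mutually consistent, and---more delicately---fixing the transpose/gauge labelling of the two arguments so that the symmetric-looking combination $e^{x-y}(y/x)^{L}$ and the placement of $x$ versus $y$ in the two hypergeometric factors emerge exactly as written in \eqref{kernel}. Since the kernel of a determinantal process is defined only up to transposition and the conjugation $\frac{f(x)}{f(y)}$, this last point is a matter of selecting the representative produced by the above normalization, not a genuine obstruction.
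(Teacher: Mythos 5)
Your proposal is correct and follows essentially the same route as the paper: both extract the kernel from the Desrosiers--Forrester biorthogonal-ensemble framework and reuse, unchanged, the chain of identities (monic Laguerre representation, Proposition \ref{PropositionBrezinHikami}, residue over $C$) already established in the proof of Theorem \ref{RatioCPAveragingTheorem}. The paper merely packages this more economically, quoting the identity $K_N(x,y)=\frac{1}{x-y}\Res_{z=y}\E\bigl[\prod_i\frac{x-x_i}{z-x_i}\bigr]$ from Desrosiers--Forrester (their Eq.~(11)) and applying it directly to the finished formula of Theorem \ref{RatioCPAveragingTheorem}, which is operationally the same as your stripping of the factor $\frac{v-x}{z-x}$ and the outer integral from the Eynard--Mehta summand.
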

\begin{proof} It is known that for polynomial ensembles the correlation kernel can be written in terms of
averages of ratios of characteristic polynomials as follows
\begin{equation}
K_N(x,y)=\frac{1}{x-y}\underset{z=y}{\Res}\left(\E\left[\prod\limits_{i=1}^N\frac{x-x_i}{z-x_i}\right]\right),
\end{equation}
see Desrosiers and  Forrester \cite{DesrosiersForrester}, equation (11). Use the equation just written above, and Theorem \ref{RatioCPAveragingTheorem}.
\end{proof}

\section{Joint bulk-edge regime asymptotics in the degenerate case}
\label{AsymptoticsSection}
We study the $\omega_1 = ... = \omega_N = |z|^2$ degenerate case of averages considered in Theorems \ref{InverseCPAveragingTheorem}B, \ref{CPAveragingTheorem}B and \ref{RatioCPAveragingTheorem}B in a \emph{joint bulk-edge} asymptotic regime where $|z|^2 \sim N$ (bulk regime from the point of view of eigenvalues of the matrix $X$) and the argument of characteristic polynomials scales as $N^{-1}$ (edge regime from the point of view of eigenvalues of the matrix $X^\dagger X$). Such choice is motivated, in particular, by the applications \cite{FyodorovEigenvectors} and a detailed discussion of the joint bulk-edge regime is given in Sec. \ref{connection}.
\subsection{Asymptotics of the averaged inverse characteristic polynomial}
\begin{prop}
\label{ICPAsymptotics}
In the degenerate case $\omega_1 = ... = \omega_N = |z|^2$, the averaged inverse characteristic polynomial $\E\left[\prod\limits_{i=1}^N\frac{1}{p+x_i}\right]_{\text{deg}} \equiv Q_{N}(p)$ given by equation \eqref{ICPDegForm} has the following joint bulk-edge asymptotic behaviour for general $L\geq 0$
\begin{equation}
\begin{split}
&\lim_{N\to \infty} \frac{N^{N+L-1/2}}{e^{NR_*}} Q_N\left (p = \frac{\xi}{NR_*} \right ) = \sqrt{\frac{2}{\pi}} \xi^{\frac{L}{2}} K_L\left (2\sqrt{\xi} \right ),
\end{split}
\end{equation}
where $R_* = 1-R$ and the source is scaled with $N$ as $|z|^2 = NR$.
\end{prop}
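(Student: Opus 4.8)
\textbf{The plan} is to read \eqref{ICPDegForm} (equivalently, the representation of Proposition \ref{mynorm}) as a multidimensional Laplace integral in the regime $|z|^2=NR$, $p=\xi/(NR_*)$, and to extract the leading term in $N$. The first point is that both the numerator in \eqref{ICPDegForm} and the normalization \eqref{tildeNormL} are driven by the weight $\prod_i (t_i+NR)^N e^{-t_i}$, whose one-body exponent $N\log(t+NR)-t$ is maximized at $t=N-NR=NR_*$. Thus each $t_i$ concentrates at $NR_*$ with Gaussian fluctuations of width $\sqrt N$: setting $t_i=NR_*+\sqrt N\,\eta_i$ turns $\prod_i(t_i+NR)^Ne^{-t_i}$ into $N^{NL}e^{-NLR_*}e^{-\frac12\sum_i\eta_i^2}$, the Vandermonde into $N^{L(L-1)/2}\triangle^2(\eta)$ and the measure into $N^{L/2}\,d\eta$, so that $\tilde{\mathcal N}_L\sim e^{-NLR_*}N^{NL+L^2/2}\mathcal M_L$ with $\mathcal M_L=\int_{\R^L}\triangle^2(\eta)e^{-\frac12\sum_i\eta_i^2}\,d\eta=(2\pi)^{L/2}\prod_{j=1}^{L}j!$. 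This fixes the $R$- and $N$-content that must eventually combine with the prefactor $e^{NR_*}/N^{N+L-1/2}$.

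For orientation I would first dispose of $L=0$, where Proposition \ref{L0Equivalence} collapses the $u$- and $v$-integrals to $Q_N=\frac{1}{(N-1)!}\int_0^\infty\frac{dt}{1+t}e^{-tp}\big(\tfrac{t}{1+t}\big)^{N-1}e^{-NR\,t/(1+t)}$ (signs being routine bookkeeping). Substituting $t=NR_*w$ and expanding for $t\sim N$ yields exponent $-NR-1/w-\xi w$, whence $Q_N\sim \frac{e^{-NR}}{(N-1)!}\int_0^\infty\frac{dw}{w}e^{-1/w-\xi w}$; the change $w=1/s$ identifies the integral with $2K_0(2\sqrt\xi)$, and Stirling for $(N-1)!$ reproduces $\frac{e^{NR_*}}{N^{N-1/2}}\sqrt{2/\pi}\,K_0(2\sqrt\xi)$ exactly. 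For general $L$ I would run the same template on $Q_N=\frac{L!\prod_{j=1}^{L-1}j!^2}{\tilde{\mathcal N}_L}\int_0^\infty\frac{dt}{(1+t)^{L+1}}e^{-tp}\,\mathcal G^{(L)}_N\big(NR,\tfrac{t}{1+t}\big)$, which follows from \eqref{DInverseCPRelation}, \eqref{NormLRelation}, Remark \ref{DRelation} and \eqref{Gen2mynorm}: once the bulk-edge asymptotics of $\mathcal G^{(L)}_N(NR,\tau)$ is known, the substitution $t=NR_*w$ (so $1-\tau\sim 1/(NR_*w)$, $e^{-tp}=e^{-\xi w}$) should reduce the outer integral to $\int_0^\infty \frac{dw}{w^{L+1}}e^{-1/w-\xi w}=2\xi^{L/2}K_L(2\sqrt\xi)$, using $\int_0^\infty s^{L-1}e^{-s-\xi/s}\,ds=2\xi^{L/2}K_L(2\sqrt\xi)$; the bookkeeping constants $\mathcal M_L$, the factorials and Stirling are precisely what should collapse to the $L$-independent $\sqrt{2/\pi}$.

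The hard part will be the asymptotics of $\mathcal G^{(L)}_N$ itself — equivalently the large-order contour factor $\frac{1}{2\pi i}\oint\frac{e^{-v\tau}L_L(-v(1-\tau))}{(v-NR)^N\prod_k(t_k+v)}\,dv$ of \eqref{Gen3mynorm}, i.e.\ the $(N-1)$-st $\rho$-derivative in \eqref{G1amynorm}. The naive route (localize $t_k$ at $NR_*$, then keep only the top Leibniz term of the residue at the order-$N$ pole $v=NR$) is wrong: since $\frac{d}{dv}\log\!\big[L_L(-v(1-\tau))/\prod_k(t_k+v)\big]=O(1/N)$ while one differentiates $N-1$ times, every Leibniz term contributes at the same order and must be resummed. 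Writing the derivative as a Cauchy integral with exponent $\psi(v)=-v\tau+\log L_L(-v(1-\tau))-\sum_k\log(t_k+v)$, its saddle lies at $w_*=NR-N/\psi'(w_*)\approx -NR_*$, sitting exactly on the moving poles $v=-t_k\approx -NR_*$ that the contour must avoid. I therefore expect to analyze the $t$-integrals and the $v$-contour jointly, as one coupled saddle point in $(t_1,\dots,t_L,v)$ in which the $v$-saddle pinches the singularities at $-t_k$; it is this joint treatment (not the naive factorization) that should make the spurious $R$-dependent factors — the $L_L$ evaluated at $-R/(R_*w)$ and the like — cancel, leaving the clean $R$-free integrand $w^{-L-1}e^{-1/w-\xi w}$. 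Establishing this cancellation, and doing so uniformly in $\tau$ up to the edge $\tau\to1$ so that the whole Bessel function $K_L$ (rather than only its large-$\xi$ tail) is recovered, is the principal technical burden; the $L=1$ case, reducible via \eqref{G1amynorm} and \eqref{Integral} to explicit Laguerre integrals as in Appendix \ref{AppL1}, is the natural first check.
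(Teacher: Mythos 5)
Your setup is sound: the Laplace analysis of the $t_i$-integrals around $t_i=NR_*$, the resulting asymptotics of $\tilde{\mathcal N}_L$ (which matches the paper's \eqref{nlapprox}), the $L=0$ computation via Proposition \ref{L0Equivalence}, and the final reduction of the outer $\tau$-integral to $2\xi^{L/2}K_L(2\sqrt\xi)$ are all correct and essentially identical to what the paper does. You also correctly diagnose that the naive route (localize all $t_k$ at $NR_*$, keep the top term of the order-$N$ pole at $v=NR$) fails. But the proof has a genuine gap exactly where you flag "the principal technical burden": the asymptotics of the contour factor for $L\ge 1$ is never established, and the mechanism you propose for it — a coupled saddle point in $(t_1,\dots,t_L,v)$ with the $v$-saddle pinching the poles at $-t_k$, leading to a cancellation of the $R$-dependent factors — is not what actually resolves the problem, and I do not see how to make it work as stated.

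What the paper does instead is more concrete and qualitatively different. After partial-fractioning $\prod_k(q_k+v)^{-1}$ via \eqref{Xident1L} and singling out one variable $q$, the contour $C$ around $R$ is deformed to the steepest-descent contour through $v_*=-R_*$; this deformation \emph{crosses the simple pole at $v=-q$} whenever $q<R_*$, so $I(q)$ splits into a saddle piece plus a residue piece weighted by $\theta(R_*-q)$ (Eq.~\eqref{approx}). The residue contributes $e^{N\mathcal L_2(q)}$, which cancels the weight $e^{-N\mathcal L_2(q)}$ of the $q$-integral \emph{exactly}, so that integral is delocalized over all of $[0,R_*]$ and evaluates in closed form to $-\tfrac1L e^{-aR_*}(-R_*)^L$; the remaining $L-1$ variables are still Gaussian around $R_*$. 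Counting powers shows this residue term $g_2$ beats the genuine saddle term $g_1$ by $N^{L/2}$. A literal joint Gaussian expansion around $(q_k,v)=(R_*,-R_*)$ would produce only $g_1$ and hence the wrong (subleading) answer; and the $R_*$-dependent factors do not cancel — they survive as $e^{-aR_*}R_*^L$ in \eqref{XGen8} and are absorbed only by the choice of scaling $p=\xi/(NR_*)$ and the $K_L$ integral representation. So to complete your argument you would need to replace the "pinch/joint saddle" step by this explicit pole-versus-saddle decomposition (or an equivalent exact resummation of the Leibniz series in \eqref{G1amynorm}), together with the power-counting argument showing the pole contribution dominates.
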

\begin{proof}
We proceed similarly to the proof of Proposition \ref{L0Equivalence}. The formula \eqref{ICPDegForm}
is rewritten by using the expression $\frac{1}{p+u} = \int_0^1 \frac{d\tau}{(1-\tau)^2} e^{-\frac{(p+u)\tau}{1-\tau}}$ and the identity \eqref{Integral} to evaluate the integral over $u$. The result reads (cf. Eq.(\ref{Gen3mynorm}))
\begin{equation}
\begin{split}
\label{qGrel}
&Q_N(p) = \int\limits_0^{1} d\tau (1-\tau)^{L-1} e^{-\frac{\tau p}{1-\tau}}\, G^{(L)}_N\left(|z|^2,\tau\right),
\end{split}
\end{equation}
where
\begin{equation*}
\begin{split}
 G^{(L)}_N\left(|z|^2,\tau\right) = & \frac{(-1)^{N-1} L!}{\tilde{\mathcal{N}}_L}\,\int\limits_0^{\infty}... \int\limits_0^{\infty}\triangle^2(t) \prod_{k=1}^L (t_k+|z|^2)^N\,e^{-t_k} dt_k \frac{1}{2\pi i} \int\limits_{C} \frac{dv e^{-v\,\tau}}{(v-|z|^2)^N}\frac{L_L\left(v(\tau-1)\right)}{\prod\limits_{k=1}^L (t_k+v)}.
\end{split}
\end{equation*}
To extract the scaling limit of $G^{(L)}_N\left(|z|^2,\tau\right)$, we rescale the parameters $|z|^2= NR$ and $\tau=1-a/N$ and integration variables $v\to Nv$ and $t_k= Nq_k$. In this way we define the rescaled function $g(R,a) = \frac{G^{(L)}_N\left (NR,1-\frac{a}{N} \right )}{N^{(N+L)(L-1)+1}}$ being equal to
\begin{equation}\label{XGen4}
g(R,a) = \frac{(-1)^{N-1} L!}{\tilde{\mathcal{N}}_L}\int\limits_0^{\infty}... \int\limits_0^{\infty}\triangle^2(q) \prod_{k=1}^L (q_k+R)^N\,e^{-Nq_k} dq_k \frac{1}{2\pi i} \int\limits_{C} \frac{dv e^{-Nv}}{(v-R)^N}\frac{e^{va}L_L\left(-va\right)}{\prod\limits_{k=1}^L (q_k+v)}.
\end{equation}
We use an identity
\begin{equation}\label{Xident1L}
\prod_{k=1}^{L}\frac{1}{q_k+v}=\sum_{k=1}^L\frac{1}{q_k+v}\prod_{\substack{m=1\\m\ne k}}^L\frac{1}{(q_m-q_k)}
\end{equation}
which, when substituted back to equation \eqref{XGen4}, reduces the sum over $k$ to its last term $\frac{1}{q_L+v}\prod_{m=1}^{L-1}\frac{1}{(q_m-q_L)}$ and multiply the result by $L$ due to the symmetry of $q_k$ integrals. The distinguished variable is renamed $q_L \to q$
\begin{equation}\label{XGen5}
\begin{split}
g(R,a) = & \frac{(-1)^{N+L-2} L L!}{\tilde{\mathcal{N}}_L}\int\limits_0^{\infty}... \int\limits_0^{\infty}\triangle^2(q_{1},...,q_{L-1}) \prod_{k=1}^{L-1} (q_k+R)^N\,e^{-Nq_k} dq_k \\
& \times \int\limits_0^\infty d q (q+R)^N e^{-Nq} \prod\limits_{i=1}^{L-1} (q - q_i) \frac{1}{2\pi i} \int\limits_{C} \frac{dv e^{-Nv}}{(v-R)^N} \frac{e^{va}L_L\left(-va\right)}{q + v}.
\end{split}
\end{equation}
We consider the contour integral
\begin{equation}
\label{contourint}
I(q) = \frac{1}{2\pi i} \int\limits_{C} dv e^{-N\mathcal{L}_1(v)} \frac{e^{va}L_L\left(-va\right)}{q + v},
\end{equation}
with $\mathcal{L}_1(v) = v + \log(v-R)$ which we approximate by the method of steepest descent. Denoting $R_* = 1-R$, the saddle point is found at $v_* = -R_*$ and located on the negative real axis as $R<1$. Since the orginal contour $C$ encircles the pole at $v=R$ and does not cross the negative real axis, it has to be deformed to pass through the saddle $v_*$. This adds a contribution to the integral coming from the pole at $v=-q$ as long as $-q>v_*$:
\begin{equation*}
\int\limits_{C} = \int\limits_{C_{\text{sp}}(R)} - \int\limits_{C(-q)} \theta(-q -v_*),
\end{equation*}
where $C_{\text{sp}}(R)$ is a contour encircling $R$ and passing through the saddle point $v_*$ whereas $C(-q)$ is a path encircling $-q$. We compute both contributions
\begin{align}
\label{approx}
I(q) = \frac{1}{2\pi i} \int\limits_{C_{\text{sp}}(R)} dv e^{-N\mathcal{L}_1(v)} \frac{e^{va}L_L\left(-va\right)}{q + v} + \theta(R_*-q) (-1)^{N+1} e^{N\mathcal{L}_2(q)-aq} L_L(aq)
\end{align}
and substitute it back into the equation \eqref{XGen5} to obtain:
\begin{equation*}\label{XGen6}
\begin{split}
g(R,a) \sim g_1 + g_2,
\end{split}
\end{equation*}
where
\begin{equation}
\label{g1g2def}
\begin{split}
g_1 = & \frac{(-1)^{N+L}L L! }{ \tilde{\mathcal{N}}_L}\int\limits_0^{\infty}... \int\limits_0^{\infty}\triangle^2(q_{1},...,q_{L-1}) \prod_{k=1}^{L-1} \,e^{-N\mathcal{L}_2(q_k)} dq_k \\
& \times \int\limits_0^{\infty} d q \prod\limits_{i=1}^{L-1} (q - q_i) e^{-N\mathcal{L}_2(q)} \frac{1}{2\pi i} \int\limits_{C_{\text{sp}}(R)} dv e^{-N\mathcal{L}_1(v)} \frac{e^{va}L_L\left(-va\right)}{q + v}, \\
g_2 = & \frac{(-1)^{L-1}L L!}{\tilde{\mathcal{N}}_L}\int\limits_0^{\infty}... \int\limits_0^{\infty}\triangle^2(q_{1},...,q_{L-1}) \prod_{k=1}^{L-1} \,e^{-N\mathcal{L}_2(q_k)} dq_k \int\limits_0^{R_*} d q \prod\limits_{i=1}^{L-1} (q - q_i) e^{-aq} L_L(aq)
\end{split}
\end{equation}
and $\mathcal{L}_2(q) = q - \ln (q+R)$. We first compute the $g_1$ term by approximating the $q_k$, $q$ integrals around the saddle points $q_k = R_* + N^{-1/2} x_k$, $q = R_* + N^{-1/2} x$ and the contour integral $v$ around $v = -R_* + iN^{-1/2} y$:
\begin{equation}
\begin{split}
g_1 \sim \frac{(-1)^{L}L L! e^{-N(L-1)R_*}}{N^{(L-1)^2/2} \tilde{\mathcal{N}}_L} \frac{ e^{-a R_*} L_L(aR_*)}{2\pi N^{L/2}} & \int\limits_{-\infty}^{\infty}... \int\limits_{-\infty}^{\infty}\triangle^2(x_{1},...,x_{L-1}) \prod_{k=1}^{L-1} \,e^{-\frac{x_k^2}{2}} dx_k \\
& \times \int\limits_{-\infty}^{\infty} d x dy \frac{e^{-\frac{x^2}{2} - \frac{y^2}{2}}}{x+iy}\prod\limits_{i=1}^{L-1} (x - x_i) .
\end{split}
\end{equation}
In the above formula, we introduce the following representation of the Hermite monic polynomials $h_n$, see e.g. Ref. \cite{FyoStra2003a}
\begin{equation}
\label{MonHerm}
\int\limits_{-\infty}^\infty \prod\limits_{i=1}^{L-1} e^{-\frac{x_i^2}{2}} dx_i \triangle^2(x) \prod\limits_{i=1}^{L-1} (z-x_i) = h_{L-1} (z) \prod_{k=1}^{L-1} \left ( k! \sqrt{2\pi} \right ),
\end{equation}
which results in the first contribution $g_1$ equal to
\begin{equation}
\label{g1}
g_1(R,a) \sim \frac{(-1)^{L}L L! e^{-N(L-1)R_*}}{N^{(L-1)^2/2} \tilde{\mathcal{N}}_L}  \prod_{i=1}^{L-1} (i! \sqrt{2\pi}) \frac{ e^{-a R_*} L_L(aR_*)}{2\pi N^{L/2}} \int\limits_{-\infty}^{\infty} d x dy \frac{e^{-\frac{x^2}{2} - \frac{y^2}{2}}}{x+iy} h_{L-1}(x),
\end{equation}
where the remaining $x,y$ integrals do not depend on $N$ any longer. We now turn our attention to the second term $g_2$ given by equation \eqref{g1g2def}. In this case, the $q$ integral is computed exactly due to an exact cancellation of the $N$-dependent parts:
\begin{equation}\label{g2}
\begin{split}
g_2(R,a) \sim & \frac{(-1)^{L-1}L L! e^{-N(L-1)R_*}}{N^{(L-1)^2/2}\tilde{\mathcal{N}}_L} \prod_{i=1}^{L-1} (i! \sqrt{2\pi}) \int\limits_0^{R_*} d q (q -R_*)^{L-1} e^{-aq} L_L(aq),
\end{split}
\end{equation}
and using a Mehta-type integral formula $\int\limits_{-\infty}^{\infty}\triangle^2(x_{1},...,x_{n}) \prod\limits_{k=1}^{n} \,e^{-\frac{x_k^2}{2}} dx_k = \prod\limits_{i=1}^{n} (i! \sqrt{2\pi})$.

By comparing both contributions to $g$ given by equations \eqref{g1} and \eqref{g2} we see that $g_2/g_1 \sim N^{L/2}$. Hence, the term $g_2$ is dominant in the large $N$ limit and we can safely disregard the contribution $g_1$ in what follows. The remaining $q$ integral in the contribution coming from \eqref{g2} is found exactly by expressing Laguerre polynomial as $e^{-x} L_n(x) = \frac{1}{n!} \frac{d^n}{dx^n} \left (e^{-x} x^n \right )$ and integrating by parts:
\begin{equation}
\label{qint}
\int\limits_0^{R_*} d q (q -R_*)^{L-1} e^{-aq} L_L(aq) = - \frac{1}{L} e^{-aR_*} (-R_*)^L .
\end{equation}
In a similar fashion we approximate the normalization constant $\tilde{\mathcal{N}}_L$ given by formula \eqref{tildeNormL} as
\begin{align}
\label{nlapprox}
\tilde{\mathcal{N}}_L \sim N^{NL+L^2/2} e^{-NLR_*} \prod\limits_{i=1}^{L} (i! \sqrt{2\pi}).
\end{align}
At last, by using equations \eqref{qint} and \eqref{nlapprox} in Eq.\eqref{XGen6} we arrive at
\begin{equation}\label{XGen8}
\begin{split}
G^{(L)}_N\left (R,1-\frac{a}{N} \right ) \sim & \frac{e^{NR_*}}{N^{N-1/2}} \frac{e^{-aR_*}R_*^L}{\sqrt{2\pi} }.
\end{split}
\end{equation}
Such an expression and the integral representation of the Bessel-Macdonald function $\int_0^\infty da a^{L-1} e^{-\frac{\xi}{a} - aR_*} = 2 \left ( \frac{\xi}{R_*} \right )^{\frac{L}{2}} K_L \left (2\sqrt{\xi R_*} \right )$
can now be used in equation \eqref{qGrel} to obtain our final asymptotic formula for the 'joint bulk-edge asymptotics' of the mean of inverse characteristic polynomial:
\begin{equation*}
Q_N\left (p=\frac{\xi}{NR_*} \right ) \sim \sqrt{\frac{2}{\pi}}\frac{e^{NR_*}}{N^{N+L-1/2}} \xi^{\frac{L}{2}} K_L\left (2\sqrt{\xi} \right ).
\end{equation*}
equivalent to the statement of the Proposition.
\end{proof}

The above considerations can be trivially translated into the asymptotic behavior of the object $\mathcal{G}^{(L)}_N(\rho,\tau)$ defined in (\ref{Gen2mynorm}),(\ref{Gen3mynorm})
and featuring in applications in \cite{FyodorovEigenvectors}. Namely, we have after rescaling $\rho=Nw^2$ and exploiting (\ref{XGen8})
\begin{equation}\label{GLafinmynorm}
\lim_{N\to \infty}\frac{1}{N^{(N-1/2)(L-1)+1}}\mathcal{G}^{(L)}_N(Nw^2,\tau=1-a/N)= \left(\prod_{k=1}^L\,k!\right) (2\pi)^{(L-1)/2} e^{-a(1-w^2)} (1-w^2)^L
\end{equation}
By renaming $a\to 1/s$ and considering the special case $L=2$ one can check that this formula is exactly equivalent, after appropriate interpretation, to the expression Eq.(2.24) in  \cite{FyodorovEigenvectors} derived there via a relatively tedious procedure.

\subsection{Asymptotics of the averaged characteristic polynomial}
\begin{prop}
\label{CPAsymptotics}
In the degenerate case $\omega_1 = ... = \omega_N = |z|^2$, the averaged characteristic polynomial $\E\left[\prod\limits_{i=1}^N(p+x_i)\right]_{\text{deg}} \equiv P_{N}(p)$ given by equation \eqref{CPDegForm} has the following joint bulk-edge asymptotic behaviour for general $L\geq 0$
\begin{equation}
\begin{split}
\lim_{N\to \infty} \frac{e^{NR_*}}{N^{L + N +1/2} } P_N\left (p = \frac{\xi}{NR_*} \right ) = (-1)^L \sqrt{2\pi} \xi^{-\frac{L}{2}} I_L \left ( 2 \sqrt{ \xi } \right ),
\end{split}
\end{equation}
where $R_* = 1-R$, the source scaled as $|z|^2 = NR$ and $I_L(x)$ stands for the Bessel function of imaginary argument.
\end{prop}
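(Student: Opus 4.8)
The plan is to mirror the steepest-descent analysis of Proposition~\ref{ICPAsymptotics}, starting from the degenerate representation \eqref{CPDegForm}. Since $\prod_{i=1}^N(p+x_i)=(-1)^N\prod_{i=1}^N\big((-p)-x_i\big)$, I would first set the spectral argument in \eqref{CPDegForm} to $z=-p$ (keeping the source fixed, $|z|^2=NR$), so that the slowly varying factor becomes ${}_0F_1(1;py)$ with $p=\xi/(NR_*)$, and both the $y$- and $t_i$-integrals are honest integrals over $(0,\infty)$; note that, in contrast with the inverse case, there is no contour here and hence no pole-versus-saddle competition to resolve. I would then rescale $y=Ny'$ and $t_i=Nq_i$, which turns $(NR+y)^Ne^{-y}$ and $\prod_i(NR+t_i)^Ne^{-t_i}$ into $e^{-N\mathcal L_2(y')}$ and $\prod_i e^{-N\mathcal L_2(q_i)}$, with $\mathcal L_2(q)=q-\ln(q+R)$ whose unique saddle sits at $q=R_*$, where $\mathcal L_2(R_*)=R_*$ and $\mathcal L_2''(R_*)=1$, exactly as before. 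Under this scaling ${}_0F_1(1;py)\to{}_0F_1(1;\xi y'/R_*)$ is $O(1)$ and does not affect the exponential rate.

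Next I would localize all $L+1$ variables around the common saddle by writing $y'=R_*+N^{-1/2}\tilde y$ and $q_i=R_*+N^{-1/2}x_i$, producing Gaussian weights $e^{-\tilde y^2/2}$ and $\prod_i e^{-x_i^2/2}$ together with $\triangle^2(q)=N^{-L(L-1)/2}\triangle^2(x)$. The decisive feature is the factor $\prod_{i=1}^L(y-t_i)$, which collapses to $N^{-L/2}\prod_{i=1}^L(\tilde y-x_i)$; performing the $x_i$-integrals via the Hermite monic-polynomial identity \eqref{MonHerm}, now with $L$ variables, then replaces it by $h_L(\tilde y)\prod_{k=1}^L(k!\sqrt{2\pi})$. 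What remains is the single Gaussian integral $\int_{-\infty}^{\infty}e^{-\tilde y^2/2}\,h_L(\tilde y)\,{}_0F_1\!\big(1;\xi(1+N^{-1/2}\tilde y/R_*)\big)\,d\tilde y$.

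This last integral is where the main obstacle lies. Because $h_L$ is orthogonal to every polynomial of degree $<L$ against the Gaussian weight, the naive leading term, in which ${}_0F_1$ is frozen at its saddle value ${}_0F_1(1;\xi)=I_0(2\sqrt\xi)$, vanishes identically for $L\ge 1$; one is therefore forced to Taylor-expand ${}_0F_1(1;\xi y'/R_*)$ in the fluctuation $\tilde y$ up to order $L$, each order contributing a further $N^{-1/2}$. Only the $\tilde y^{L}$ term survives the projection against $h_L$ at leading order, and here the identity $\tfrac{d^L}{dw^L}{}_0F_1(1;w)=\tfrac{1}{L!}{}_0F_1(L+1;w)$ is the crux: it upgrades the index from $1$ to $L+1$, and combined with ${}_0F_1(L+1;\xi)=L!\,\xi^{-L/2}I_L(2\sqrt\xi)$ it is precisely what converts $I_0$ into the required $I_L(2\sqrt\xi)$. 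I expect this index-shifting expansion, rather than any individual integral, to be the conceptual heart of the argument.

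Finally I would assemble the pieces: collect the powers of $N$ and the factors $e^{-NR_*}$ generated by the rescaling and the localization, divide by $\tilde{\mathcal N}_L$ using its asymptotics \eqref{nlapprox}, and restore the prefactor $(-1)^{N+L}e^{-p}/(-p)^L$ from \eqref{CPDegForm} together with the $(-1)^N$ from the product reversal noted above. Careful bookkeeping shows that the $\prod_{k}(k!\sqrt{2\pi})$ factors cancel against those in $\tilde{\mathcal N}_L$ (leaving a single $\sqrt{2\pi}$), the powers of $R_*$ cancel, and $\xi^{-L}\cdot\xi^{L/2}=\xi^{-L/2}$, so that tracking all remaining signs and powers of $N$ reproduces the stated normalized limit $(-1)^L\sqrt{2\pi}\,\xi^{-L/2}I_L(2\sqrt\xi)$. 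The case $L=0$, where there are no $t$-integrals and ${}_0F_1(1;\xi)=I_0(2\sqrt\xi)$ appears directly without the orthogonality subtlety, serves as a consistency check.
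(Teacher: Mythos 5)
Your plan is correct and follows essentially the same route as the paper: set $z=-p$ in \eqref{CPDegForm}, rescale to produce $e^{-N\mathcal L_2}$ weights with saddle at $R_*$, collapse the $t$-integrals via the Hermite identity \eqref{MonHerm} to get $h_L$, and extract the $L$-th derivative of the slowly varying ${}_0F_1$ factor to turn $I_0$ into $I_L$. The only cosmetic difference is that you realize this last step through Taylor expansion plus Gaussian orthogonality of $h_L$, while the paper uses the Rodrigues formula and $L$-fold integration by parts — these are the same computation.
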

\begin{proof}
We start with the formula \eqref{CPDegForm} in which we rescale the parameters accordingly $|z|^2 = NR$, $y\to Ny$, $p = \xi/N$, $t_i = N q_i$ to find
\begin{equation*}
\begin{split}
\frac{P_N \left ( p = \frac{\xi}{N} \right )}{N^{(N+L+1)(L+1)}} = & \frac{1}{\tilde{\mathcal{N}}_L}\frac{e^{-\xi/N}}{\xi^L}\int\limits_0^{\infty}dye^{-N\mathcal{L}_2(y)}I_0(2\sqrt{\xi y}) \int\limits_{0}^{\infty}...\int\limits_{0}^{\infty} \triangle^2\left(q\right)
\prod_{i=1}^L (y-q_i) e^{-N\mathcal{L}_2(q_i)}dq_i,
\end{split}
\end{equation*}
with $\mathcal{L}_2(y) = y - \ln (R+y)$. The integral is evaluated in the leading approximation via the saddle-point method by setting $q_i = R_* + N^{-1/2} x_i$, $y = R_* + N^{-1/2} z$ where $R_* = 1-R$ and using an representation of Hermite monic polynomials introduced in formula \eqref{MonHerm}. With the approximate normalization coefficient given by formula \eqref{nlapprox}, the result reads
\begin{equation*}
P_N \left ( p = \frac{\xi}{N} \right ) \sim \frac{N^{1/2 + \frac{3}{2}L +N} e^{-NR_*} }{\xi^{L}} \int\limits_{-\infty}^\infty dz e^{-\frac{z^2}{2}} I_0 \left ( 2 \sqrt{\xi R_* + N^{-1/2} z \xi} \right )  h_L(z).
\end{equation*}
By using the Rodrigues identity $h_L(z) = (-1)^L e^{z^2/2} \frac{d^L}{dz^L} e^{-z^2/2}$, the last integral is evaluated as
\begin{equation*}
\begin{split}
& \int_{-\infty}^\infty dz e^{-\frac{z^2}{2}} I_0 \left ( 2 \sqrt{\xi R_* + N^{-1/2} z \xi} \right ) h_L (z) = \int_{-\infty}^\infty dz e^{-z^2/2} \frac{d^L}{dz^L} I_0 \left ( 2 \sqrt{ \xi R_* + N^{-1/2} z \xi} \right ) \sim \\
& \sim (-1)^L \sqrt{2\pi} N^{-L/2} \frac{d^L}{dR^L} I_0 \left ( 2 \sqrt{ \xi R_*} \right ) = (-1)^L \sqrt{2\pi} N^{-L/2}  \left ( \frac{\xi}{R_*} \right )^{L/2} I_L \left ( 2 \sqrt{ \xi R_*} \right ).
\end{split}
\end{equation*}
and plugged into previous expression to obtain
\begin{equation*}
\begin{split}
P_N \left ( p = \frac{\xi}{N} \right ) \sim (-1)^L \frac{N^{1/2 + L +N} \sqrt{2\pi}}{e^{NR_*}} \left ( \frac{1}{\xi R_*} \right )^{L/2} I_L \left ( 2 \sqrt{ \xi R_*} \right ),
\end{split}
\end{equation*}
giving the final formula by rescaling $\xi \to \xi/R_*$.
\end{proof}
\subsection{Asymptotics of the kernel}
\begin{prop}
\label{RATIOAsymptotics}
In the degenerate case $\omega_1 = ... = \omega_N = |z|^2$, the kernel $K_{N}(x,y)$ given by equation \eqref{kernel} has the following joint bulk-edge asymptotic behaviour for general $L\geq 0$
\begin{equation}
\label{mainkernelformula}
\begin{split}
& \lim_{N\to \infty} \frac{1}{NR_*} K_N \left (x = \frac{\alpha}{NR_*},y = \frac{\beta}{NR_*} \right ) = \left ( \frac{\beta}{\alpha} \right )^{\frac{L}{2}} \int\limits_0^1 d\tau J_L\left (2\sqrt{\alpha \tau} \right ) J_L\left (2\sqrt{\beta \tau} \right ),
\end{split}
\end{equation}
where $R_* = 1-R$, the source is scaled as $|z|^2 = NR$  and $J_L(x)$ stands for the standard Bessel function.
\end{prop}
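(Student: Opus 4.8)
The plan is to specialize the kernel \eqref{kernel} to the degenerate source, so that $\prod_j(u-\omega_j)\to(u-|z|^2)^N$, $\prod_j(s+\omega_j)\to(s+|z|^2)^N$ and $\prod_{k}\prod_j(\omega_j+t_k)\to\prod_k(t_k+|z|^2)^N$, and then to rescale exactly as in the proof of Proposition \ref{ICPAsymptotics}: $|z|^2=NR$, $t_k=Nq_k$, $u\to Nu$, $s\to Ns$, together with the edge scaling $x=\alpha/(NR_*)$, $y=\beta/(NR_*)$, where $R_*=1-R$. After rescaling the $u$-integrand concentrates at the saddle $u_*=-R_*$ of $\mathcal{L}_1(u)=u+\log(u-R)$, while the $s$- and $q_k$-integrands concentrate at the saddle $R_*$ of $\mathcal{L}_2(q)=q-\ln(q+R)$; simultaneously the two hypergeometric factors become order one, ${}_0F_1(1;uy)\to{}_0F_1(1;u\beta/R_*)$ and ${}_0F_1(1;-xs)\to{}_0F_1(1;-\alpha s/R_*)$, i.e. Bessel $J$'s once $u$ or $s$ is negative.

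As in the earlier proofs I would treat the $t$-integrals by resolving $\prod_i(u+t_i)^{-1}$ through the partial-fraction identity \eqref{Xident1L}, using the symmetry of the $q_k$ to isolate a single distinguished variable, and evaluating the remaining $\triangle^2$-weighted Gaussian integrals by the monic-Hermite identity \eqref{MonHerm} and the Mehta normalization, with $\tilde{\mathcal{N}}_L$ replaced by its asymptotics \eqref{nlapprox}. This is the same machinery that turns the fluctuation integrals into Hermite polynomials in the proofs of Propositions \ref{ICPAsymptotics} and \ref{CPAsymptotics}.

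The heart of the matter is the steepest-descent analysis of the $u$-contour in the presence of the coupling $(s+u)^{-1}$ and the poles $u=-t_i$. Deforming $C$ through $u_*=-R_*$, as in \eqref{approx}, produces a saddle piece together with residue contributions at the poles $u=-s$ and $u=-q_i$ that are crossed (those with $s,q_i<R_*$). Writing $\tau=s/R_*$, the swept pole at $u=-s$ is active precisely for $\tau\in(0,1)$ and, since ${}_0F_1(1;-xs)\to J_0(2\sqrt{\alpha\tau})$ and ${}_0F_1(1;uy)|_{u=-s}\to J_0(2\sqrt{\beta\tau})$, it is this pole that is responsible for the integral $\int_0^1 d\tau$ and for the appearance of two Bessel factors sharing the common variable $\tau$. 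For $L=0$ there are no $t$-integrals, the residue at $u=-s$ is the whole story, and one obtains $\int_0^1 J_0(2\sqrt{\alpha\tau})J_0(2\sqrt{\beta\tau})\,d\tau$ directly.

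The main obstacle is the order raising for $L\ge1$. If one isolates the residue at $u=-s$ naively, the factor $\prod_i\frac{s-t_i}{u+t_i}$ collapses to $(-1)^L$ identically, the $t$-integral reduces to $(-1)^L\tilde{\mathcal{N}}_L$, and one is left with $(\beta/\alpha)^{L}\int_0^1 J_0(2\sqrt{\alpha\tau})J_0(2\sqrt{\beta\tau})\,d\tau$ --- the correct $\tau$-integral but the wrong Bessel order and the wrong power of $\beta/\alpha$. Hence the residues at $u=-q_i$ together with the saddle piece cannot be discarded: they must combine with the $u=-s$ term, through the Hermite/Rodrigues step that converted $I_0$ into $I_L$ in Proposition \ref{CPAsymptotics}, so as to replace each $J_0(2\sqrt{\,\cdot\,})$ by $J_L(2\sqrt{\,\cdot\,})$ and to turn $(\beta/\alpha)^L$ into $(\beta/\alpha)^{L/2}$. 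Establishing this combination --- equivalently, controlling the coalescence of the poles $u=-s,-q_i$ with the saddle $u_*=-R_*$, where $s+u\to0$ prevents independent saddle evaluation --- and verifying that everything assembles into \eqref{mainkernelformula} is the delicate step; the remaining bookkeeping of $N$-powers and signs is then routine.
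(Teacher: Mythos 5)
Your setup agrees with the paper's proof up to the point where the real work starts: the degenerate specialization, the rescalings $|z|^2=NR$, $u\to Nu$, $s\to Ns$, $t_k\to Nt_k$, the saddle points $u_*=-R_*$ of $\mathcal{L}_1$ and $R_*$ of $\mathcal{L}_2$, the use of \eqref{Xident1L} to isolate one distinguished $t$-variable, and the correct diagnosis that the residue at $u=-s$ alone yields $\left(\beta/\alpha\right)^L\int_0^1 J_0(2\sqrt{\alpha\tau})J_0(2\sqrt{\beta\tau})\,d\tau$, i.e.\ the wrong Bessel order. But the proposal stops exactly where the proof becomes nontrivial, and the mechanism you conjecture for the order raising is not the one that works. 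After \eqref{Xident1L} only two poles survive in the $u$-integrand, and the paper splits $\frac{s-t}{(u+t)(u+s)}=\frac{1}{u+t}-\frac{1}{u+s}$ so that the contour integral becomes $I_C(t)-I_C(s)$; keeping only the swept-pole residues (the saddle piece through $u_*$ is argued sub-leading and \emph{discarded}, not combined in) produces two truncated integrals $J_1$ (with $t\in[0,R_*]$) and $J_2$ (with $s\in[0,R_*]$). The Hermite identity \eqref{hermident} and the Rodrigues step then give: $J_2$ is precisely your ``naive'' term $\int_0^{R_*}{}_0F_1(1;-\eta s)\,{}_0F_1(1;-\xi s)\,ds$, while $J_1$ evaluates to the finite correction $\sum_{k=0}^{L-1}\frac{(-\xi)^k}{k!^2}\,{}_0F_1(1+k;-\xi R_*)\int_0^{R_*}(t-R_*)^k\,{}_0F_1(1;-\eta t)\,dt$. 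So, unlike the $I_0\to I_L$ conversion in Proposition \ref{CPAsymptotics}, the Rodrigues step here does \emph{not} replace each $J_0$ by $J_L$; it only generates Bessel functions of orders $0,\dots,L-1$ evaluated at the fixed endpoint $R_*$.

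The genuinely missing ingredient is therefore the closed-form identity \eqref{propkernelform} (Proposition \ref{propkernel}):
\[
\left(\tfrac{b}{a}\right)^{2L}\int_0^1 d\tau\,J_0(b\sqrt{\tau})\left(J_0(a\sqrt{\tau})-\sum_{k=0}^{L-1}(1-\tau)^k\tfrac{(a/2)^k}{k!}J_k(a)\right)=\left(\tfrac{b}{a}\right)^{L}\int_0^1 d\tau\,J_L(a\sqrt{\tau})J_L(b\sqrt{\tau}),
\]
with $a=2\sqrt{\alpha}$, $b=2\sqrt{\beta}$, which is proved separately by induction on $L$ using $\int_0^1 J_L(a\sqrt{s})J_L(b\sqrt{s})\,ds=2\,\frac{bJ_{L-1}(b)J_L(a)-aJ_{L-1}(a)J_L(b)}{a^2-b^2}$ and the three-term Bessel recurrence. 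Without this identity (or an equivalent resummation) the asymptotic analysis only delivers the ``$J_0$ minus corrections'' form, not the stated $J_L$-kernel with the prefactor $(\beta/\alpha)^{L/2}$; conceding that ``establishing this combination is the delicate step'' leaves precisely that step unproved.
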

\begin{proof}
The formula for the kernel in the degenerate case is given by expression \eqref{kernel} upon setting $\omega_1 = ... = \omega_N = |z|^2$. We first rescale variables in that equation $|z|^2 = NR$, $t_i \to N t_i$, $u \to Nu$, $s \to Ns$ and introduce a rescaled kernel
\begin{equation}
k(\xi,\eta) = \frac{K_N\left (x = \frac{\xi}{N},y = \frac{\eta}{N} \right )}{N^{L(N+L) + 1}}
\end{equation}
Taking
into account the explicit formula for the kernel $K_N(x,y)$ obtained in Theorem \ref{KernelAveragingTheorem} we
can represent $k(\xi,\eta)$ as
\begin{equation}
\label{kdef}
\begin{split}
& k(\xi,\eta) = \frac{(-1)^{N+L+1}e^{\frac{\xi - \eta}{N}} \left (\frac{\eta}{\xi} \right )^L}{\tilde{\mathcal{N}}_L} \frac{1}{2\pi i} \int\limits_{C}\frac{du\;{}_0F_1(1;u\eta)e^{-Nu}}{\left(u-R\right)^N} \\
&\times \int\limits_0^{\infty}\frac{dse^{-Ns}{}_0F_1(1;-\xi s)}{s+u}\left(s+R\right)^N \int\limits_{0}^{\infty}...\int\limits_{0}^{\infty}\prod_{k=1}^L
\left(R+t_k\right)^N \triangle^2\left(t\right)
\prod_{i=1}^L \frac{s-t_i}{u+t_i} e^{-Nt_i} dt_i,
\end{split}
\end{equation}
Similarly as in the proof of Proposition \ref{ICPAsymptotics}, the term $\prod_{i=1}^L \frac{1}{u+t_i}$ of the above formula is expressed by the identity \eqref{Xident1L}, the last integration variable is singled out and renamed $t_L=t$
\begin{equation}
\label{maink}
k(\xi,\eta) = \frac{L(-1)^{N}e^{\frac{\xi - \eta}{N}} \left (\frac{\eta}{\xi} \right )^L}{\tilde{\mathcal{N}}_L} \int\limits_0^\infty dt_1... \int\limits_0^\infty dt_{L-1} \int\limits_0^\infty ds \int\limits_0^\infty dt A_2, \int\limits_{C} du A_1
\end{equation}
where the integrands are equal to
\begin{equation}
\begin{split}
& A_1 = \frac{1}{2\pi i} e^{-N\mathcal{L}_1(u)} {}_0F_1(1;u\eta) \frac{s-t}{(u+t)(u+s)}, \\
& A_2 = e^{-N\mathcal{L}_2(t)-N\mathcal{L}_2(s)} \prod_{k=1}^{L-1} \left [ e^{-N\mathcal{L}_2(t_k)} (s-t_k) (t-t_k) \right ]  {}_0 F_1(1;-\xi s) \triangle^2(t_1...t_{L-1}),
\end{split}
\end{equation}
with notation $\mathcal{L}_1(x) = x + \log(x-R), \mathcal{L}_2(x) = x - \ln (R+x)$ introduced already in the proofs of Propositions \ref{ICPAsymptotics} and \ref{CPAsymptotics}. Both the order of integrations and grouping of integrands are selected with the view to facilitate the subsequent asymptotic analysis.
The first integral in equation \eqref{maink}
\begin{equation}
\int\limits_C du A_1 = I_C(t) - I_C(s)
\end{equation}
is, by the formula $\frac{s-t}{(u+t)(u+s)} = \frac{1}{u+t} - \frac{1}{u+s}$, equal to the difference between the values of the same contour integral $I_C$ evaluated at two different arguments. This fundamental object reads
\begin{equation*}
I_C(x) = \frac{1}{2\pi i} \int\limits_C du e^{-N\mathcal{L}_1(u)} \frac{{}_0F_1(1;u\eta)}{u+x}
\end{equation*}
and differs from the previously studied integral \eqref{contourint} only by the $N$-independent part. Then, following the same analysis as has been done in the proof of Proposition \ref{ICPAsymptotics}, the saddle-point method identifies the saddle  point at $u_* = -R_*$ where $R_* = 1-R$ and the leading-order approximation comprising two contributions is given by
\begin{equation*}
I_C(x) \sim \frac{1}{2\pi i} \int\limits_{C_{sp}(R)} du e^{-N\mathcal{L}_1(u)} \frac{{}_0F_1(1;u\eta)}{u+x} + (-1)^{N+1} \frac{e^{Nx}{}_0F_1(1;-x\eta)}{(x+R)^N}  \theta(R_*-x),
\end{equation*}
One can again argue that the first term yields eventually a sub-leading contribution, whereas $C_{sp}(R)$ denotes a contour encircling the pole at $R$ and passing through the saddle point $u_*$. Thus, the first integration yields
\begin{equation*}
\int\limits_C du A_1 \sim (-1)^{N+1} e^{N\mathcal{L}_2(t)}{}_0F_1(1;-t\eta)\theta(R_*-t) - (-1)^{N+1} e^{N\mathcal{L}_2(s)}{}_0F_1(1;-s\eta)\theta(R_*-s)
\end{equation*}
which in turn allows to represent the rescaled kernel \eqref{maink} as a sum of two parts
\begin{equation}
\label{kappa}
 k(\xi,\eta) \sim \frac{L}{\tilde{\mathcal{N}}_L} \left ( \frac{\eta}{\xi} \right )^L \left ( J_2 - J_1 \right ),
 \end{equation}
where we denoted
\begin{equation*}
\begin{split}
J_1 = \int\limits_0^\infty dt_1... \int\limits_0^\infty dt_{L-1} \int\limits_0^\infty ds \int\limits_0^{1-R} dt A_2^{(t)}, \qquad J_2 = \int\limits_0^\infty dt_1... \int\limits_0^\infty dt_{L-1} \int\limits_0^{1-R} ds \int\limits_0^\infty dt A_2^{(s)},
\end{split}
\end{equation*}
and rescaled the integrands as $A_2^{(t)} = A_2 e^{N\mathcal{L}_2(t)}{}_0F_1(1;-t\eta)$, $A_2^{(s)} = A_2 e^{N\mathcal{L}_2(s)}{}_0F_1(1;-s\eta)$.

\subsubsection{Asymptotics of $J_1$} We evaluate $J_1$ integral by the saddle-point procedure applied first to the $t_i$'s integrals by expanding $t_i = R_* + N^{-1/2} x_i$ around the corresponding saddle points $R_* = 1-R$:
\begin{equation}
\begin{split}
& J_1 \sim \frac{e^{-N(L-1)R_*}}{N^{\frac{L^2}{2} - \frac{1}{2}}} \int\limits_0^{R_*} dt ~{}_0F_1 \Big (1;-\eta t \Big ) \int\limits_{0}^\infty ds e^{-N\mathcal{L}_2(s)} {}_0 F_1 \Big (1,-\xi s \Big ) \\
& \times \int\limits_{-\infty}^\infty ... \int\limits_{-\infty}^\infty \prod_{i=1}^{L-1} e^{-\frac{x_i^2}{2}} dx_{i} \triangle^2(x_1,...,x_{L-1}) \prod_{k=1}^{L-1} \left (\sqrt{N}(s-R_*) - x_k \right ) \left (\sqrt{N}(t-R_*) - x_k \right ).
\end{split}
\end{equation}
The $L-1$ integrals can be then re-expressed in terms of the Hermite monic polynomials $h_n$ by the formula found in Ref. \cite{FyoStra2003a}:
\begin{equation}
\label{hermident}
\begin{split}
\int\limits_{-\infty}^\infty ... \int\limits_{-\infty}^\infty \prod_{i=1}^{L-1} e^{-\frac{x_i^2}{2}}dx_i \triangle^2(x_1...x_{L-1}) \prod_{i=1}^{L-1} (\lambda_1 - x_i)(\lambda_2 - x_i) = \frac{\prod\limits_{j=1}^{L-1} (j!\sqrt{2\pi})}{\lambda_2 - \lambda_1} \left | \begin{matrix} h_{L-1}(\lambda_1) & h_{L-1}(\lambda_2) \\ h_{L}(\lambda_1) & h_{L}(\lambda_2) \end{matrix} \right |.
\end{split}
\end{equation}
The resulting expression for $J_1$ reads
\begin{equation}
\label{eqq}
\begin{split}
J_1 \sim & \prod_{j=1}^{L-1} (j!\sqrt{2\pi}) \frac{e^{-N(L-1)R_*}}{N^{L^2/2}} \int\limits_0^{R_*} dt ~{}_0F_1 \Big (1;-\eta t \Big ) \int\limits_{0}^\infty ds e^{-N\mathcal{L}_2(s)}{}_0 F_1 \Big (1,-\xi s \Big ) \\
& \times \frac{1}{t-s} \left | \begin{matrix} h_{L-1} \Big (\sqrt{N}(s-R_*) \Big ) & h_{L-1}\Big (\sqrt{N}(t-R_*) \Big ) \\ h_{L} \Big (\sqrt{N}(s-R_*) \Big ) & h_{L}\Big (\sqrt{N}(t-R_*) \Big ) \end{matrix} \right |
\end{split}
\end{equation}
and contains two terms due to the determinant. The saddle-point approximation of the $s$ integral alone is performed around the corresponding saddle point $s = R_* + N^{-1/2} \sigma$ and with the use of Rodrigues formula for the Hermite monic polynomials:
\begin{align*}
& \int_0^\infty ds e^{-N\mathcal{L}_2 (s)} \frac{h_{n}\left (\sqrt{N}(s-R_*) \right ) {}_0 F_1 \Big (1, -\xi s \Big )}{t-s} \sim \frac{e^{-NR_*}}{\sqrt{N}} \int\limits_{-\infty}^\infty d\sigma e^{-\frac{\sigma^2}{2}} h_n(\sigma) \frac{{}_0 F_1 \Big (1, -\xi \left ( R_* + \frac{\sigma}{\sqrt{N}} \right ) \Big )}{t-R_*-\frac{\sigma}{\sqrt{N}}} = \nonumber\\
& = \frac{e^{-NR_*}}{\sqrt{N}} \int\limits_{-\infty}^\infty d\sigma e^{-\frac{\sigma^2}{2}} \frac{d^n}{d\sigma^n} \left [ \frac{{}_0 F_1 \Big (1, -\xi \left ( R_* + \frac{\sigma}{\sqrt{N}} \right ) \Big )}{t-R_*-\frac{\sigma}{\sqrt{N}}} \right ] \sim \frac{\sqrt{2\pi} (-1)^{n}e^{-NR_*}}{\sqrt{N}^{n+1}} \frac{d^n}{dR^n} \left [ \frac{{}_0 F_1 \Big (1, -\xi R_* \Big )}{t-R_*} \right ].
\end{align*}
To identify the leading contribution to $J_1$, it is safe to approximate the $t$-dependent Hermite polynomials by its highest power i.e. $\pi_n(\sqrt{N}(t-R_*)) \sim \sqrt{N}^n (t-R_*)^n$. By counting the powers of $N$, leading order contribution to $J_1$ is proportional to $\pi_{L-1} \Big (\sqrt{N}(s-R_*) \Big ) \pi_{L} \Big (\sqrt{N}(t-R_*) \Big )$ and the formula \eqref{eqq} reads
 \begin{align*}
J_1 & \sim (-1)^{L-1} \prod_{j=0}^{L-1} j! \frac{e^{-NLR_*} \sqrt{2\pi}^{L}}{N^{L^2/2}} \int\limits_0^{R_*} dt ~{}_0F_1 \Big (1;-\eta t \Big ) \frac{d^{L-1}}{dR^{L-1}} \left [ \frac{{}_0 F_1 \Big (1, -\xi R_* \Big )}{t-R_*} \right ] (t-R_*)^L .
 \end{align*}
The remaining integral can be represented as a sum
 \begin{equation*}
 \begin{split}
 & \int\limits_0^{R_*} dt ~{}_0F_1 \Big (1;-\eta t \Big ) \frac{d^{L-1}}{dR^{L-1}} \left [ \frac{{}_0 F_1 \Big (1, -\xi R_* \Big )}{t-R_*} \right ] (t-R_*)^L = \\
 & = (-1)^{L-1} (L-1)! \sum_{k=0}^{L-1} \frac{(-\xi)^k}{k!^2} {}_0F_1 \Big (1+k;-\xi R_* \Big ) \int\limits_0^{R_*} dt (t-R_*)^k {}_0F_1 \Big (1;-\eta t \Big ),
 \end{split}
 \end{equation*}
which can be vrified by using the identities $\frac{d^k}{dR^k} {}_0 F_1 \Big (1, -\xi R_* \Big ) = \frac{\xi^k}{k!} {}_0 F_1 \Big (1+k, -\xi R_* \Big )$ and $\frac{d^{L-1-k}}{dR^{L-1-k}} \frac{1}{t-R_*} = (-1)^{L-1-k} \frac{(L-1-k)!}{(t-R_*)^{L-k}}$.
Finally the formula for $J_1$ takes the form
\begin{equation}
\label{J1fin}
J_1 \sim = \prod_{j=0}^{L-1} j! \frac{\sqrt{2\pi}^{L} (L-1)! e^{-NLR_*}}{N^{L^2/2}}  \sum_{k=0}^{L-1} \frac{(-\xi)^k}{k!^2} {}_0F_1 \Big (1+k;-\xi R_* \Big ) \int\limits_0^{R_*} dt (t-1+R)^k {}_0F_1 \Big (1;-\eta t \Big ).
\end{equation}
\subsubsection{Asymptotics of $J_2$}
The procedure of approximating the term $J_2$ in equation \eqref{kappa} is very similar to just discussed. First, applying the saddle-point method to $t_i$ integrals and using formula \eqref{hermident} for the Hermite monic polynomials gives
\begin{equation}
\begin{split}
& J_2 \sim \prod_{j=1}^{L-1} (j!\sqrt{2\pi}) \frac{e^{-N(L-1)R_*}}{N^{L^2/2}} \int\limits_0^{R_*} ds~ {}_0 F_1 (1;-\eta s) {}_0 F_1 (1;-\xi s) \int\limits_{0}^\infty dt e^{-N\mathcal{L}_2(t)}  \frac{1}{t-s} \nonumber \\
& \times\Big [ h_{L-1} \Big (\sqrt{N}(s-R_*) \Big ) h_{L}\Big (\sqrt{N}(t-R_*) \Big ) - h_{L-1}\Big (\sqrt{N}(t-R_*)\Big ) h_{L}\Big (\sqrt{N}(s-R_*)\Big ) \Big ].
\end{split}
\end{equation}
The $t$ integral is in turn found through the saddle-point approximation by expanding $t=R_* + N^{-1/2} \xi$:
\begin{align*}
& \int\limits_{0}^\infty dt e^{-N\mathcal{L}_2(t)}  \frac{1}{t-s} h_{n}\left (\sqrt{N}(t-1+R) \right ) \sim \frac{(-1)^n e^{-NR_*} \sqrt{2\pi} }{\sqrt{N}^{n+1}} \frac{n!}{(R_* - s)^{n+1}} .
\end{align*}
Similarly as in the computation of $J_1$, the leading order term is proportional to $\pi_{L-1}\Big (\sqrt{N}(t-R_*)\Big ) \pi_{L}\Big (\sqrt{N}(s-R_*)\Big )$ which gives
\begin{equation}
\label{J2fin}
J_2 \sim \prod\limits_{j=0}^{L-1} j! \frac{ \sqrt{2\pi}^{L} (L-1)! e^{-NLR_*} }{N^{L^2/2}}\int\limits_0^{R_*} ds~ {}_0 F_1 (1;-\eta s) {}_0 F_1 (1;-\xi s).
\end{equation}
We substitute both formulas \eqref{J1fin} and \eqref{J2fin} for $J_1$ and $J_2$ respectively into the equation \eqref{kappa}, and use the asymptotics of normalization constant $\tilde{\mathcal{N}}_L$ given by formula \eqref{nlapprox}. After reintroducing the renormalized kernel defined in the equation \eqref{kdef} and applying straightforward manipulations, we arrive at
 \begin{equation*}
 \begin{split}
& \frac{1}{N} K_N\left (x = \frac{\xi}{N},y = \frac{\eta}{N} \right ) \sim \\
& \sim \left ( \frac{\eta}{\xi} \right )^L \int\limits_0^{R_*} ds~ {}_0 F_1 (1;-\eta s) \left ( {}_0 F_1 (1;-\xi s) - \sum_{k=0}^{L-1} \frac{(-\xi)^k}{k!^2} {}_0F_1 \Big (1+k;-\xi R_* \Big ) (s-R_*)^k \right ).
 \end{split}
 \end{equation*}
Next we define a 'macroscopic' kernel
\begin{equation*}
\kappa_L(\alpha,\beta) \equiv \lim_{N\to \infty} \frac{1}{NR_*} K_N \left (x = \frac{\alpha}{NR_*},y = \frac{\beta}{NR_*}\right ),
\end{equation*}
which is found by changing the variables as $\xi = \frac{\alpha}{R_*}, \eta = \frac{\beta}{R_*}$, $s = R_*\tau$ and introducing Bessel functions via the identity ${}_0 F_1 (1+k;-\alpha) = \frac{k!}{\alpha^{k/2}} J_k(2\sqrt{\alpha})$:
\begin{equation*}
\kappa_L(\alpha,\beta) = \left ( \frac{\beta}{\alpha} \right )^L \int\limits_0^{1} d\tau~ J_0(2\sqrt{\beta \tau} ) \left ( J_0(2\sqrt{\alpha\tau}) - \sum_{k=0}^{L-1}  (1-\tau)^k \frac{\alpha^{k/2}}{k!} J_k(2\sqrt{\alpha}) \right ).
\end{equation*}
 One then arrives at the final expression \eqref{mainkernelformula} by using the following Proposition for $a = 2\sqrt{\alpha}$ and $b = 2\sqrt{\beta}$.
\end{proof}
\begin{prop}
\label{propkernel}
For $a,b>0$  the following identity holds true
\begin{equation}
\label{propkernelform}
\begin{split}
&\left ( \frac{b}{a} \right )^{2L} \int\limits_0^{1} d\tau~ J_0(b\sqrt{\tau} ) \left ( J_0(a\sqrt{\tau}) - \sum_{k=0}^{L-1}  (1-\tau)^k \frac{(a/2)^{k}}{k!} J_k(a) \right ) = \left ( \frac{b}{a} \right )^{L} \int_0^1 d\tau J_L(a\sqrt{\tau}) J_L(b\sqrt{\tau}).
\end{split}
\end{equation}
\end{prop}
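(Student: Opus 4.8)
The plan is to first remove the common prefactor. Dividing \eqref{propkernelform} by $(b/a)^L$ reduces the claim to the equivalent identity
\[
\int_0^1 J_0(b\sqrt\tau)\left[J_0(a\sqrt\tau)-\sum_{k=0}^{L-1}(1-\tau)^k\frac{(a/2)^k}{k!}J_k(a)\right]d\tau=\left(\frac{a}{b}\right)^L\int_0^1 J_L(a\sqrt\tau)\,J_L(b\sqrt\tau)\,d\tau,
\]
whose two sides I denote $\mathrm{LHS}_L$ and $\mathrm{RHS}_L=(a/b)^L R_L$ with $R_L:=\int_0^1 J_L(a\sqrt\tau)J_L(b\sqrt\tau)\,d\tau$. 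I would prove this by a telescoping argument in $L$: the base case $L=0$ is immediate (the subtracted sum is empty and both sides coincide), so it suffices to show that the two increments $\mathrm{LHS}_{L+1}-\mathrm{LHS}_L$ and $\mathrm{RHS}_{L+1}-\mathrm{RHS}_L$ agree, as identities in $a,b$, for every $L\geq 0$.

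The left increment is immediate: passing from $L$ to $L+1$ only appends the $k=L$ term to the subtracted sum, so
\[
\mathrm{LHS}_{L+1}-\mathrm{LHS}_L=-\frac{(a/2)^L}{L!}J_L(a)\int_0^1 J_0(b\sqrt\tau)(1-\tau)^L\,d\tau.
\]
Substituting $\tau=r^2$ and invoking the Sonine finite integral $\int_0^1 J_0(br)(1-r^2)^L r\,dr=2^L L!\,b^{-L-1}J_{L+1}(b)$ collapses this to $\mathrm{LHS}_{L+1}-\mathrm{LHS}_L=-2a^L b^{-L-1}J_L(a)J_{L+1}(b)$.

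For the right increment I would first bring $R_L$ to closed form. Substituting $\tau=r^2$ and applying Lommel's integral, then simplifying the boundary term with $J_\nu'(x)=\tfrac{\nu}{x}J_\nu(x)-J_{\nu+1}(x)$, yields
\[
R_L=\frac{2\bigl[a\,J_{L+1}(a)J_L(b)-b\,J_L(a)J_{L+1}(b)\bigr]}{a^2-b^2}\qquad(a\neq b).
\]
Forming $\mathrm{RHS}_{L+1}-\mathrm{RHS}_L=(a/b)^{L+1}R_{L+1}-(a/b)^L R_L$, factoring out the common $2a^L b^{-L-1}(a^2-b^2)^{-1}$ and cancelling the $J_L(a)J_{L+1}(b)$ terms against the target value $-2a^L b^{-L-1}J_L(a)J_{L+1}(b)$, the required equality collapses to
\[
a\,J_{L+1}(b)\bigl[J_L(a)+J_{L+2}(a)\bigr]=b\,J_{L+1}(a)\bigl[J_L(b)+J_{L+2}(b)\bigr],
\]
which is precisely the three-term recurrence $J_L(x)+J_{L+2}(x)=\frac{2(L+1)}{x}J_{L+1}(x)$ evaluated at $x=a$ and $x=b$: both sides reduce to $2(L+1)J_{L+1}(a)J_{L+1}(b)$. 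Since the increments match and the base case holds, telescoping gives $\mathrm{LHS}_L=\mathrm{RHS}_L$ for all $L$.

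The genuine work is confined to the algebraic matching of the two increments; the three special-function inputs—Sonine's integral, Lommel's integral, and the Bessel three-term recurrence—are classical. The exceptional locus $a=b$, where the closed form for $R_L$ is only formally singular, is dealt with by continuity, both $R_L$ and the target being analytic in $(a,b)$.
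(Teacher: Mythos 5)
Your proof is correct and is essentially the paper's own argument: both proceed by induction on $L$ with the trivial base case $L=0$, use the Sonine finite integral to evaluate the $(1-\tau)^k$ term, the Lommel integral to put $\int_0^1 J_L(a\sqrt\tau)J_L(b\sqrt\tau)\,d\tau$ in closed form, and the three-term Bessel recurrence to close the inductive step. The only difference is cosmetic — you phrase the step as matching increments after rescaling by $(b/a)^{2L}$, while the paper carries the prefactor and computes $\kappa_{L+1}$ directly from $\kappa_L$.
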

\begin{proof} Proof of this formula is provided in the Appendix \ref{proof}.
\end{proof}
\section{Acknowledgements}
The research at King's College London was supported by  EPSRC grant  EP/N009436/1 "The many faces of random characteristic polynomials".
JG acknowledges partial support from the National Science Centre, Poland under an agreement 2015/19/N/ST1/00878.

\appendix
\section{Equivalence between equation \eqref{DL1} and Rem. \ref{DInverseCPRelation}}
\label{AppL1}
Equality of the formula \eqref{DL1} and the $L=1$ case of Remark \ref{DInverseCPRelation} follows by the following Proposition.
\begin{prop}
\label{L1Equivalence}
For $p>0$ the following relation holds true
\begin{equation}
\label{L1EquivalenceFormula}
\begin{split}
& \frac{(-1)^{N-1}}{2\pi i} \int\limits_{0}^{\infty}du\frac{e^{-u}u}{p+u}
\int\limits_{C}\frac{e^{-v}{}_0 F_1 \left (1;vu \right )dv}{\left(v-|z|^2\right)^N}
\int\limits_0^{\infty} \frac{\left(q+|z|^2\right)^N}{q+v}
e^{-q}dq = \\
& = \frac{1}{(N-1)!} e^{|z|^2} \int\limits_0^\infty \frac{e^{-pt}}{t(1+t)} e^{-\frac{t|z|^2}{1+t}} \left ( \frac{t}{1+t} \right )^{N} \left ( \Gamma(N+1,|z|^2) - \frac{t}{1+t} |z|^2 \Gamma(N,|z|^2) \right ).
\end{split}
\end{equation}
\end{prop}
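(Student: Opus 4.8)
The plan is to adapt the strategy of Proposition~\ref{L0Equivalence}; the only genuinely new ingredient is the inner integral $\int_0^\infty \frac{(q+|z|^2)^N}{q+v}\,e^{-q}\,dq$ carrying the $L=1$ structure. First I would insert the elementary representation $\frac{1}{p+u}=\int_0^\infty dt\, e^{-t(p+u)}$ used there, interchange the order of integration (justified by absolute convergence), and carry out the $u$-integral. Because of the extra factor $u$ in the numerator, the relevant specialization of \eqref{Integral} is now $\nu=0$, $n=1$, $\beta=\sqrt{v}$, $\alpha=1+t$, which returns the first Laguerre polynomial $L_1$ instead of a pure exponential. Writing $\tau=t/(1+t)$ and combining the emerging factor $e^{v/(1+t)}$ with $e^{-v}$, the $v$-dependent part of the integrand collapses to $\frac{1}{(1+t)^2}\,e^{-v\tau}L_1\!\left(-v(1-\tau)\right)$, matching the $L=1$ instance of \eqref{Gen3mynorm}.

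The heart of the argument is the $q$-integral. I would use the polynomial division
\[
\frac{(q+|z|^2)^N}{q+v}=\sum_{j=0}^{N-1}(q+|z|^2)^j\,(|z|^2-v)^{N-1-j}+\frac{(|z|^2-v)^N}{q+v},
\]
which follows from $a^N-b^N=(a-b)\sum_{j=0}^{N-1}a^jb^{N-1-j}$ applied with $a=q+|z|^2$ and $b=|z|^2-v$. Integrating term by term, using $\int_0^\infty (q+|z|^2)^j e^{-q}\,dq=e^{|z|^2}\,\Gamma(j+1,|z|^2)$ and $\int_0^\infty \frac{e^{-q}}{q+v}\,dq=e^{v}E_1(v)$ with $E_1(v)=\int_v^\infty \frac{e^{-s}}{s}\,ds$, splits the $q$-integral into a polynomial of degree $N-1$ in $v$ with incomplete-Gamma coefficients, plus a remainder proportional to $(|z|^2-v)^N e^{v}E_1(v)$.

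The decisive observation is that the remainder makes no contribution to the contour integral. After multiplication by $(v-|z|^2)^{-N}$ the pole at $v=|z|^2$ cancels identically, leaving $\frac{(-1)^N}{(1+t)^2}\bigl(1+v(1-\tau)\bigr)e^{v(1-\tau)}E_1(v)$, which is analytic inside and on $C$ because $C$ encircles only the point $|z|^2>0$ and avoids the branch cut of $E_1$ along the negative real axis; its integral therefore vanishes by Cauchy's theorem. Only the polynomial part survives, and the $v$-integral reduces to a finite sum of residues at $v=|z|^2$, each obtained by applying the Leibniz rule to $e^{-v\tau}L_1\!\left(-v(1-\tau)\right)$, producing (up to the overall factor $e^{|z|^2}$) the expression $\sum_{j=0}^{N-1}\frac{(-1)^{N-1-j}}{j!}\,\Gamma(j+1,|z|^2)\,\frac{d^j}{dv^j}\bigl[e^{-v\tau}(1+v(1-\tau))\bigr]_{v=|z|^2}$.

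The last and most laborious step, which I expect to be the main obstacle, is to collapse this finite double sum into the compact right-hand side. The terms split into two families according to whether the Leibniz rule differentiates the exponential or the linear factor; after simplifying the signs (the internal $(-1)^{N-1}$ and $(-1)^N$ combine with the prefactor $(-1)^{N-1}$ of the left-hand side and the Jacobian $\frac{1}{t(1+t)}=\frac{(1-\tau)^2}{\tau}$ reappears) one is reduced to verifying the purely algebraic identity $\bigl(1+|z|^2(1-\tau)\bigr)\sum_{j=0}^{N-1}\frac{\tau^j}{j!}\Gamma(j+1,|z|^2)-(1-\tau)\sum_{j=0}^{N-2}\frac{\tau^j}{j!}\Gamma(j+2,|z|^2)=\frac{\tau^{N-1}}{(N-1)!}\bigl(\Gamma(N+1,|z|^2)-\tau|z|^2\Gamma(N,|z|^2)\bigr)$, which one checks directly for small $N$ and establishes in general by repeated use of the recurrence $\Gamma(j+1,x)=j\,\Gamma(j,x)+x^j e^{-x}$ to telescope the sums so that only the boundary contributions $\Gamma(N+1,|z|^2)$ and $|z|^2\,\Gamma(N,|z|^2)$ remain. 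Reverting the change of variables from $\tau$ back to $t$ then matches the stated identity; this is precisely the ``similar but substantially longer calculation'' referred to earlier, and it is entirely routine once the vanishing of the $E_1$-remainder has reduced everything to residues.
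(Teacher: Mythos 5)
Your proposal is correct, and the middle portion takes a genuinely different route from the paper's Appendix A. Both arguments share the opening move (Laplace-transforming $\frac{1}{p+u}$, doing the $u$-integral via \eqref{Integral} with $n=1$ to produce $L_1$, and passing to the variable $\tau=t/(1+t)$), and both terminate in an equivalent finite identity among incomplete Gamma functions. The difference is the order in which the $q$- and $v$-integrations are resolved. The paper performs the $v$-residue first, writing the order-$N$ pole as $\frac{1}{(N-1)!}\frac{d^{N-1}}{d\rho^{N-1}}$ acting on $\frac{e^{-\rho\tau}(1+\rho(1-\tau))}{q+\rho}$ (formula \eqref{G1a}); after Leibniz, the derivatives of $\frac{1}{q+\rho}$ supply powers $(q+\rho)^{-l-1}$ that cancel against $(q+\rho)^N$, so every $q$-integral is an elementary moment of $e^{-q}$ and no special function beyond $\Gamma(n,x)$ ever appears. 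You instead do the $q$-integral first via the polynomial division $\frac{(q+|z|^2)^N}{q+v}=\sum_j(q+|z|^2)^j(|z|^2-v)^{N-1-j}+\frac{(|z|^2-v)^N}{q+v}$, which forces the exponential integral $E_1(v)$ into the remainder; your key observation that $(|z|^2-v)^N$ cancels the pole so that the $E_1$-piece is analytic inside $C$ (valid since $|z|^2>0$ and $C$ avoids the cut $(-\infty,0]$) and hence integrates to zero is correct and is the step the paper's ordering sidesteps entirely. Your final identity
\begin{equation*}
\bigl(1+\rho(1-\tau)\bigr)\sum_{j=0}^{N-1}\tfrac{\tau^j}{j!}\Gamma(j+1,\rho)-(1-\tau)\sum_{j=0}^{N-2}\tfrac{\tau^j}{j!}\Gamma(j+2,\rho)=\tfrac{\tau^{N-1}}{(N-1)!}\bigl(\Gamma(N+1,\rho)-\tau\rho\,\Gamma(N,\rho)\bigr)
\end{equation*}
is true and does follow by induction on $N$ from the recurrence $\Gamma(n+1,x)=n\Gamma(n,x)+x^n e^{-x}$; it is the exact counterpart of the paper's observation in \eqref{G1d} that the coefficients of $\tau^k$ vanish for $k\le N-2$. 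What your route buys is a cleaner conceptual reason (Cauchy's theorem) for why only finitely many terms survive; what the paper's route buys is freedom from any analyticity discussion of $E_1$ and a derivation that stays entirely within polynomial and incomplete-Gamma manipulations.
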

\begin{proof}
We follow the same approach as in the proof of Prop. \ref{L0Equivalence}. We first bring the l.h.s. to the following form, see Prop. \ref{mynorm},
\begin{equation}
\label{Eu5}
\text{l.h.s.} = \int_0^{\infty} \frac{dt}{(1+t)^2} e^{-tp}\, \mathcal{G}^{(2)}_N\left(\rho,\tau\right),
\end{equation}
where $\tau = \frac{t}{1+t}, \rho = |z|^2$ and (cf. \ref{G1amynorm})
\begin{equation}
\label{G1a}
\mathcal{G}^{(2)}_N\left(\rho,\tau\right) = \frac{(-1)^{N-1}}{(N-1)!}\int_0^{\infty}\,dq\,(q+\rho)^Ne^{-q} \frac{d^{N-1}}{d\rho^{N-1}} \left [ \frac{e^{-\rho\tau}(1+\rho(1-\tau))}{q+\rho} \right ].
\end{equation}
Now we use the Leibniz formula:
\begin{equation}
\frac{d^{N-1}}{d\rho^{N-1}}\left[ e^{-\rho\,\tau}\frac{1+\rho(1-\tau)}{q+\rho}\right] = (N-1)!\sum_{l=0}^{N-1}\frac{(-1)^l}{(N-1-l)!}\frac{1}{(q+\rho)^{l+1}} \frac{d^{N-1-l}}{d\rho^{N-1-l}}\Big [ e^{-\rho\tau}(1+\rho(1-\tau))\Big ],
\end{equation}
which allows to perform the integration over $q$ in formula (\ref{G1a}) explicitly:
\[
\int\limits_0^{\infty}\,dq\,(q+\rho)^{N-l-1}e^{-q}=(N-l-1)!\sum_{m=0}^{N-l-1} \frac{\rho^m}{m!},
\]
and get, after introducing $k=N-l-1$, that
\begin{equation}\label{G1b}
\mathcal{G}^{(2)}_N\left(\rho,\tau\right)=\sum_{k=0}^{N-1}(-1)^k\left(\sum_{m=0}^{k} \frac{\rho^m}{m!}\right) \frac{d^{k}}{d\rho^{k}}\Big [ e^{-\rho\tau}(1+\rho(1-\tau))\Big ].
\end{equation}
By using the Leibniz formula again
\begin{equation}
\frac{d^{k}}{d\rho^{k}}\left(e^{-\rho\tau}(1+\rho(1-\tau))\right)=(-1)^k e^{-\rho\tau}\left[\tau^k(1+\rho(1-\tau))-k\tau^{k-1}(1-\tau)\right],
\end{equation}
we find
\begin{equation}\label{G1c}
\mathcal{G}^{(2)}_N\left(\rho,\tau\right)=e^{-\rho\tau}\sum_{k=0}^{N-1}\left[-\tau^{k+1}\rho+\tau^k(k+1+\rho)-k\tau^{k-1}\right]\left(\sum_{m=0}^{k} \frac{\rho^m}{m!}\right),
\end{equation}
which can be expanded in powers of $\tau$ as
\begin{equation}\label{G1d}
\begin{split}
  \mathcal{G}^{(2)}_N\left(\rho,\tau\right)= e^{-\rho\tau}& \left\{ -\tau^{N}\rho \sum_{m=0}^{N-1} \frac{\rho^m}{m!}+\tau^{N-1}\left(N \sum_{m=0}^{N-1} \frac{\rho^m}{m!}+\frac{\rho^N}{(N-1)!}\right) + \right. \\
 & \left. +\sum_{l=2}^N \tau^{N-l}\left[-(N-l+1)\sum_{m=0}^{N-l+1} \frac{\rho^m}{m!}+(N-l+1+\rho)\sum_{m=0}^{N-l} \frac{\rho^m}{m!}-\rho\sum_{m=0}^{N-l-1} \frac{\rho^m}{m!}\right]\right\}.
 \end{split}
\end{equation}
It is easy to see that coefficients of $\tau^k$ for $k\leq N-2$ are zero, hence we arrive at
\begin{equation}\label{G1f}
\mathcal{G}^{(2)}_N\left(\rho,\tau\right)=e^{-\rho\tau}\tau^{N-1}\left[-\tau\rho \sum_{m=0}^{N-1} \frac{\rho^m}{m!}+ N \sum_{m=0}^{N} \frac{\rho^m}{m!}\right].
\end{equation}
We recall the definition of the incomplete $\Gamma-$function $\Gamma(n,x) = (n-1)! e^{-x} \sum\limits_{m=0}^{n-1}  \frac{x^m}{m!}$ and find
\begin{equation}\label{G1ffin}
 \mathcal{G}^{(2)}_N\left(\rho,\tau\right)= \frac{e^{\rho(1-\tau)}\tau^{N-1}}{(N-1)!}\Big [ -\tau \rho \Gamma(N,\rho) + \Gamma(N+1,\rho) \Big ].
\end{equation}
Finally, by recalling that $\tau=\frac{t}{1+t}$ and $\rho=|z|^2$ we see that $\mathcal{G}^{(2)}_N\left(|z|^2,\frac{t}{1+t}\right)$ substituted to (\ref{Eu5}) coincides with (\ref{L1EquivalenceFormula}).
\end{proof}

\section{Alternative proof of Proposition \ref{InverseCPAveragingTheorem} bypassing use of (bi-)orthogonal polynomials}
We provide an alternative proof of Proposition \ref{InverseCPAveragingTheorem} without using the bi-orthogonal polynomials, relying instead upon the following
\label{ICPalternative}
\begin{prop}
\label{alternativeprop}
The integral
\begin{equation} \label{mainingred}
F^{(L)}_K(\epsilon_1,...,\epsilon_K) = \int\limits_0^{\infty} ... \int\limits_0^{\infty} \,
\triangle(y_1,...,y_K)\, \det{\Big ({}_0 F_1 \left(1;\epsilon_i y_j\right)\Big )}_{i,j=1}^K \prod_{i=1}^K y_i^L e^{-y_i} dy_i,
\end{equation}
can be rewritten as
\begin{equation} \label{mainingredfin}
F^{(L)}_K(\epsilon_1,...,\epsilon_K)= \frac{K! e^{\sum_{i=1}^K \epsilon_i}}{(-1)^{KL + \frac{K(K-1)}{2}} } \,
\det\left(\begin{array}{cccc}
\pi_L(-\epsilon_1)& \pi_L(-\epsilon_2)& \ldots & \pi_L(-\epsilon_K) \\
\pi_{L+1}(-\epsilon_1)& \pi_{L+1}(-\epsilon_2)& \ldots &  \pi_{L+1}(-\epsilon_K)\\
\vdots & \vdots & \ddots & \vdots  \\
 \pi_{L+K-1}(-\epsilon_1) & \pi_{L+K-1}(-\epsilon_2) & \ldots &  \pi_{L+K-1}(-\epsilon_K)
 \end{array}\right),
\end{equation}
where $\pi_n(x) = (-1)^n n! L_n(x)$ denote the Laguerre monic polynomials.
\end{prop}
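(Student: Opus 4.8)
The plan is to recognize the integrand of \eqref{mainingred} as a product of two determinants against a common weight, and then to collapse the $K$-fold integral to a single determinant by means of the Andr\'eief (Heine--de Bruijn) identity. First I would write the Vandermonde factor as $\triangle(y_1,\ldots,y_K)=\det\left(y_j^{\,i-1}\right)_{i,j=1}^K$, so that the full integrand takes the form $\det\left(y_j^{\,i-1}\right)\det\left({}_0F_1(1;\epsilon_i y_j)\right)$ times the weight $\prod_i y_i^L e^{-y_i}$. Both determinants carry the integration variables $y_j$ as their column index, which is precisely the structure demanded by Andr\'eief's identity.

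Applying that identity converts \eqref{mainingred} into
\begin{equation*}
F^{(L)}_K(\epsilon_1,\ldots,\epsilon_K) = K!\,\det\left(\int_0^\infty y^{\,i-1}\,{}_0F_1(1;\epsilon_j y)\,y^L e^{-y}\,dy\right)_{i,j=1}^K,
\end{equation*}
so the remaining task is just to evaluate the scalar entries. These are exactly the one-dimensional integrals already computed in \eqref{gij}: invoking formula \eqref{Integral} with $\nu=0$, $n=i+L-1$, $\alpha=1$ and $\beta=\sqrt{\epsilon_j}$ gives $\int_0^\infty y^{\,i+L-1}e^{-y}\,{}_0F_1(1;\epsilon_j y)\,dy = (i+L-1)!\,e^{\epsilon_j}\,L_{i+L-1}(-\epsilon_j)$.

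Finally I would pass to the monic Laguerre polynomials through $n!\,L_n(x)=(-1)^n\pi_n(x)$, so that the $(i,j)$ entry becomes $(-1)^{i+L-1}e^{\epsilon_j}\pi_{i+L-1}(-\epsilon_j)$. Extracting the column factor $e^{\epsilon_j}$ from each column $j$ produces the overall factor $e^{\sum_j \epsilon_j}$, while extracting the sign $(-1)^{i+L-1}$ from each row $i$ yields $(-1)^{\sum_{i=1}^K(i+L-1)}$. This reproduces exactly the determinant and prefactor in \eqref{mainingredfin}. There is no genuine analytic obstacle here, since both the Andr\'eief step and the Laguerre integral are standard; the only point requiring care is the sign bookkeeping, namely checking that $\sum_{i=1}^K(i+L-1)=KL+\tfrac{K(K-1)}{2}$ and that the resulting sign, being its own reciprocal, matches the form $1/(-1)^{KL+K(K-1)/2}$ stated in the Proposition.
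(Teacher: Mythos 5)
Your proposal is correct and is essentially the paper's own argument: the paper's use of permutation symmetry followed by expansion of the Vandermonde determinant over permutations is precisely the standard derivation of the Andr\'eief identity that you invoke by name, after which both proofs evaluate the same one-dimensional integrals via \eqref{Integral} and recast the entries as monic Laguerre polynomials. Your sign bookkeeping, $\sum_{i=1}^K(i+L-1)=KL+\tfrac{K(K-1)}{2}$, is also correct.
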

\begin{proof}
The proof uses the permutation symmetry of the integral \eqref{mainingred}
\begin{equation*} \label{mainingredA}
\begin{split}
F^{(l)}_K(\epsilon_1,\ldots,\epsilon_K) & = K!\int\limits_0^{\infty} ... \int\limits_0^{\infty} \,
\triangle(y_1,...,y_K)\,  \prod_{i=1}^K {}_0 F_1 \left(1;\epsilon_i y_i\right) y_i^L e^{-y_i} dy_i = \\
& = K!\sum_{\alpha_1,\ldots,\alpha_K}(-1)^{\sigma_{\alpha}}
\int\limits_0^{\infty} dy_1 e^{-y_1} y_1^{L+\alpha_1} {}_0 F_1 \left(1;\epsilon_1 y_1\right) ... \int\limits_0^{\infty}dy_K e^{-y_K} y_K^{L+\alpha_K} {}_0 F_1 \left(1;\epsilon_K y_K\right),
\end{split}
\end{equation*}
where summation goes over all $K!$ permutations $\alpha_1,\ldots,\alpha_K$ of the set $0,1,\ldots,K-1$. We apply the identity \eqref{Integral} in each of the above integrals and obtain the formula \eqref{mainingredfin}.
\end{proof}

To arrive at the formula \eqref{InverseCPAveragingForm} for the average of inverse characteristic polynomial, we start from an explicit average computed with respect to the joint probability density given by the equation \eqref{Density1}:
\begin{equation*}
\E \left ( \prod_{i=1}^N \frac{1}{y-x_i}\right ) = \frac{1}{\mathcal{N}_L} \int\limits_0^\infty ... \int\limits_0^\infty \triangle(x_1,...,x_N) \det{\Big ({}_0 F_1 \left(1;\omega_i x_j\right)\Big )}_{i,j=1}^N  \prod_{i=1}^N \frac{1}{y-x_i} x_i^L e^{-x_i} dx_i.
\end{equation*}
where $\mathcal{N}_L = F_N^{(L)}(\omega_1,...,\omega_N)$ is expressed in terms of formula \eqref{mainingred}.
We utilize the permutation symmetry to simplify one of the determinants and apply the identity \eqref{Xident1L} to the product $\prod_{i=1}^N \frac{1}{y-x_i}$ which results in
\begin{equation*}
\begin{split}
\E \left ( \prod_{i=1}^N \frac{1}{y-x_i}\right ) = & \frac{N!}{\mathcal{N}_L} \sum_{i=1}^N \int\limits_0^\infty dx_i \frac{1}{y-x_i} {}_0 F_1 \left(1;\omega_i x_i\right)  x_i^L e^{-x_i}\\
& \times \int\limits_0^\infty ... \int\limits_0^\infty \frac{\triangle(x_1,...,x_N)}{\prod\limits_{\substack{j=1\\j\neq i}}^N (x_i - x_j)} \prod_{\substack{j=1\\ j\neq i}}^N {}_0 F_1 \left(1;\omega_j x_j\right)  x_j^L e^{-x_j} dx_j.
\end{split}
\end{equation*}
Since $\frac{\triangle(x_1,\ldots,x_N)}{\prod_{j\ne i}(x_i-x_j)}=(-1)^{N+i}\Delta(x_1,\ldots,x_{i-1},x_{i+1},\ldots, x_N)$, the $N-1$ integrals of previous formula are expressible in terms of \eqref{mainingred}, we find with renaming $x_i =u$
\begin{equation}
\label{efin}
\begin{split}
\E \left ( \prod_{i=1}^N \frac{1}{y-x_i}\right ) = & N \sum_{i=1}^N (-1)^{N+i} \int\limits_0^\infty du \frac{{}_0 F_1 \left(1;\omega_i u\right)  u^L e^{-u}}{y-u} \frac{F_{N-1}^{(L)}(\omega_1,...\omega_{i-1},\omega_{i+1},...,\omega_N)}{F_N^{(L)}(\omega_1,...,\omega_N)}.
\end{split}
\end{equation}
By using the Proposition \ref{alternativeprop}, the ratio of $F$'s is given by
\begin{equation}
\begin{split}
& \frac{F_{N-1}^{(L)}(\omega_1,...\omega_{i-1},\omega_{i+1},...,\omega_N)}{F_N^{(L)}(\omega_1,...,\omega_N)} = \\
& = \frac{(-1)^{L+N-1}e^{-\omega_i}}{N} \frac{\left|\begin{array}{cccccc} \pi_L(-\omega_1)& \ldots & \pi_L(-\omega_{i-1})& \pi_L(-\omega_{i+1})& \ldots & \pi_L(-\omega_{N})  \\
\pi_{L+1}(-\omega_1)& \ldots & \pi_{L+1}(-\omega_{i-1}) &  \pi_{L+1}(-\omega_{i+1})&  \ldots & \pi_{L+1}(-\omega_{N})\\
\hdots & \ddots & \hdots & \hdots & \ddots & \hdots \\
\pi_{L+N-2}(-\omega_1) & \ldots &  \pi_{L+N-2}(-\omega_{i-1}) & \pi_{L+N-2}(-\omega_{i-1})& \ldots &  \pi_{L+N-2}(-\omega_N)
 \end{array}\right|}
 {\left|\begin{array}{cccc} \pi_L(-\omega_1)& \pi_L(-\epsilon_2)& \ldots & \pi_L(-\omega_N) \\ \pi_{L+1}(-\omega_1)& \pi_{L+1}(-\omega_2)& \ldots &  \pi_{L+1}(-\omega_N)\\ \ldots & \ldots & \ldots & \ldots  \\ \pi_{L+N-1}(-\omega_1) & \pi_{L+N-1}(-\omega_2) & \ldots &  \pi_{L+N-1}(-\omega_N)
 \end{array}\right|},
\end{split}
\end{equation}
which, when substituted back into the equation \eqref{efin} recovers exactly the formula \eqref{GeneralFormula3} without the use of bi-orthogonal structures.

\section{Proof of Proposition \ref{propkernel}}
\label{proof}
In this Appendix we provide a proof of Proposition \ref{propkernel}.
\begin{proof}
The left- and right-hand side of the formula \eqref{propkernelform} is denoted by $\kappa_L$ and $\kappa_L'$ respectively:
\begin{equation}
\begin{split}
\kappa_L & = \left ( \frac{b}{a} \right )^{2L} \int\limits_0^{1} d\tau~ J_0(b\sqrt{\tau} ) \left ( J_0(a\sqrt{\tau}) - \sum_{k=0}^{L-1}  (1-\tau)^k \frac{(a/2)^{k}}{k!} J_k(a) \right ), \\
\kappa_L' & = \left ( \frac{b}{a} \right )^{L} \int_0^1 d\tau J_L(a\sqrt{\tau}) J_L(b\sqrt{\tau}).
\end{split}
\end{equation}
We compute
\begin{equation*}
\int_0^{1} d\tau (1-\tau)^k J_0(b\sqrt{\tau}) = \frac{2k!}{b} \left ( \frac{2}{b} \right )^k J_{k+1} (b)
\end{equation*}
so that $\kappa_L$ reads
\begin{equation*}
\kappa_L = \left ( \frac{b}{a} \right )^{2L} \left [ \int\limits_0^{1} d\tau J_0 \left (a\sqrt{\tau} \right ) J_0 \left (b\sqrt{\tau} \right ) - \frac{2}{b} \sum_{k=0}^{L-1} \left ( \frac{a}{b} \right )^{k} J_k(a) J_{k+1}(b) \right ].
\end{equation*}
We prove the equality $\kappa_L = \kappa_L'$ by induction. For $L=0$, the second sum term vanishes and we trivially obtain an equality
\begin{align}
\kappa_0 = \int\limits_0^{1} d\tau J_0 \left (a\sqrt{\tau} \right ) J_0 \left (b\sqrt{\tau} \right ) = \kappa_0',
\end{align}
which also agrees with Theorem 3.1 given in  Ref. \cite{ForresterLiu}. We assume the validity of $\kappa_L = \left ( \frac{b}{a} \right )^{L} \int_0^1 d\tau J_L(a\sqrt{\tau}) J_L(b\sqrt{\tau})$ and consider $\kappa_{L+1}$:
\begin{align*}
 \kappa_{L+1} & = \frac{b^2}{a^2} \left ( \frac{b}{a} \right )^{2L} \left [ \int\limits_0^{1} ds J_0 \left (a \sqrt{s} \right ) J_0 \left (b \sqrt{s} \right ) - \frac{2}{b} \sum_{k=0}^{L-1} \left ( \frac{a}{b} \right )^{k} J_k(a) J_{k+1}(b) - \frac{2}{b} \left ( \frac{a}{b} \right )^{L} J_L(a) J_{L+1}(b)  \right ] = \\
& = \frac{b^2}{a^2} \kappa_L -  \frac{2}{b} \left ( \frac{b}{a} \right )^{L+2} J_L(a) J_{L+1}(b).
\end{align*}
On the $\kappa_L$ term we use an identity
\begin{equation}
\int_0^1 ds J_L(a\sqrt{s}) J_L(b\sqrt{s}) = 2 \frac{b J_{L-1}(b) J_L(a) - a J_{L-1}(a) J_L(b)}{a^2-b^2}
\end{equation}
and utilize recurrence relations $ J_{L-1}(x) = \frac{2L}{x} J_L(x) - J_{L+1}(x)$ to eliminate $J_{L-1}$ resulting in
\begin{align*}
\kappa_{L+1} = & 2 \left ( \frac{b}{a} \right )^{L+2} \left [\frac{b J_{L-1}(b) J_L(a) - a J_{L-1}(a) J_L(b)}{a^2-b^2} - \frac{b - a^2/b}{b^2-a^2} J_L(a) J_{L+1}(b) \right ] = \\
 = & \left ( \frac{b}{a} \right )^{L+1} \int_0^1 ds J_{L+1}(a\sqrt{s}) J_{L+1}(b\sqrt{s}) = \kappa_{L+1}'.
 \end{align*}
\end{proof}

\end{document}